\documentclass[11pt,letterpaper]{article}
\usepackage[T1]{fontenc}
\usepackage[utf8]{inputenc}
\usepackage{array,tabularx}
\usepackage{tikz}
\usetikzlibrary{matrix,positioning,decorations.pathreplacing,calc}
\usepackage{algorithm}
\usepackage{algpseudocode}

\usepackage[margin=1in]{geometry}
\usepackage{amsmath,amssymb,amsthm,mathtools}
\usepackage{microtype}
\usepackage{lmodern}
\usepackage{enumitem}
\usepackage{booktabs}
\usepackage{xurl}
\usepackage[colorlinks=true,linkcolor=blue,citecolor=blue,urlcolor=blue]{hyperref}

\usepackage[backend=biber,style=alphabetic]{biblatex}
\newtheorem{theorem}{Theorem}[section]
\newtheorem{proposition}[theorem]{Proposition}
\newtheorem{lemma}[theorem]{Lemma}
\newtheorem{corollary}[theorem]{Corollary}

\theoremstyle{remark}
\newtheorem{fact}[theorem]{Fact}

\theoremstyle{plain}

\newcommand{\E}{\mathbf{E}}
\newcommand{\Pp}{\mathbf{P}}
\newcommand{\SYT}{\operatorname{SYT}}
\newcommand{\LIS}{\operatorname{LIS}}
\newcommand{\Pl}{\operatorname{Pl}}
\newcommand{\wt}{\operatorname{wt}}
\title{Exact Sampling of Permutations with a Prescribed\\ Longest Increasing Subsequence Length}
\author{Peter Clifford\\
	{\normalsize Department of Statistics}\\
	{\normalsize University of Oxford}\\
	{\normalsize United Kingdom}\\
	\and
	Rapha\"el Clifford\\
	{\normalsize Department of Computer Science}\\
	{\normalsize University of Bristol}\\
	{\normalsize  United Kingdom}\\
}
\date{}

\begin{document}
	\hypersetup{pageanchor=false}

	\maketitle
	\thispagestyle{empty}

	\begin{abstract}
		We study exact uniform sampling of permutations of length $n$ whose longest
		increasing subsequence (LIS) has prescribed length $k$.  Our main result is,
		to the best of our knowledge, the first polynomial-time exact sampler for this
		problem, valid for every $1\le k\le n$.  Via the Robinson--Schensted
		correspondence the problem reduces to sampling a Young diagram with first row
		of length $k$ from the hook-length-squared (conditioned Plancherel) law.  We
		sample this shape one coordinate at a time, where each conditional weight,
		a sum over exponentially many completions, collapses, through the
		Cauchy--Binet formula, to a single coefficient of the determinant of a small
		polynomial matrix.   A direct implementation runs in expected
		$\tilde O(n^4k^5)$ time in the word-RAM model. Aggregating the scores and exploiting the Hankel structure
		of the evaluated matrices reduces this to $\tilde O(n^2k^4)$. In the linear regime $k\in\Theta(n)$ we give a direct rejection sampler running in
		expected $O(n\log\log n)$ time, matching, up to constants, the cost of
		computing the LIS of one permutation.  For the relaxed
		constraint $\LIS(\pi)\le k$,
		plain rejection sampling gives expected $O(n\log\log n)$ time for every $k\ge4\sqrt n$, and so
		the worst case over all $k$ improves to $\tilde O(n^4)$.
	\end{abstract}

	\clearpage
	\hypersetup{pageanchor=true}
	\setcounter{page}{1}

	\section{Introduction}

	The longest increasing subsequence is one of the most intensively studied
	of all permutation statistics.  Interest in its behaviour on a uniformly
	random permutation goes back to Ulam, and its analysis, through
	Hammersley's interacting particle process, the limit-shape theorems of
	Logan--Shepp and Vershik--Kerov, and the Baik--Deift--Johansson fluctuation
	theorem, helped to build the modern theory connecting random
	permutations to random matrices and determinantal processes
	\cite{Hammersley1972,LoganShepp1977,VershikKerov1977,Johansson1998,BDJ1999,
	BorodinOkounkovOlshanski2000,Okounkov2000}.  See the papers of Aldous and
	Diaconis~\cite{AldousDiaconis1995,AldousDiaconis1999}, Stanley's
	survey~\cite{Stanley2006}, and Romik's book~\cite{romik2015surprising}
	for a comprehensive introduction.

	This paper asks a basic algorithmic question about this much-studied
	statistic. Can one
	efficiently sample a permutation of length $n$ \emph{uniformly at random}, subject to
	the constraint that its LIS has a prescribed length $k$?  Formally, for a
	permutation $\pi\in S_n$, let $\LIS(\pi)$ denote the length of its longest
	increasing subsequence.  Given input parameters $(n, k)$, we want to
	sample exactly from the uniform distribution on
	\[
	\Omega_{n,k}:=\{\pi\in S_n:\LIS(\pi)=k\}.
	\]

	\paragraph{Simple approaches fail.}
	A uniform permutation has $\LIS$ equal to a specific $k$ with probability
	that can be as small as $1/n!$, so rejection sampling from $S_n$ is hopeless
	in general.  The one exception is a narrow window around the typical value
	$k\approx2\sqrt n$, where plain rejection sampling runs in polynomial time. Elsewhere $\Pp(\LIS=k)$ is superpolynomially small.  A potentially viable route is to attempt
	Markov chain Monte Carlo on $\Omega_{n,k}$, but no chain with a provable
	polynomial mixing time is known and our goal is exact, not approximate,
	sampling.  The Robinson--Schensted correspondence reduces the problem to
	sampling a Young diagram with first row of length $k$ from a
	hook-length-squared distribution, but the number of feasible shapes, the
	partitions of $n-k$ into parts of size at most $k$, is polynomial only
	when $k$ or $n-k$ is bounded. This reaches $\exp(\Theta(\sqrt n))$ at
	$k=\Theta(\sqrt n)$.  We give a sampler whose running time is polynomial in both $n$ and $k$.

	\paragraph{Summary of results.}
	Our main result is the first
	polynomial-time exact sampler for $\Omega_{n,k}$, valid for every
	$1\le k\le n$. A direct implementation runs in expected $\tilde O(n^4k^5)$
	time (Theorem~\ref{thm:intro-general-direct}), and a structured
	implementation of the \emph{same} sampler runs in expected $\tilde O(n^2k^4)$
	time (Theorem~\ref{thm:intro-general-fast}).  Since $k\le n$, this gives an $\tilde O(n^6)$-time exact sampler for every $k$.   In the linear
	regime $k\in\Theta(n)$ a much faster solution is possible. Rejection sampling on an expanded proposal space achieves
	expected $O(n\log\log n)$ time (Theorem~\ref{thm:intro-largek}). This
	matches, up to constants, the best known bound for merely \emph{computing}
	the LIS of a single permutation~\cite{CrochemorePorat2010}, so in this regime
	sampling a uniform element of $\Omega_{n,k}$ is as cheap as testing
	membership in it.  The same sampler degrades gracefully, remaining
	polynomial-time for all $k\ge\eta n/\log n$ (Corollary~\ref{cor:sublinear-k}). Table~\ref{tab:results} summarises these bounds, and
	Section~\ref{sec:results} states the results formally.  The general sampler extends, with the same bounds, to the relaxed constraint
	$\LIS(\pi)\le k$ (Corollary~\ref{cor:atmost}). For this relaxed problem,
	plain rejection sampling from $S_n$ already gives expected $O(n \log \log n)$ time for every $k\ge4\sqrt n$, and so
	the worst case over all $k$ improves to $\tilde O(n^4)$.

	\paragraph{Techniques.}
	The main sampler rests on the Robinson--Schensted correspondence, which
	reduces uniform sampling from $\Omega_{n,k}$ to sampling a Young diagram
	with first row of length $k$ from Plancherel measure conditioned on
	$\lambda_1=k$.  We conjugate the shape and apply a coordinate shift,
	re-expressing this law as a fixed-sum distribution on strict partitions
	whose weight is a squared Vandermonde factor times independent factorial
	weights.  The sampler then exposes the shifted coordinates one at a time.
	At each partial state, it chooses the next coordinate with probability
	proportional to the total weight of all valid completions.
	
	The completion sums range over exponentially many suffixes, but they are
	not enumerated.  By Cauchy--Binet, each such sum is a coefficient of the
	determinant of a polynomial moment matrix.  At each scalar evaluation this
	moment matrix is Hankel, so the needed determinant values can be computed
	from $O(k)$ moments in near-linear time.  A second saving comes from
	the observation that the total score of all candidates up to a cutoff is itself a single determinant coefficient, so each coordinate can be chosen with $O(\log n)$ determinant evaluations instead of one per candidate.  The
	resulting completion scores
	are exact integers. They are computed modulo word-size primes and
	reconstructed by Chinese remaindering, so no rounding enters the
	procedure and the output law is exactly uniform.
	
	Conceptually, this is an extension-counting sampler in the sense of
	Jerrum, Valiant and Vazirani~\cite{JerrumValiantVazirani1986}. Exact
	counts of completions from each partial state determine the conditional
	distribution of the next step.  The new ingredient here is the
	determinant-based oracle that makes these completion counts efficient.

	For the linear regime where $k \in \Theta(n)$ a lighter idea suffices.  We sample from an expanded
	space of pairs $(\pi,I)$ in which $I$ is a distinguished $k$-term increasing
	subsequence of the permutation $\pi$, and accept only when $I$ is the \emph{leftmost} LIS of
	$\pi$. This is a canonical witness computable in $O(n\log\log n)$ time.  Every
	permutation with $\LIS=k$ then survives in exactly one accepted pair, so
	acceptance yields exact uniformity, and a Plancherel-measure tail bound shows
	that the acceptance probability is bounded below by a constant.  This same rejection sampling idea gives a polynomial-time sampling algorithm as long as $k \in \Omega(n/\log n)$.

	\paragraph{Related work.}
	On static inputs, the computation of the LIS admits an
	$O(n\log n)$ comparison-based algorithm~\cite{Fredman1975}. For
	permutations of $[n]:=\{1,\dots,n\}$, faster word-RAM algorithms are known. In
	particular, the standard predecessor-based implementation gives
	$O(n\log\log n)$ time, and Crochemore and Porat give an
	$O(n\log\log \ell)$ bound in terms of the LIS length $\ell$
	\cite{CrochemorePorat2010}. Later work developed streaming and
	sublinear-time approximation algorithms for $\LIS$ and distance to
	monotonicity
	\cite{GopalanJayramKrauthgamerKumar2007,ErgunJowhari2008,SaksSeshadhri2017,AndoniNosatzkiSinhaStein2022}.
	On dynamic inputs, variants of the problem have been studied in
	sliding-window and fully dynamic models
	\cite{AlbertGolynskiHamelLopezOrtizRaoSafari2004,ChenChuPinsker2013,MitzenmacherSeddighin2020,KociumakaSeddighin2021}.
	These results concern computing or approximating the LIS of a given
	input. Our problem is orthogonal: we seek to generate a random
	permutation conditioned on having a prescribed LIS length.
	
	The closest exact-sampling results we are aware of are also based on Robinson--Schensted-type correspondences and Young tableaux, but they sample different ensembles. Thomas and Yong use Hecke insertion to give an exact polynomial-time sampler for a Plancherel-type measure, while Betea, Boutillier, Bouttier, Chapuy, Corteel, and Vuleti\'c give perfect samplers for Schur processes \cite{ThomasYong2011,BeteaBoutillierBouttierChapuyCorteelVuletic2018}. These samplers produce Plancherel- or Schur-type random shapes in which the total size, and in particular the first row length, are not fixed in advance. Uniform sampling from $\Omega_{n,k}$, however, requires the two hard constraints $|\lambda|=n$ and $\lambda_1=k$. One could try to recover these constraints by rejection, but the probability of simultaneously obtaining the prescribed size and first row length is too small in the regimes considered here. Our sampler instead enforces the condition $\lambda_1=k$ by construction. To the best of our knowledge, exact uniform sampling from $\Omega_{n,k}$ has not previously been studied.
	
	By contrast, \emph{counting} $\Omega_{n,k}$ is polynomial time by standard means.  Gessel's identity expresses the generating function of the counts $|\Omega_{n,\le k}|$ as a $k\times k$ Toeplitz determinant of modified Bessel functions~\cite{Gessel1990}, and extracting the coefficient of $x^{2n}$ from this determinant, computed over a truncated power-series ring with exact arithmetic, gives $|\Omega_{n,k}|=|\Omega_{n,\le k}|-|\Omega_{n,\le k-1}|$ in polynomial time.  Counting, however, returns only a single aggregate.  Exact sequential sampling requires, at every partial state, the total weight of all valid completions of that state, a conditional refinement that the aggregate identity does not provide and that is the role of the determinant oracle of Section~\ref{sec:oracle}.

	\section{Our Results}\label{sec:results}

	Table~\ref{tab:results} summarises the running times established in this
	paper.

	\begin{table}[h]
		\centering
		\small
		\setlength{\tabcolsep}{3pt}
		\begin{tabularx}{\textwidth}{@{}>{\raggedright\arraybackslash}p{0.25\textwidth}>{\raggedright\arraybackslash}p{0.29\textwidth}X@{}}
			\toprule
			Regime & Expected time & Technique \\
			\midrule
			$1\le k\le n$ & $\tilde O(n^4k^5)$ & determinant oracle, direct evaluation \\
			$1\le k\le n$ & $\tilde O(n^2k^4)$ & cumulative scores, Hankel moments, superfast determinants \\
			$k\in\Theta(n)$ & $O(n\log\log n)$ & rejection on an expanded proposal space \\
			$k\ge \eta n/\log n$ & $O(n^{1+2/\eta}\log\log n)$ & expanded rejection (Corollary~\ref{cor:sublinear-k}) \\
			$\LIS(\pi)\le k$, all $1\le k\le n$ & $\tilde O(n^2k^4)$; $O(n\log\log n)$ for $k\ge4\sqrt n$ & Corollary~\ref{cor:atmost} \\
			\bottomrule
		\end{tabularx}
		\caption{Samplers are exact and bounds are expected times in the word-RAM model.}
		\label{tab:results}
	\end{table}

	Our main contribution is a polynomial-time exact sampler for every
	$1\le k\le n$.  By the Robinson--Schensted correspondence (recalled in
	Section~\ref{sec:general}) it reduces to sampling a partition $\lambda\vdash n$
	with $\lambda_1=k$ from $\Pp_{n,k}(\lambda)\propto(f^\lambda)^2$, and then
	sampling a uniform pair of tableaux of that shape.  The shape sampler is the
	algorithmic core of the paper. The running times of its two implementations are stated next.

	\begin{theorem}[General exact sampler, direct implementation]
	\label{thm:intro-general-direct}
	For every $1\le k\le n$, there is an exact sampler for the uniform
	distribution on $\Omega_{n,k}$. In the word-RAM model, the direct
	implementation has expected running time
	\[
	\tilde O\!\bigl(n^4k^5\bigr).
	\]
	\end{theorem}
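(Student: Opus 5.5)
The proof has four parts: the Robinson--Schensted reduction, a change of coordinates, a sequential sampler driven by a determinant oracle, and the cost accounting.

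\textbf{Reduction.} By Robinson--Schensted, a uniform $\pi\in\Omega_{n,k}$ corresponds to a pair $(P,Q)$ of standard tableaux of a common shape $\lambda\vdash n$ with $\lambda_1=k$. So it suffices to do two things. First, sample $\lambda$ from $\Pp_{n,k}(\lambda)\propto(f^\lambda)^2$. Second, sample $P$ and $Q$ independently and uniformly from $\SYT(\lambda)$, for example with the Greene--Nijenhuis--Wilf hook walk, and apply inverse RS insertion in polynomial time. The shape sampler is the real work.

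\textbf{Coordinates.} I conjugate, $\mu=\lambda'$, so $\mu$ has exactly $k$ parts, $\mu_1\le n$ and $|\mu|=n$. I then shift to strict coordinates $\ell_i=\mu_i+k-i$. The hook-length formula gives
\[
f^\mu=n!\,\frac{\prod_{i<j}(\ell_i-\ell_j)}{\prod_i\ell_i!},
\]
so
\[
(f^\lambda)^2\propto \prod_{i<j}(\ell_i-\ell_j)^2\prod_i\frac{1}{(\ell_i!)^2},
\]
on strict $k$-tuples with $\ell_k\ge0$ and fixed sum $n+\binom k2$. The constraint $\lambda_1=k$ becomes $\ell_k=0$, which I fix at the start.

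The sampler chooses $\ell_1>\ell_2>\dots$ in turn. Each candidate is drawn with probability proportional to the total weight of its completions. Writing $w(x)=1/(x!)^2$, the completion sum for a prefix $(\ell_1,\dots,\ell_j)$ is
\[
\prod_{i<i'\le j}(\ell_i-\ell_{i'})^2\prod_{i\le j}w(\ell_i)\cdot\frac{1}{(k-j)!}\sum_{x_{j+1},\dots,x_k}\prod_{i\le j<m}(\ell_i-x_m)^2\,\Delta(x)^2\prod_m w(x_m)\,t^{\sum x_m}.
\]
The inner sum runs over distinct $x_m<\ell_j$, and we extract the coefficient of $t^{\text{remaining sum}}$. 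Dividing by $(k-j)!$ lets the sum run over ordered tuples, and tuples with repeated entries contribute zero because of $\Delta^2$. Cauchy--Binet (Andréief) then turns the sum into $(k-j)!\det M(t)$, where $M$ is a $(k-j)\times(k-j)$ matrix with entries
\[
M_{ab}(t)=\sum_{x<\ell_j}x^{a+b}\,w(x)\,q(x)\,t^x,\qquad q(x)=\prod_{i\le j}(\ell_i-x)^2 .
\]
So each completion weight is one coefficient of a polynomial determinant. To avoid rationals I clear denominators by multiplying $w$ by $(n!)^2$, which makes every weight an integer with $O(n\log n\cdot k)$ bits. Probabilities are then exact ratios of integers, and an exact sampler results, for example by drawing a uniform integer below the total. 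This step proves the exactness claim.

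\textbf{Cost.} I evaluate $\det M(t)$ at $O(n+k^2)$ points modulo word-size primes. The relevant degree in $t$ is at most about $n+\binom k2$, but only coefficients up to the remaining sum are needed, so I truncate at degree $O(nk)$; without that truncation the bounds would need a further check. Interpolation recovers the coefficient, and CRT over $\tilde O(nk)$ primes recovers the integer. Per candidate, the moments cost $O(n\cdot k)$ per evaluation point, and there are $O(nk)$ points, for $\tilde O(n^2k^2)$ per prime. Each determinant costs $O(k^3)$ per point. Multiplying by $\tilde O(nk)$ primes, $O(n)$ candidates per coordinate, and $k$ coordinates gives roughly $\tilde O(n\cdot k\cdot nk\cdot nk\,(n+k^2))$. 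This is of the order $n^4k^5$ once the evaluation-point count and the determinant cost are balanced.

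\textbf{Main obstacle.} The hard step is this accounting: pinning down the truncation degree, the bit size, and thus the number of primes, so that the product lands exactly at $\tilde O(n^4k^5)$. Expectation enters only through the exact sampling of the final integer-ratio draws and the hook walks, each of which has $O(1)$ expected retries per bit. I would fix the degree bound first, since the evaluation-point count and the determinant cost both depend on it.
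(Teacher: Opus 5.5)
Your route is the paper's: Robinson--Schensted reduction, conjugate-and-shift to $\Delta(\ell)^2\prod_i\ell_i!^{-2}$, coordinate-wise draws weighted by completion mass, the Cauchy--Binet/Andr\'eief moment determinant as oracle, clearing by $(n!)^2$, and evaluation--interpolation modulo word-size primes plus CRT. However, you encode the constraint wrongly, so as written the sampler targets the wrong set. Since $\mu=\lambda'$ has exactly $k$ positive parts, $\ell_k=\mu_k\ge 1$. The condition $\ell_k\ge 0$ describes the padded shapes with $\lambda_1\le k$. Fixing $\ell_k=0$ forces $\mu_k=0$, i.e.\ $\lambda_1\le k-1$. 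Indeed, with $\ell_k=0$ the weight becomes $\Delta(\ell_1,\dots,\ell_{k-1})^2\prod_{i<k}((\ell_i-1)!)^{-2}$, which is exactly the $(k-1)$-part weight in the coordinates $\ell_i-1$. So your sampler would be exact for $\Omega_{n,\le k-1}$, not $\Omega_{n,k}$. The fix is to fix nothing in advance and impose $\ell_k\ge1$ by restricting all coordinates to $\{1,\dots,n\}$, i.e.\ summing over $1\le x<\ell_j$ in $M_{ab}(t)$. This is the paper's ground set $E_t=\{1,\dots,t-1\}$.

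The cost paragraph also needs repair, though the bound survives. The degree of $\det M(t)$ can be as large as the sum of the $L:=k-j$ largest admissible values, about $L\ell_j=O(nk)$. This can far exceed $n+\binom k2$, e.g.\ when $k\approx\sqrt n$. Needing only the coefficient at the remaining sum does not let evaluation--interpolation use fewer points. You need $Ln+1$ points, with primes exceeding that so the points are distinct, as in the paper. Now take your own per-point costs: $O(nL)$ for the $2L-1$ Hankel moments plus $O(L^3)$ for elimination, and $O(n^2L^2)$ per candidate for naive interpolation. Then one candidate costs $O(n^2L^2+nL^4)$ per prime. Over $n$ candidates and $\tilde O(nk)$ primes, a stage costs $\tilde O(n^4kL^2+n^3kL^4)$. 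Summing over $L<k$ gives $\tilde O(n^4k^4+n^3k^6)\subseteq\tilde O(n^4k^5)$, simply because $k\le n$. Your displayed product has the factor $(n+k^2)$ where the per-point cost is $k(n+k^2)$, and no balancing is involved. The paper forms all $L^2$ entries at $O(nL^2)$ per point, which gives $\tilde O(n^4kL^3)$ per stage and exactly $\tilde O(n^4k^5)$. Your moment-only construction is marginally cheaper.
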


	\begin{theorem}[General exact sampler, fast implementation]
		\label{thm:intro-general-fast}
		For every $1\le k\le n$, the same sampler can be implemented in
		\[
		\tilde O\!\bigl(n^2k^4\bigr)
		\]
		expected time, by drawing each coordinate through cumulative completion
		scores, evaluated via batched multipoint evaluation of Hankel moment
		sequences and a superfast Hankel determinant
		algorithm over finite fields.
	\end{theorem}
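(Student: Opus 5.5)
The plan is to reuse the sequential sampler behind Theorem~\ref{thm:intro-general-direct} unchanged, so that exactness carries over, and to reduce only the cost of each coordinate draw. By Robinson--Schensted we sample $\lambda\vdash n$ with $\lambda_1=k$ from $\Pp_{n,k}(\lambda)\propto(f^\lambda)^2$, and then a uniform pair of standard tableaux of shape $\lambda$. The tableau pair can be drawn with the hook walk of Greene--Nijenhuis--Wilf in $\tilde O(n^2)$ expected time, which is dominated by the main bound. After conjugating and shifting, the shape becomes a strict sequence $\ell_1>\dots>\ell_k$ with fixed sum, whose weight is $\Delta(\ell)^2\prod_i w(\ell_i)$ with factorial weights $w$. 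We expose the $\ell_i$ one at a time, each with probability proportional to its completion score. The direct implementation scores each of the $O(n)$ candidates separately, evaluating a polynomial $k\times k$ determinant coefficient each time. The fast implementation will replace this step.

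\textbf{Step 1 (cumulative scores).} Fix a partial state $(\ell_1,\dots,\ell_{i-1})$. Let $C(t)$ be the total weight of all completions whose next coordinate satisfies $\ell_i\le t$. I will show that $C(t)$ is itself one coefficient of a Cauchy--Binet determinant. The reason is that Cauchy--Binet sums over increasing tuples drawn from an arbitrary finite ground set of points. Restricting the ground set for the $i$-th and later coordinates to $\{\,\le t\,\}$ still leaves a determinant $\det\bigl(\sum_{x\le t}x^{a+b}w(x)z^x\bigr)_{a,b}$, multiplied by the frozen Vandermonde factors against $\ell_1,\dots,\ell_{i-1}$. These frozen factors can be absorbed into the weight, $w'(x)=w(x)\prod_{j<i}(\ell_j-x)^2$. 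The sum constraint is tracked through the auxiliary variable $z$, and we need the coefficient of $z^{m}$ for the remaining sum $m$. We draw $U$ uniform in $[0,C(\infty))$ and binary search for the least $t$ with $C(t)>U$. This costs $O(\log n)$ determinant-coefficient evaluations per coordinate, $O(k\log n)$ in total.

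\textbf{Step 2 (evaluating one coefficient).} The $z$-degree is at most $n$, so I extract the coefficient by evaluating at $O(n)$ points $z_0$ and interpolating. At a scalar $z_0$ the matrix entries $M_{ab}=\mu_{a+b}$ depend only on $a+b$, so the matrix is Hankel and is determined by $2k-1$ moments $\mu_s=\sum_x x^s w'(x)z_0^x$. To get all moments at all $O(n)$ points, I write $\sum_s\mu_s y^s=\sum_x w'(x)z_0^x/(1-xy)$ truncated at degree $2k$. This is a rational-function sum over $O(n)$ terms, computable for all points in a batch by multipoint evaluation and fast power-series arithmetic, in $\tilde O(nk)$ per point, hence $\tilde O(n^2k)$ per coefficient. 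A superfast Hankel determinant algorithm, via the half-GCD or Berlekamp--Massey approach, then gives each determinant in $\tilde O(k)$ field operations.

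\textbf{Step 3 (exact arithmetic).} All scores are integers of bit length $\tilde O(nk)$, since they are bounded by roughly $(n!)^2$ times poly factors. We therefore work modulo $\tilde O(nk)$ word-size primes and reconstruct by CRT. Comparing $C(t)$ with $U$ needs the exact integer, which multiplies the cost by $\tilde O(nk)$. Per coefficient the cost is $\tilde O(n^2k)\cdot\tilde O(nk)$ only if done naively. To reach the target I will instead count as follows: $O(k\log n)$ evaluations, each over $\tilde O(nk)$ primes, each costing $\tilde O(nk)$ for moments and determinants when the $z$-interpolation is organised as a single power series in $z$ rather than $n$ separate points. This gives $\tilde O(k\cdot nk\cdot nk^2)=\tilde O(n^2k^4)$. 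Randomness enters only through the choice of primes and $U$, which gives an expected bound.

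The main obstacle is this accounting in Step 3: making the moment and interpolation cost per prime truly $\tilde O(nk^2)$ rather than $\tilde O(n^2k)$. I would first try treating $z$ symbolically in the truncated ring $\mathbb F_p[z]/(z^{n+1})$, computing the Hankel determinant with a division-free or fraction-free superfast method. If that fails, I would fall back on points and amortise the moment computation across the binary search, since successive $C(t)$ share prefix sums of moments.
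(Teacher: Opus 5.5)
Your decomposition matches the paper's. You express the cumulative score $C_p(T)$ as one Cauchy--Binet coefficient over $\{1,\dots,T\}$, with the prefix factors absorbed into the weight. You then use a binary search with $O(\log n)$ queries per stage, Hankel structure at scalar points, a superfast Hankel determinant, and CRT over $\tilde O(nk)$ word-size primes. You also locate the crux correctly: a per-prime, per-query cost of $\tilde O(nk^2)$.

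That cost is exactly what you do not establish, and Step~2 rests on a false degree count. The polynomial $\det M_{L+1,\{1,\dots,T\},G_p}(z)$ has degree up to $(L+1)T-\binom{L+1}{2}$. This is $\Theta(nk)$ at the first stage, not $n$. Nor can you truncate at $z^{n+1}$, because the exponent you need, $R=N-\sum_i p_i$, equals $n+\binom{k}{2}$ at the first stage.

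So interpolation needs $d=\Theta((L+1)T)$ points. Any scheme that computes the $2L+1$ moments separately at each point pays at least $T$ per point (you state $\tilde O(nk)$), i.e.\ $\Omega((L+1)T^2)$ per prime per query. With $T\le n$, $\tilde O((L+1)n)$ primes and $O(\log n)$ queries per stage, this gives at best $\tilde O(n^3k^3)$, which is weaker than the claim whenever $k=o(n)$.

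Your Step~3 count is also internally inconsistent: $O(k\log n)\cdot\tilde O(nk)\cdot\tilde O(nk)$ is $\tilde O(n^2k^3)$, not $\tilde O(k\cdot nk\cdot nk^2)$. Neither fallback is substantiated. $\mathbb F_p[z]/(z^{n+1})$ is not a field, so a half-GCD Hankel algorithm does not run over it, and a superfast division-free Hankel determinant over that ring is not something you can simply cite. Amortising across the binary search is left unspecified.

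The missing idea is to batch across evaluation points rather than across moment indices.
\begin{itemize}
\item For each $0\le\ell\le 2L$, the moment $h_\ell(q)=\sum_{a\le T}a^\ell G_p(a)q^a$ is a single polynomial of degree $T$. Fast multipoint evaluation at all $d$ points $\xi_0,\dots,\xi_{d-1}$ costs $\tilde O(T+d)=\tilde O((L+1)T)$. Over the $2L+1$ moments this is $\tilde O((L+1)^2T)$, only $\tilde O(L+1)$ amortised per point.
\item At each $\xi_u$ the evaluated matrix is the Hankel matrix $(h_{i+j}(\xi_u))_{0\le i,j\le L}$. Its determinant costs $\tilde O(L+1)$ by a superfast algorithm that does not need nonvanishing leading principal minors. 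This matters because the evaluated matrices can be singular, e.g.\ at $\xi=0$.
\item Interpolating the $d$ values costs $\tilde O(d)$.
\end{itemize}
Hence one query costs $\tilde O((L+1)^2T)\subseteq\tilde O(nk^2)$ field operations per prime using ordinary scalar points, with no symbolic $z$ and no amortisation across queries. Multiplying by $\tilde O((L+1)n)$ primes and $O(\log n)$ queries per stage, and summing $\sum_{L<k}n^2(L+1)^3$, yields the claimed $\tilde O(n^2k^4)$.
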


	The sampler underlying both theorems is developed in
	Section~\ref{sec:general}. Correctness is proved in
	Theorem~\ref{thm:shape-correct} and Lemma~\ref{lem:shape-to-perm} and the
	running-time analyses are given in Sections~\ref{sec:direct-time} and~\ref{sec:complexity}.  Since
	$k\le n$, Theorem~\ref{thm:intro-general-fast} gives an $\tilde O(n^6)$ exact
	sampler for every $k$.

	For the linear regime $k \in \Theta(n)$ we give a much simpler and faster sampler.

	\begin{theorem}[Large-$k$ direct sampler]\label{thm:intro-largek}
	Suppose $k\in\Theta(n)$.  There is an exact sampler for the
	uniform distribution on $\Omega_{n,k}$ whose expected running time in the
	word-RAM model is $O(n\log\log n)$, where the implicit constant may depend on
	the constants in $k\in\Theta(n)$.
	\end{theorem}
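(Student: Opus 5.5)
We prove Theorem~\ref{thm:intro-largek} with the expanded-space rejection sampler described in the introduction. The proposal space is the set of pairs $(\pi,I)$, where $I$ is a $k$-term increasing subsequence of $\pi$. A pair is generated as follows.
\begin{enumerate}[label=(\roman*)]
\item Choose a $k$-subset $P\subseteq[n]$ of positions uniformly.
\item Choose a $k$-subset $V\subseteq[n]$ of values uniformly.
\item Place the values of $V$ in increasing order on the positions of $P$.
\item Fill the remaining $n-k$ positions with a uniformly random bijection onto $[n]\setminus V$.
\end{enumerate}
Each such pair $(\pi,I)$ then has probability exactly $1/\bigl(\binom nk^2(n-k)!\bigr)$, so the proposal is uniform on pairs. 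All four steps take $O(n)$ time using Fisher--Yates shuffles and standard subset sampling.

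We then compute $\LIS(\pi)$ and its leftmost witness $I^*(\pi)$. We accept iff $\LIS(\pi)=k$ and $I=I^*(\pi)$. Here ``leftmost'' means any canonical choice, for instance the lexicographically smallest position vector among maximum increasing subsequences. We must check that it is computable by the predecessor-structure patience algorithm in $O(n\log\log n)$ time. The plan is to run the van Emde Boas patience sorting, store back-pointers, and obtain a canonical witness by a fixed tie-breaking rule. Any deterministic rule works for exactness, since all we need is a function $\pi\mapsto I^*(\pi)$ defined on $\Omega_{n,k}$. Each $\pi\in\Omega_{n,k}$ has exactly one accepted pair, namely $(\pi,I^*(\pi))$, so conditional on acceptance the output is uniform on $\Omega_{n,k}$. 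The acceptance probability is
\[
p=\frac{|\Omega_{n,k}|}{\binom nk^2(n-k)!}.
\]

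The main obstacle is the lower bound $p\ge c>0$ when $k\in\Theta(n)$. Through Robinson--Schensted,
\[
|\Omega_{n,k}|=\sum_{\lambda\vdash n,\ \lambda_1=k}(f^\lambda)^2 .
\]
We first restrict to shapes $\lambda=(k,\mu)$ with $\mu\vdash n-k$ and $\mu_1\le k$. The hook-length formula gives
\[
f^{(k,\mu)}=\binom nk f^\mu\prod_{j\le\mu_1}\frac{k-j+1-\text{(col.\ adj.)}}{k-j+1+\mu'_j}.
\]
Equivalently, $f^{(k,\mu)}/\bigl(\binom nk f^\mu\bigr)$ is the product of the first-row hook corrections. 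Combined with $\sum_\mu(f^\mu)^2=(n-k)!$, the target becomes
\[
\E_{\mu\sim \Pl_{n-k}}\Bigl[\prod_{j=1}^{\mu_1}\Bigl(1-\frac{\mu'_j}{k-j+1+\mu'_j}\Bigr)^2\mathbf 1\{\mu_1\le k\}\Bigr]\ge c .
\]

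We bound this using the Plancherel tail: with probability at least $1/2$, $\mu_1,\mu'_1\le 3\sqrt{n-k}$. The first-row factors then involve only $j\le \mu_1=O(\sqrt n)$, and each has $\mu'_j/(k-j)=O(\sqrt n/k)$. The product is therefore at least
\[
\exp\Bigl(-O\Bigl(\sum_j \mu'_j/k\Bigr)\Bigr)=\exp\bigl(-O((n-k)/k)\bigr),
\]
a constant when $k\in\Theta(n)$. This crude bound with $\sum_j\mu'_j\le n-k$ is already enough. Since it degrades only like $\exp(-O(n/k))$, the same argument would give the $k\ge\eta n/\log n$ corollary.

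The expected cost is then $O(n\log\log n)/p=O(n\log\log n)$ by a geometric number of trials.
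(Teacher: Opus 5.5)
Your proposal is correct and is essentially the paper's proof. It uses the same uniform proposal on pairs $(\pi,I)$, acceptance via a unique canonical witness, the identity $p=\E_{\Pl_{n-k}}[a_k(\mu)^2]$ from $f^{(k,\mu)}=\binom nk f^\mu a_k(\mu)$, and a constant-probability event $\mu_1=O(\sqrt{n-k})$ on which $-\log a_k(\mu)\le\sum_j\mu'_j/(k-j+1)=O((n-k)/k)$. Your deviations are harmless: the back-pointer LIS is a valid canonical witness in place of the paper's $r_i$-based leftmost reconstruction, since exactness only needs a deterministic witness map. But delete the stray ``(col.\ adj.)'' term so the numerator is just $k-j+1$, and note that your asserted tail event still needs proof, via the first-moment bound $\Pp(\mu_1\ge r)\le(e^2m/r^2)^r$ plus conjugation symmetry; the $\mu'_1$ condition is unnecessary, because $\log(1+x)\le x$ already controls every factor.
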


	This sampler is described in Section~\ref{sec:direct}, with its full analysis
	in Section~\ref{sec:linear}.  The same sampler stays polynomial well below the
	linear regime. For every fixed $\eta>0$, if $k\ge\eta n/\log n$ its expected
	running time is $O(n^{1+2/\eta}\log\log n)$
	(Corollary~\ref{cor:sublinear-k}).  Finally, all of the above extend to the
	relaxed constraint $\LIS(\pi)\le k$.

	\begin{corollary}\label{cor:atmost}
	For every $1\le k\le n$ there is an exact sampler for the uniform
	distribution on $\Omega_{n,\le k}:=\{\pi\in S_n:\LIS(\pi)\le k\}$ with
	expected running time $\tilde O(n^2k^4)$ in the word-RAM model
	($\tilde O(n^4k^5)$ for the direct implementation).  Moreover, for
	$k\ge 4\sqrt n$, rejection sampling directly from $S_n$ is an exact sampler
	for $\Omega_{n,\le k}$ with expected running time $O(n\log\log n)$.
	Consequently the uniform distribution on $\Omega_{n,\le k}$ can be sampled
	exactly in expected time $\tilde O(n^4)$ for every $1\le k\le n$.
	\end{corollary}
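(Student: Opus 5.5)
The plan has three parts: adapt the exact sampler to the relaxed constraint, show that plain rejection works once $k\ge4\sqrt n$, and then combine the two regimes.

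\emph{Part 1: the general sampler for $\LIS(\pi)\le k$.} By the Robinson--Schensted correspondence, a uniform element of $\Omega_{n,\le k}$ is obtained by first drawing a shape $\lambda\vdash n$ with $\lambda_1\le k$ with probability proportional to $(f^\lambda)^2$, and then a uniform pair of standard Young tableaux of shape $\lambda$ (Lemma~\ref{lem:shape-to-perm}). The law on shapes is a mixture: first draw $j\in\{1,\dots,k\}$ with probability proportional to $|\Omega_{n,j}|$, then draw $\lambda$ from $\Pp_{n,j}$, i.e.\ run the sampler of Theorems~\ref{thm:intro-general-direct}/\ref{thm:intro-general-fast} with parameter $j$. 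The weights $|\Omega_{n,j}|$ are exact integers and can be obtained in two ways:
\begin{itemize}
\item from Gessel's determinant, as in the counting remark;
\item more cheaply, as the total completion score of the empty prefix of the sampler with parameter $j$, since that score is a single determinant coefficient.
\end{itemize}
This costs at most $k$ top-level oracle calls, each no more expensive than one coordinate step of the sampler, so the bounds $\tilde O(n^2k^4)$ and $\tilde O(n^4k^5)$ are unchanged. A cleaner alternative, which I would try first, is to run the same coordinatewise sampler with the constraint $\lambda_1=k$ relaxed to $\lambda_1\le k$. After conjugation this says the partition has at most $k$ parts, which corresponds to allowing fewer shifted coordinates (or padding with zeros). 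The Cauchy--Binet completion sums then cover the relaxed constraint with the same matrix sizes.

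\emph{Part 2: plain rejection for $k\ge4\sqrt n$.} Draw a uniform $\pi\in S_n$ ($O(n)$ time), compute $\LIS(\pi)$ in $O(n\log\log n)$ time~\cite{CrochemorePorat2010}, and accept if $\LIS(\pi)\le k$. Exactness is immediate. For the running time we need $\Pp(\LIS(\pi)>k)\le c<1$ uniformly in $n$. Use the first-moment bound: the expected number of increasing subsequences of length $m$ is $\binom nm/m!\le (en/m)^{2m}\cdot\frac{1}{\cdots}$; more precisely it is at most $(e^2n/m^2)^m$. At $m=\lceil4\sqrt n\rceil$ this is at most $(e^2/16)^m$, which is below $1/2$ for all $n\ge1$ since $e^2/16<1/2$ and $m\ge4$. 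Hence the acceptance probability is at least $1/2$, the expected number of trials is at most $2$, and the expected time is $O(n\log\log n)$. The only care needed is the constant: checking $e^2/16\approx0.46$ and handling the ceiling.

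\emph{Part 3: combining the regimes.} For $k<4\sqrt n$, use Part 1: $n^2k^4\le n^2(16n^2)=O(n^4)$ up to logarithmic factors. For $k\ge4\sqrt n$, use Part 2. The overall worst case is therefore $\tilde O(n^4)$.

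The main obstacle is Part 1: confirming that the relaxed constraint fits into the oracle without changing the matrix dimensions. The mixture-over-$j$ fallback avoids this, provided the per-$j$ normalizing totals are available at the cost stated above.
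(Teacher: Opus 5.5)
Your Parts 2 and 3 follow the paper's argument. The paper takes $r=\lfloor4\sqrt n\rfloor+1$ where you take $\lceil4\sqrt n\rceil$, and both are at most $k+1$ for integer $k\ge4\sqrt n$. Two small slips, neither affecting the conclusion. The expression ``$(en/m)^{2m}\cdot\frac{1}{\cdots}$'' should simply read $\binom nm/m!\le(en/m)^m(e/m)^m=(e^2n/m^2)^m$. Also, $k<4\sqrt n$ gives $k^4<256n^2$, not $16n^2$.

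For Part 1, the padding route you say you would try first is exactly what the paper does. It keeps $k$ coordinates, so the matrix sizes are unchanged, and pads $\rho=\lambda'$ with zero parts so that coordinates range over $\{0,\dots,n+k-1\}$. It replaces the clearing factor $n!$ by $(n+k-1)!$ and verifies the padded hook identity $f^\lambda=n!\,\Delta(x)/\prod_ix_i!$ (Lemma~\ref{lem:padded-hook}). ``Allowing fewer shifted coordinates'' would not fit, since the shift $k-i$ is tied to having exactly $k$ coordinates.

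The route you actually carry out, a mixture over $j=\LIS(\pi)\in\{1,\dots,k\}$, is genuinely different and also valid. It buys a black-box reduction: Theorems~\ref{thm:intro-general-direct} and~\ref{thm:intro-general-fast} are used verbatim, with no new hook-length identity or degree and prime bookkeeping. The $k$ extra normalising totals cost $\sum_{j\le k}\tilde O(n^2j^3)=\tilde O(n^2k^4)$ in the fast version and stay within $\tilde O(n^4k^5)$ in the direct one, so the stated bounds survive. The paper's padding instead gives a single sampler with one normalisation and no outer draw.

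One correction is required in your version. The total completion score of the empty prefix of the $j$-sampler is $C_\varnothing(n)=(n!)^{2(j-1)}|\Omega_{n,j}|$, not $|\Omega_{n,j}|$. This combines a factor $(n!)^{2j}$ from clearing denominators (Lemma~\ref{lem:score-mass}) with $(n!)^{-2}$ from Lemma~\ref{lem:x-hook}. Since the scale depends on $j$, you must divide it out exactly before using the totals as mixture weights. Otherwise the draw of $j$ is biased towards large $j$ and the output is not uniform.
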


	Corollary~\ref{cor:atmost} is proved in Section~\ref{sec:atmost}.

	\paragraph{Model of computation.}
	We work in the randomised word-RAM model with word size $w=\Theta(\log n)$. A uniform
	random word, $\Theta(\log n)$ unbiased random bits, is produced in $O(1)$
	time. We write $\tilde O(\cdot)$ for $O(\cdot)$ up to factors polylogarithmic in
	$n$. Weighted choices with integer weights are made exactly. Given nonnegative integer weights $a_1,\dots,a_M$ of sum $W$, draw a
	uniform integer in $\{1,\dots,W\}$ by rejection sampling from the next power of two and
	locate it among the prefix sums by binary search, at expected cost $O(\log W)$
	random bits and $O(\log M)$ comparisons of $O(\log W)$-bit integers.

	\section{A polynomial-time algorithm for general \texorpdfstring{$k$}{k}}\label{sec:general}

	By the Robinson--Schensted correspondence, sampling a uniform permutation
	from $\Omega_{n,k}$ reduces to two steps.  The correspondence is a bijection
	between permutations of $[n]$ and pairs $(P,Q)$ of standard Young tableaux of
	a common shape $\lambda$, and it identifies $\LIS(\pi)$ with the first row
	length $\lambda_1$. Figure~\ref{fig:rs-example} shows an example.  The number
	of permutations of shape $\lambda$ equals $(f^\lambda)^2$, where
	$f^\lambda=|\SYT(\lambda)|$ \cite{Schensted1961}.  Thus uniform sampling from
	$\Omega_{n,k}$ is equivalent to first sampling a partition $\lambda\vdash n$
	from
	\[
	\Pp_{n,k}(\lambda)\propto (f^\lambda)^2\,\mathbf 1_{\{\lambda_1=k\}},
	\]
	where $\mathbf 1$ denotes the indicator function,
	and second, sampling two independent uniform tableaux of shape $\lambda$ and
	applying the inverse correspondence.  The second step is standard and is
	deferred to Section~\ref{sec:tableaux}. The first step is the main
	algorithmic challenge, since a naive dynamic program over Young diagrams may
	have superpolynomially many states.

	\begin{figure}[h]
		\centering
		\begin{tikzpicture}[
			>=latex,
			every node/.style={font=\small},
			box/.style={draw, minimum width=6.5mm, minimum height=6.5mm, inner sep=0pt},
			brace/.style={decorate, decoration={brace, amplitude=4pt}}
			]

			\node (perm) {$\pi = 3\,1\,5\,2\,6\,4\,7$};

			\node[right=0.9cm of perm] (arrow)
			{$\longleftrightarrow$};

			\node[right=0.9cm of arrow, anchor=west] (Plabel) {$P=$};
			\matrix (P) [matrix of nodes,
			nodes={box},
			row sep=-\pgflinewidth,
			column sep=-\pgflinewidth,
			anchor=west,
			right=2mm of Plabel.east] {
				1 & 2 & 4 & 7 \\
				3 & 5 & 6 & |[draw=none]| \\
			};

			\node[right=1.0cm of P.east, anchor=west] (Qlabel) {$Q=$};
			\matrix (Q) [matrix of nodes,
			nodes={box},
			row sep=-\pgflinewidth,
			column sep=-\pgflinewidth,
			anchor=west,
			right=2mm of Qlabel.east] {
				1 & 3 & 5 & 7 \\
				2 & 4 & 6 & |[draw=none]| \\
			};

			\draw[brace]
			($(P-1-1.north west)+(0,2.5pt)$) --
			($(P-1-4.north east)+(0,2.5pt)$)
			node[midway, above=6pt, align=center]
			{$\LIS(\pi)=4$\\[-1pt]\footnotesize first row length};

		\end{tikzpicture}
		\caption{An example of the Robinson--Schensted correspondence. The permutation
			$\pi=3152647$ corresponds to a pair $(P,Q)$ of standard Young tableaux of common
			shape. The first row length is \(4\), which equals \(\LIS(\pi)\).}
		\label{fig:rs-example}
	\end{figure}

	\paragraph{Overview of the shape sampler.}
The challenge is to sample the Robinson--Schensted shape conditioned on
$\lambda_1=k$ with weight $(f^\lambda)^2$, where direct enumeration over the up
to $\binom nk$ feasible shapes is not polynomial.  Three ideas remove the obstacles in turn. These are a reparameterisation that separates the weight, a determinantal identity that collapses the completion sum, and the Hankel structure that makes the resulting determinant computation cheap. Conjugating the partition and applying
the shift $x_i:=\rho_i+k-i$ rewrites the hook-length formula as
$(f^\lambda)^2\propto\Delta(x)^2\prod_{i=1}^k x_i!^{-2}$, separating the
per-coordinate factorial weights from the squared Vandermonde coupling
$\Delta(x)^2$. This separated weight admits a one-coordinate-at-a-time sampler
that, given a prefix $p$, draws the next coordinate $t$ with probability
proportional to the total weight of all completions. The obstacle is that each
such \emph{completion score} sums over up to exponentially many suffixes. By the
Cauchy--Binet formula that sum collapses to a single coefficient of the
determinant of a polynomial matrix, computable in polynomial time. When the
matrix is evaluated at a scalar~$\xi$ it is \emph{Hankel}. Its determinant
costs $\tilde O(L)$ field operations rather than $O(L^3)$, which drives the fast implementation.

The rest of this section formalises the weight, defines the completion scores,
gives the determinant oracle, and proves correctness of the sequential shape
sampler. Section~\ref{sec:tableaux} completes the reduction to permutations,
Section~\ref{sec:direct-time} bounds the direct implementation, and
Section~\ref{sec:complexity} develops the fast implementation.

	\subsection{Reparameterisation}\label{sec:reparam}

	We first recall the hook-length formula.  For a partition $\nu$ and a cell
	$u=(i,j)$ of its Young diagram, let $h_\nu(u)=\nu_i-j+\nu'_j-i+1$ denote the
	hook length of $u$ (the number of cells consisting of $u$, those to its right
	in row $i$, and those below it in column $j$), where $\nu'$ is the conjugate
	partition whose parts are the column lengths of $\nu$.  Write
	$H_\nu:=\prod_{u\in\nu}h_\nu(u)$.  The hook-length
	formula states that $f^\nu=|\nu|!/H_\nu$, where $f^\nu=|\SYT(\nu)|$
	\cite{FrameRobinsonThrall1954}.

	We work with the conjugate partition $\rho=\lambda'$ rather than $\lambda$
	directly. Its parts $\rho_i$ are the column heights of $\lambda$, and it is
	these that appear naturally in the hook-length formula. Since $\lambda_1=k$,
	the partition $\rho$ has exactly $k$ positive parts, where $\rho_i$ counts
	the rows of $\lambda$ of length at least $i$, satisfying
	\[
	\rho_1\ge\rho_2\ge\cdots\ge\rho_k\ge 1,\qquad \rho_1+\cdots+\rho_k=n.
	\]
	Set $x_i:=\rho_i+k-i$ and $N:=n+\binom{k}{2}$. The shift turns the weakly
	decreasing sequence $\rho$ into the strictly decreasing sequence
	\[
	n\ge x_1>x_2>\cdots>x_k\ge 1,\qquad x_1+\cdots+x_k=N,
	\]
	and the map $x\mapsto\rho$ given by $\rho_i=x_i-k+i$ is its inverse. Thus the
	shapes $\lambda\vdash n$ with $\lambda_1=k$ are in bijection with the strict
	partitions $x$ of $N$ into exactly $k$ parts, each at most $n$. This is the same shift Johansson used to express Plancherel measure as a squared Vandermonde factor times independent weights~\cite{Johansson2001}. Here we repurpose it for purely algorithmic purposes and apply it to the conjugate partition. Concretely, for the shape of Figure~\ref{fig:rs-example}, where $n=7$ and $k=4$, we have $\lambda=(4,3)$, $\rho=(2,2,2,1)$, $x=(5,4,3,1)$, and $N=7+\binom{4}{2}=13$.

	For a strict partition $x=(x_1,\dots,x_k)$, write
	$\Delta(x):=\prod_{1\le i<j\le k}(x_i-x_j)$. More generally, for any
	strictly decreasing sequence, $\Delta$ denotes the same product over its
	entries.

	The following lemma makes the factored form of the target weight explicit.
	It is the reason the $x$ coordinates are useful. The hook-length formula,
	when rewritten in terms of $x$, separates into a Vandermonde factor and
	independent factorial weights, yielding a target distribution that admits
	an efficient sequential sampler.

	\begin{lemma}\label{lem:x-hook}
		Under the bijection above,
		\[
		f^\rho=n!\,\frac{\Delta(x)}{\prod_{i=1}^k x_i!}.
		\]
		Consequently the target distribution on strict partitions \(x\) is proportional to
		\[
		\wt(x):=\Delta(x)^2\prod_{i=1}^k x_i!^{-2}.
		\]
	\end{lemma}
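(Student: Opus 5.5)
Our plan is to prove the standard "shifted hook" identity for $\rho$:
\[
H_\rho=\frac{\prod_{i=1}^k x_i!}{\Delta(x)}, \qquad x_i=\rho_i+k-i.
\]
Combined with the hook-length formula $f^\rho=n!/H_\rho$, this gives the displayed formula for $f^\rho$. Since conjugation preserves the set of hook lengths (transposing the diagram swaps arms and legs), we have $f^\lambda=f^{\lambda'}=f^\rho$. Hence $(f^\lambda)^2\propto\Delta(x)^2\prod_i x_i!^{-2}$ with the constant $(n!)^2$. This proves the "consequently" clause, because the bijection $x\leftrightarrow\lambda$ transports $\Pp_{n,k}$ to the law proportional to $\wt$.

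The key step is a row-by-row computation. Fix row $i$ of $\rho$ (length $\rho_i$, with $\rho$ having exactly $k$ rows). We claim the multiset of hook lengths in row $i$ equals
\[
\{1,2,\dots,x_i\}\setminus\{x_i-x_j: j>i\}.
\]
Given the claim, the product of hooks in row $i$ is $x_i!/\prod_{j>i}(x_i-x_j)$. Multiplying over $i$ gives $\prod_i x_i!/\Delta(x)$.

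To prove the claim we use the usual argument.
\begin{itemize}
\item The hook of cell $(i,c)$ is $\rho_i-c+\rho'_c-i+1$. These values are strictly decreasing in $c$, so they are $\rho_i$ distinct numbers.
\item The largest hook, at $c=1$, is $\rho_i+k-i=x_i$, because $\rho'_1=k$. So all row-$i$ hooks lie in $[1,x_i]$.
\item For $j>i$, write $x_i-x_j=\rho_i-\rho_j+(j-i)$. We show this is never a hook in row $i$, by locating where it would fall. Its value lies strictly between the hook of the cell in column $\rho_j$ and the hook of the cell in column $\rho_j+1$ (or beyond the row end when $\rho_j=\rho_i$), according to whether column $c$ has length $\ge j$ or $<j$.
\item The values $x_i-x_j$ for $j>i$ are $k-i$ distinct numbers in $[1,x_i]$. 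Together with the $\rho_i$ hooks they account for $\rho_i+k-i=x_i$ numbers, so the complement statement follows by counting.
\end{itemize}

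The main obstacle is the non-membership step: showing $x_i-x_j$ avoids every row-$i$ hook. We handle it by a case split on column $c$. If $c\le\rho_j$, then $\rho'_c\ge j$ and the hook is $\ge\rho_i-c+j-i+1>x_i-x_j$. If $c>\rho_j$, then $\rho'_c\le j-1$ and the hook is $\le\rho_i-c+j-i<x_i-x_j$ (using $c\ge\rho_j+1$). This is only an inequality check, but it is where care is needed. As an alternative, we could cite Frobenius' formula $f^\rho=n!\,\Delta(\ell)/\prod\ell_i!$ with $\ell_i=\rho_i+k-i$ directly, which gives the result immediately.
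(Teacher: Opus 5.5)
Your proof is correct and follows the paper's route: you prove the row identity $\prod_{c}h_\rho(i,c)=x_i!/\prod_{j>i}(x_i-x_j)$, multiply over rows, and use invariance of $f$ under conjugation. The only difference is how you show that the row-$i$ hooks and the values $x_i-x_j$ ($j>i$) partition $\{1,\dots,x_i\}$. The paper places each skipped value directly in the gap $h_\rho(i,c)-h_\rho(i,c+1)=1+\rho'_c-\rho'_{c+1}$ between consecutive hooks, whereas you prove disjointness by your two-case inequality and finish by counting; this is equally valid and slightly shorter.
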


	\begin{proof}
	We show that each row's hook lengths fill the run $1,2,\ldots$ up to the
	longest hook, with known gaps, so its hook product is a factorial divided
	by the product of the gap values. Multiplying over the rows then gives the
	product formula for $H_\rho$.
	By the hook-length formula $f^\rho=n!/H_\rho$. For row $i$ of $\rho$, the hook
	lengths $h_\rho(i,c)=\rho_i-c+\rho'_c-i+1$ as $c$ ranges from $1$ to $\rho_i$ give
	the integers $1,\dots,\rho_i+k-i$ with the values $\rho_i-\rho_j+j-i$ for
	$j>i$ omitted. To see this, compare consecutive columns. For
	$1\le c<\rho_i$ we have $h_\rho(i,c)-h_\rho(i,c+1)=1+\delta_c$, where
	$\delta_c:=\rho'_c-\rho'_{c+1}$ counts the rows of length exactly $c$.
	These rows are $l=\rho'_{c+1}+1,\dots,\rho'_c$, and the $\delta_c$ values
	skipped at this step, namely $h_\rho(i,c+1)+1,\dots,h_\rho(i,c)-1$, equal
	$\rho_i-\rho_l+l-i$ for exactly those rows $l$. The smallest hook is
	$h_\rho(i,\rho_i)=\rho'_{\rho_i}-i+1$, and the values
	$1,\dots,\rho'_{\rho_i}-i$ beneath it equal $\rho_i-\rho_l+l-i$ for the
	rows $l>i$ of length $\rho_i$. Since $h_\rho(i,1)=\rho_i+k-i$, the hook
	lengths of row $i$ and the values $\rho_i-\rho_j+j-i$ for $j>i$ together
	fill $\{1,\dots,\rho_i+k-i\}$. Hence
	\[
	\prod_{c=1}^{\rho_i}h_\rho(i,c)
	=\frac{(\rho_i+k-i)!}{\prod_{j>i}(\rho_i-\rho_j+j-i)}.
	\]
	Multiplying over $i=1,\dots,k$ gives
	\[
	H_\rho
	=\frac{\prod_{i=1}^k(\rho_i+k-i)!}{\prod_{1\le i<j\le k}(\rho_i-\rho_j+j-i)}.
	\]
	Substituting $x_i=\rho_i+k-i$ yields $(\rho_i+k-i)!=x_i!$ and
	$\rho_i-\rho_j+j-i=x_i-x_j$, so $H_\rho=\prod_i x_i!/\Delta(x)$ and
	$f^\rho=n!\,\Delta(x)/\prod_i x_i!$.

	Since transposition preserves the number of standard Young tableaux,
	$f^\lambda=f^\rho$. Therefore
	\[
	(f^\lambda)^2=(f^\rho)^2=(n!)^2\Delta(x)^2\prod_i x_i!^{-2}.
	\]
	As $(n!)^2$ is independent of $x$, the induced distribution on $x$ is
	proportional to $\wt(x)$.
	\end{proof}

	Lemma~\ref{lem:x-hook} reduces uniform sampling on $\Omega_{n,k}$ to
	sampling a strict partition $x$ with probability proportional to $\wt(x)$,
	followed by the Robinson--Schensted construction of a permutation from a
	random pair of tableaux of the corresponding shape. Algorithm~\ref{alg:general-sampler}
	assembles these steps. The remainder of this section is devoted to its
	central ingredient, the strict-partition sampler of
	Algorithm~\ref{alg:shape-sampler}.

	\begin{algorithm}[h]
		\caption{General exact sampler}
		\label{alg:general-sampler}
		\begin{algorithmic}[1]
			\Require Integers $n,k$ with $1\le k\le n$
			\State Set $N\gets n+\binom{k}{2}$.
			\State Sample a strict partition $x$ of $N$ with $k$ parts via
			Algorithm~\ref{alg:shape-sampler}, with
			$\Pp(x)\propto\Delta(x)^2\prod_i x_i!^{-2}$.
			\State Recover $\rho_i\gets x_i-k+i$ for $i=1,\dots,k$.
			\State Set $\lambda\gets\rho'$.
			\State Sample independently two uniform tableaux $P,Q\in\SYT(\lambda)$.
			\State Return the inverse Robinson--Schensted image of $(P,Q)$.
		\end{algorithmic}
	\end{algorithm}

	\subsection{Completion scores}\label{sec:scores}

	We sample the strict partition $x_1>\cdots>x_k$ one coordinate at a time. At each step, having fixed a prefix
	$p=(p_1,\dots,p_r)$ with $p_1>\cdots>p_r\ge 1$, we choose the next coordinate
	$t$ with probability proportional to the total $\wt$-mass of all valid
	extensions of $(p_1,\dots,p_r,t)$ to a full strict partition. These masses are
	the completion scores $S_p(t)$, which we define and compute below.

	At a prefix $p=(p_1,\ldots,p_r)$ of length $r<k$, strictness means the next coordinate can be at most
	\[
	B(p):=
	\begin{cases}
		n, & r=0,\\
		p_r-1, & r>0.
	\end{cases}
	\]
	The admissible candidates for the next coordinate are then
	$t\in\{1,\ldots,B(p)\}$.

	For such an admissible $t$, write
	\[
	L:=k-r-1,\quad
	s(t):=N-\textstyle\sum_{i=1}^r p_i-t,\quad
	E_t:=\{1,\dots,t-1\}.
	\]
	Here $L$ is the number of coordinates that remain to be chosen after $t$,
	$s(t)$ is the sum they must contribute so that the whole sequence sums to
	$N$, and $E_t$ is the set of values available to them.
	A valid suffix is an $L$-element subset $A\subseteq E_t$ with $\sum A=s(t)$.
	For a finite set $A$, we use the same positive Vandermonde convention,
	\[
	\Delta(A):=\prod_{\substack{a,b\in A\\a<b}}(b-a),
	\]
	with $\Delta(\varnothing)=1$.

	Since the sampler uses only relative probabilities at a fixed prefix, we may
	scale all completion masses by any common positive factor. We clear the
	factorial denominators of the non-prefix coordinates by multiplying through
	by $(n!)^{2(L+1)}$. For a possible later value $a$, define
	\[
	G_p(a):=\Bigl(\frac{n!}{a!}\Bigr)^2\prod_{i=1}^r(p_i-a)^2,
	\]
	recording the scaled contribution of $a$ and its pairwise interactions with
	the prefix. Fixing the next coordinate to $t$ introduces a further pairwise
	factor $(t-a)^2$ for each suffix element $a$, giving
	\[
	\psi_{p,t}(a):=(t-a)^2\,G_p(a).
	\]
	The completion score of a candidate $t$ is then the total scaled mass of its
	valid suffixes,
	\[
	S_p(t):=G_p(t)
	\sum_{\substack{A\subseteq E_t\\|A|=L,\;\sum A=s(t)}}
	\Delta(A)^2\prod_{a\in A}\psi_{p,t}(a),
	\qquad 1\le t\le B(p).
	\]
	The sum over $A$ is an instance of a general quantity, which we isolate so
	that the determinant oracle of Section~\ref{sec:oracle} can evaluate it
	without enumerating the suffixes. For a ground set $E$ and a weight function
	$\psi$ on $E$, define the completion counter
	\[
	Z_{L,s}(E,\psi)
	:=\sum_{\substack{A\subseteq E\\|A|=L,\;\sum A=s}}
	\Delta(A)^2\prod_{a\in A}\psi(a),
	\]
	with $Z_{0,s}(E,\psi)=\mathbf{1}_{s=0}$ since the only suffix of length zero
	is $A=\varnothing$. In this notation,
	\[
	S_p(t)=G_p(t)\,Z_{L,s(t)}(E_t,\psi_{p,t}).
	\]

	The scores $S_p(t)$ are nonnegative integers, since each $G_p(a)$ is the
	square of an integer and each $Z_{L,s(t)}(E_t,\psi_{p,t})$ is a sum of
	products of such integers. Thus the weighted choice in
	Algorithm~\ref{alg:shape-sampler} can be implemented exactly with integer
	weights, with no rational probabilities or rounding.

	\begin{lemma}\label{lem:score-mass}
		For a fixed prefix $p$, the completion score $S_p(t)$ is proportional to the
		total $\wt$-mass of all valid extensions of $p$ with next coordinate $t$, with
		a proportionality constant independent of $t$.
	\end{lemma}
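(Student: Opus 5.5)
We prove this by a direct bijection-and-factorisation argument. Fix the prefix $p=(p_1,\dots,p_r)$ and a candidate $t\in\{1,\dots,B(p)\}$. A valid extension of $p$ with next coordinate $t$ is a full strict partition $x=(p_1,\dots,p_r,t,a_1,\dots,a_L)$ with $t>a_1>\cdots>a_L\ge1$ and $\sum_i x_i=N$. The map $x\mapsto A:=\{a_1,\dots,a_L\}$ is a bijection from these extensions onto the $L$-subsets $A\subseteq E_t$ with $\sum A=s(t)$. Strictness makes the suffix a set, and every element of $E_t$ lies below $t$, so each admissible $A$ in decreasing order gives back a strictly decreasing sequence. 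Thus the total $\wt$-mass of the extensions is $\sum_A \wt(p,t,A)$, summed over exactly the index set appearing in $S_p(t)$.

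Next we expand $\wt(x)=\Delta(x)^2\prod_i x_i!^{-2}$ for one such $x$. We split the pairs $i<j$ in $\Delta(x)$ into four classes: pairs inside the prefix, pairs (prefix, $t$), pairs (prefix, suffix element), pairs ($t$, suffix element), and pairs inside the suffix. Since $x$ is decreasing, each factor $x_i-x_j$ is positive and equals the corresponding difference $p_i-t$, $p_i-a$, $t-a$, or $b-a$ for $a<b\in A$. Squaring removes all sign questions anyway. This gives
\[
\wt(x)=\underbrace{\Delta(p)^2\prod_{i=1}^r p_i!^{-2}}_{=:C_p}\cdot\frac{\prod_i(p_i-t)^2}{t!^2}\cdot\Delta(A)^2\prod_{a\in A}\frac{(t-a)^2\prod_i(p_i-a)^2}{a!^2}.
\]

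Finally we compare with the definition of $S_p(t)$. Multiplying by $(n!)^{2(L+1)}$, with one factor $(n!)^2$ for $t$ and one for each of the $L$ suffix elements, turns the $t$-factor into $G_p(t)$ and each $a$-factor into $(t-a)^2G_p(a)=\psi_{p,t}(a)$. Hence
\[
(n!)^{2(L+1)}\wt(x)=C_p\,G_p(t)\,\Delta(A)^2\prod_{a\in A}\psi_{p,t}(a).
\]
Summing over admissible $A$ gives $(n!)^{2(L+1)}\cdot(\text{mass})=C_p\,S_p(t)$. So $S_p(t)$ equals the mass times $(n!)^{2(L+1)}/C_p$. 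This constant depends only on $p$, $n$, and $L=k-r-1$, and not on $t$, since $L$ is fixed once the prefix length is fixed. We also have $C_p>0$.

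The only step needing care is the bookkeeping of the pair classes and the check that the suffix set ranges over exactly $E_t$ with the sum constraint $s(t)$. Two edge cases follow from the conventions $\Delta(\varnothing)=1$ and $Z_{0,s}=\mathbf 1_{s=0}$. When $L=0$, the extension exists iff $s(t)=0$. When the sum is empty, both the mass and $S_p(t)$ are zero. So proportionality, with possibly zero values, holds in all cases.
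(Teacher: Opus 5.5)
Your proof is correct and follows the paper's argument: you factor $\Delta(p,t,A)^2$ into prefix-only, prefix--candidate, prefix--suffix, candidate--suffix and suffix-only pieces, pull out the $t$-independent constant $\Delta(p)^2\prod_i p_i!^{-2}$, and clear denominators with $(n!)^{2(L+1)}$ to obtain $G_p(t)\,\Delta(A)^2\prod_{a\in A}\psi_{p,t}(a)$. Your explicit bijection between extensions and admissible suffix sets $A\subseteq E_t$ and your edge-case remarks are harmless additions, though the ``four classes'' you announce are actually the five you list.
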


	\begin{proof}
		We use the fact that the squared Vandermonde of a full extension factors into
		prefix-only, prefix--candidate, suffix-only, candidate--suffix, and
		prefix--suffix terms.  After the prefix-only factor is pulled out and one
		factor of $(n!)^2$ is used to clear the factorial denominator of each
		non-prefix coordinate, the remaining sum over suffixes is exactly the
		completion counter $Z$.

		Fix an admissible candidate $t\in\{1,\ldots,B(p)\}$. For any valid suffix
		$A\subseteq E_t$, the full extension is $(p_1,\dots,p_r,t,a_1,\dots,a_L)$,
		where $a_1>\cdots>a_L$ are the elements of $A$. Writing $\Delta(p,t,A)$
		for the Vandermonde factor of this full extension, we have
		\[
		\Delta(p,t,A)^2
		=\Delta(p)^2\prod_{i=1}^r(p_i-t)^2\,
		\Delta(A)^2\prod_{a\in A}(t-a)^2
		\prod_{i=1}^r\prod_{a\in A}(p_i-a)^2 .
		\]
		The five factors on the right are, respectively, the prefix-only,
		prefix--candidate, suffix-only, candidate--suffix, and prefix--suffix
		interactions.
		Thus the full weight of this extension is
		\[
		K_p\,
		\frac{\prod_{i=1}^r(p_i-t)^2}{t!^2}\,
		\Delta(A)^2
		\prod_{a\in A}
		\frac{(t-a)^2\prod_{i=1}^r(p_i-a)^2}{a!^2},
		\]
		where $K_p:=\Delta(p)^2\prod_{i=1}^r p_i!^{-2}$ depends only on $p$.
		Multiplying everything except $K_p$ by $(n!)^{2(L+1)}$, one factor of
		$(n!)^2$ for each non-prefix coordinate, gives
		\[
		G_p(t)\,\Delta(A)^2\prod_{a\in A}\psi_{p,t}(a),
		\]
		which is $G_p(t)$ times the summand in the definition of
		$Z_{L,s(t)}(E_t,\psi_{p,t})$.
		Summing over all valid suffixes $A$, the total $\wt$-mass of all
		extensions with next coordinate $t$ is
		\[
		K_p(n!)^{-2(L+1)}G_p(t)Z_{L,s(t)}(E_t,\psi_{p,t})
		=K_p(n!)^{-2(L+1)}S_p(t).
		\]
		The proportionality constant is independent of $t$, so $S_p(t)$ is
		proportional to the required total mass.
	\end{proof}

	The scores at a stage aggregate into cumulative scores, which the fast
	implementation of Section~\ref{sec:complexity} queries directly. For a
	prefix $p$ of length $r$, set $R:=N-\sum_{i=1}^r p_i$ and, for
	$0\le T\le B(p)$,
	\[
	C_p(T):=Z_{L+1,R}\bigl(\{1,\dots,T\},G_p\bigr)
	=\sum_{\substack{A\subseteq\{1,\dots,T\}\\|A|=L+1,\ \sum A=R}}
	\Delta(A)^2\prod_{a\in A}G_p(a).
	\]
	In words, $C_p(T)$ is the total scaled mass of all completions whose
	remaining coordinates all lie in $\{1,\dots,T\}$.
	Since the remaining $L+1$ coordinates form a strictly decreasing sequence,
	the next coordinate is the largest element of the remaining set, which
	gives the following identity.

	\begin{lemma}[Cumulative scores]\label{lem:cum-mass}
		For every $1\le T\le B(p)$,
		\[
		C_p(T)-C_p(T-1)=S_p(T),
		\qquad\text{and hence}\qquad
		C_p(T)=\sum_{t\le T}S_p(t).
		\]
	\end{lemma}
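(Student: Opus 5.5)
We split the sum defining $C_p(T)$ according to whether $T\in A$ or not. The subsets $A\subseteq\{1,\dots,T\}$ with $T\notin A$ are exactly the subsets of $\{1,\dots,T-1\}$, and they contribute $C_p(T-1)$ (for $T=1$ this is $C_p(0)$, the sum over subsets of the empty set, equal to $Z_{L+1,R}(\varnothing,G_p)$). So the difference $C_p(T)-C_p(T-1)$ is the sum over those $A$ with $|A|=L+1$, $\sum A=R$ and $\max A=T$. It remains to show that this partial sum equals $S_p(T)$.

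Every such $A$ can be written as $A=\{T\}\cup A'$ with $A'\subseteq\{1,\dots,T-1\}=E_T$. Then $|A'|=L$, and
\[
\sum A'=R-T=N-\textstyle\sum_i p_i-T=s(T).
\]
Conversely, every valid suffix $A'$ for candidate $T$ gives such an $A$. Hence $A\leftrightarrow A'$ is a bijection onto the index set of $Z_{L,s(T)}(E_T,\psi_{p,T})$.

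Next we factor the summand. Since $T$ exceeds every element of $A'$,
\[
\Delta(A)^2=\Delta(A')^2\prod_{a\in A'}(T-a)^2 .
\]
Hence
\[
\Delta(A)^2\prod_{a\in A}G_p(a)=G_p(T)\,\Delta(A')^2\prod_{a\in A'}(T-a)^2G_p(a)=G_p(T)\,\Delta(A')^2\prod_{a\in A'}\psi_{p,T}(a).
\]
Summing over $A'$ gives $G_p(T)\,Z_{L,s(T)}(E_T,\psi_{p,T})=S_p(T)$, which is the first identity. The edge case $L=0$ is consistent with this: then $A=\{T\}$ is forced with $R=T$, and the convention $Z_{0,s}=\mathbf 1_{s=0}$ matches.

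For the second identity we telescope from $T$ down to $1$, which requires $C_p(0)=0$. Since $L+1\ge1$, there is no $(L+1)$-element subset of the empty set, so $C_p(0)=0$. The proof involves no real obstacle; the only care needed is in the bookkeeping of the sum constraint $s(T)=R-T$ and this boundary term.
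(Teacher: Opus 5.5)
Your proof is correct and follows the paper's argument essentially verbatim: you identify the difference $C_p(T)-C_p(T-1)$ as the sum over sets with $\max A=T$, write $A=A'\cup\{T\}$, and factor the summand as $G_p(T)\,\Delta(A')^2\prod_{a\in A'}\psi_{p,T}(a)$. Your explicit checks that $C_p(0)=0$ (needed for the telescoping) and that the $L=0$ case is consistent are details the paper leaves implicit.
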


	\begin{proof}
		The summands of the difference are exactly the sets $A$ with
		$\max A=T$, that is, $A=A'\cup\{T\}$ with $A'\subseteq\{1,\dots,T-1\}$,
		$|A'|=L$, and $\sum A'=R-T=s(T)$. For such $A$,
		\[
		\Delta(A)^2\prod_{a\in A}G_p(a)
		=G_p(T)\,\Delta(A')^2\prod_{a\in A'}(T-a)^2G_p(a)
		=G_p(T)\,\Delta(A')^2\prod_{a\in A'}\psi_{p,T}(a),
		\]
		and summing over $A'$ gives exactly
		$S_p(T)=G_p(T)\,Z_{L,s(T)}(E_T,\psi_{p,T})$.
	\end{proof}

	Consequently the weighted draw at a stage needs only $O(\log n)$
	cumulative scores. Compute $W:=C_p(B(p))=\sum_{t\le B(p)}S_p(t)$, draw a
	uniform integer $U\in\{1,\dots,W\}$, and binary-search the least $T$ with
	$C_p(T)\ge U$. Since $C_p$ is nondecreasing, the returned $T$ has
	probability exactly $S_p(T)/W$, the law of the weighted draw in
	Algorithm~\ref{alg:shape-sampler}.

	\subsection{A determinant oracle for completion counts}
	\label{sec:oracle}

	The completion score $S_p(t)$ involves the counter
	$Z_{L,s(t)}(E_t,\psi_{p,t})$, a sum of $\Delta(A)^2$-weighted terms over all
	$L$-element subsets $A\subseteq E_t$ with $\sum A=s(t)$. Enumerating these subsets directly is infeasible as there are
	\(\binom{t-1}{L}\) such subsets, which can be exponential in \(n\). The following
	lemma shows that $Z_{L,s}(E,\psi)$ is instead recoverable as a single
	coefficient of the determinant of an $L\times L$ polynomial matrix, and that this
	determinant can be computed in polynomial time.  This determinant
	representation is what makes the general sampler polynomial-time.

	Let $E=\{e_1<\cdots<e_m\}$ be a finite set of nonnegative integers, $R$ a
	commutative ring, and $\psi:E\to R$ a weight function. For $L\ge 0$, define the following $L\times L$ matrix over $R[q]$.
	\[
	M_{L,E,\psi}(q):=\sum_{y\in E}\psi(y)q^y\,v(y)v(y)^\top,
	\quad
	v(y):=(1,y,\dots,y^{L-1})^\top\in R^L.
	\]
	This is the weighted moment matrix of $E$, the sum of the rank-one terms
	$v(y)v(y)^\top$ scaled by $\psi(y)q^y$.  Each anti-diagonal $i+j=d$ holds a
	single weighted power sum $\sum_{y\in E}\psi(y)q^y y^{d}$, so $M_{L,E,\psi}(q)$
	is determined by only $2L-1$ such sums.  It is built so that its determinant
	generates the completion counts $Z_{L,s}$, as the next lemma shows, via the
	factorisation $M_{L,E,\psi}(q)=VD(q)V^\top$ and the Cauchy--Binet formula.

	\begin{lemma}[Determinant oracle]\label{lem:oracle}
		For every $L\ge 0$ and integer $s\ge 0$,
		\[
		[q^s]\det M_{L,E,\psi}(q)=Z_{L,s}(E,\psi).
		\]
	\end{lemma}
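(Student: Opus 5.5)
We will prove Lemma~\ref{lem:oracle} by the factorisation indicated just before the statement together with the Cauchy--Binet formula. Let $V$ be the $L\times m$ matrix whose $j$-th column is $v(e_j)=(1,e_j,\dots,e_j^{L-1})^\top$, and let $D(q)$ be the $m\times m$ diagonal matrix with entries $\psi(e_j)q^{e_j}$. Then
\[
VD(q)V^\top=\sum_{j=1}^m \psi(e_j)q^{e_j}\,v(e_j)v(e_j)^\top=M_{L,E,\psi}(q),
\]
directly from the definition. All of this lives over the commutative ring $R[q]$, where Cauchy--Binet holds.

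We then apply Cauchy--Binet to the product of the $L\times m$ matrix $VD(q)$ and the $m\times L$ matrix $V^\top$:
\[
\det M_{L,E,\psi}(q)=\sum_{\substack{J\subseteq[m]\\|J|=L}}\det\bigl(VD(q)\bigr)_{[L],J}\,\det\bigl(V^\top\bigr)_{J,[L]}.
\]
Since $D(q)$ is diagonal, $\det(VD(q))_{[L],J}=\det V_{[L],J}\prod_{j\in J}\psi(e_j)q^{e_j}$. Writing $A=\{e_j:j\in J\}$, the minor $V_{[L],J}$ is the Vandermonde matrix on the increasing list of elements of $A$. Its determinant is $\prod_{a<b\in A}(b-a)=\Delta(A)$, and the same holds for its transpose. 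The sign convention is therefore harmless, because the two minors coincide and their product is $\Delta(A)^2$. Hence
\[
\det M_{L,E,\psi}(q)=\sum_{A\subseteq E,\;|A|=L}\Delta(A)^2\prod_{a\in A}\psi(a)\;q^{\sum A}.
\]

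Extracting the coefficient of $q^s$ keeps exactly the subsets with $\sum A=s$, which gives $Z_{L,s}(E,\psi)$.

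Edge cases need a short check.
\begin{itemize}[leftmargin=*]
\item If $L=0$, the determinant of the empty matrix is $1=q^0$, matching $Z_{0,s}=\mathbf 1_{s=0}$.
\item If $L>m$, the Cauchy--Binet sum is empty, so $\det M=0$. This is consistent, since no $L$-subsets of $E$ exist.
\end{itemize}

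There is no real obstacle in this argument. The only points needing care are that Cauchy--Binet is valid over an arbitrary commutative ring (here $R[q]$), and that the Vandermonde sign is irrelevant because the two minors are equal.
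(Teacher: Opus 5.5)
Your proposal is correct and follows the paper's own proof: the factorisation $M_{L,E,\psi}(q)=VD(q)V^\top$, Cauchy--Binet over $R[q]$, and the observation that each pair of Vandermonde minors contributes $\Delta(A)^2\prod_{a\in A}\psi(a)q^{\sum A}$ before extracting $[q^s]$. Your explicit checks of the cases $L=0$ and $L>m$ are a small addition that the paper leaves implicit.
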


	\begin{proof}
		The case $L=0$ is immediate. For $L\ge 1$, write $M_{L,E,\psi}(q)=VD(q)V^\top$
		where $V$ is the $L\times m$ Vandermonde matrix with columns $v(e_j)$ and
		$D(q)=\operatorname{diag}(\psi(e_1)q^{e_1},\dots,\psi(e_m)q^{e_m})$. By the
		Cauchy--Binet formula~\cite[Section~0.8.7]{HornJohnson2013},
		\[
		\det\bigl(VD(q)V^\top\bigr)
		=\sum_{\substack{J\subseteq[m]\\|J|=L}}
		\det(V_J)\det(D_J(q))\det(V_J^\top),
		\]
		where $V_J$ and $D_J(q)$ are the submatrices indexed by $J$. Setting
		$A=\{e_j:j\in J\}$, we have $\det(V_J)=\Delta(A)$ and
		$\det(D_J(q))=\prod_{a\in A}\psi(a)q^a$, so
		\[
		\det M_{L,E,\psi}(q)
		=\sum_{\substack{A\subseteq E\\|A|=L}}
		\Delta(A)^2\prod_{a\in A}\psi(a)\,q^{\sum A}.
		\]
		Extracting the coefficient of $q^s$ gives $Z_{L,s}(E,\psi)$.
	\end{proof}

	The general ground set $E$ and weight $\psi$ are needed only for
	Lemma~\ref{lem:oracle}. In the sampler, the prefix $p$ determines $L$, and
	each candidate $t$ determines $E_t$ and $\psi_{p,t}$. We therefore
	abbreviate
	\[
	M_t(q):=M_{L,E_t,\psi_{p,t}}(q).
	\]
	By Lemma~\ref{lem:oracle}, the completion score is
	\[
	S_p(t)=G_p(t)\,[q^{s(t)}]\det M_t(q).
	\]
	Thus, for every prefix $p$ and every candidate next coordinate $t$, the total
	weight of all completions below $t$ can be computed exactly by a single
	determinant coefficient. Algorithmically, we extract this coefficient by evaluating the determinant
	polynomial at sufficiently many field elements and interpolating.

	\subsection{The sequential shape sampler}
	\label{sec:shape-sampler}

	We now assemble the one-coordinate-at-a-time sampler
	(Algorithm~\ref{alg:shape-sampler}). Start with the empty
	prefix $p=\varnothing$. At each stage, compute the completion scores $S_p(t)$ for
	all admissible $t$, then draw the next coordinate with probability
	\[
	\Pp(x_{r+1}=t\mid p)
	=
	\frac{S_p(t)}{\sum_{u=1}^{B(p)}S_p(u)}.
	\]

	\begin{algorithm}[h]
		\caption{Sequential shape sampler}
		\label{alg:shape-sampler}
		\begin{algorithmic}[1]
			\Require Integers $n,k$ with $1\le k\le n$
			\State Set $N\gets n+\binom{k}{2}$.
			\State Set $p\gets\varnothing$.
			\For{$r=0,1,\ldots,k-1$}
			\State Set $B\gets n$ if $r=0$, and $B\gets p_r-1$ otherwise.
			\For{each $t\in\{1,\ldots,B\}$}
			\State Compute the completion score $S_p(t)$ via Lemma~\ref{lem:oracle},
			where $s(t)=N-\sum_{i=1}^{r}p_i-t$.
			\EndFor
			\State Sample $t\in\{1,\ldots,B\}$ with probability proportional to the weights $S_p(t)$.
			\State Append $t$ to the prefix $p$.
			\EndFor
			\State \Return $x=p$.
		\end{algorithmic}
	\end{algorithm}

	Algorithm~\ref{alg:shape-sampler} is stated in its direct form, computing
	every score at every stage. It defines the sampling law. The fast
	implementation of Section~\ref{sec:complexity} realises the same weighted
	draw through the cumulative scores of Lemma~\ref{lem:cum-mass}, without
	computing individual scores.

	For an admissible candidate \(t\), recall from Section~\ref{sec:scores}
	that \(s(t)=N-\sum_{i=1}^r p_i-t\).
	A suffix of length \(L\) chosen from \(E_t=\{1,\ldots,t-1\}\) can have total
	sum only between the sum of the \(L\) smallest positive integers and the sum of
	the \(L\) largest elements below \(t\). Hence \(S_p(t)=0\) unless
	\[
	\binom{L+1}{2}\le s(t)\le Lt-\binom{L+1}{2}.
	\]
	Candidates failing this test may therefore be skipped before invoking the
	determinant oracle. After $k$ coordinates have been
	chosen, we recover $\rho_i=x_i-k+i$ and set $\lambda=\rho'$.

	The weighted draw at each stage is well defined.  The initial state is
	feasible, since for every $1\le k\le n$, the shape $\lambda=(k,1^{n-k})$
	corresponds to the strict partition $x=(n,k-1,k-2,\ldots,1)$, which has $k$
	parts, maximum $n$, and sum $N$, so $\sum_t S_\varnothing(t)>0$.
	Inductively, every reachable prefix was extended by a coordinate drawn with
	positive completion score, so it admits at least one valid completion and
	the total score at the next stage is again positive.

	For a prefix $p$, let $W(p)$ denote the total $\wt$-mass of all valid full
	strict partitions extending $p$.

	\begin{theorem}\label{thm:shape-correct}
		The sequential sampler outputs a strict partition $x$ of $N$ with $k$ parts
		with probability proportional to $\wt(x)$. Equivalently, it samples
		$\lambda\vdash n$ with $\lambda_1=k$ from the exact law
		$\Pp_{n,k}(\lambda)\propto(f^\lambda)^2\,\mathbf 1_{\{\lambda_1=k\}}$.
	\end{theorem}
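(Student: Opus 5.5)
The plan is a telescoping-product argument over the $k$ stages, using Lemma~\ref{lem:score-mass} to identify the conditional law at each stage with a ratio of completion masses $W(\cdot)$.

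Fix a strict partition $x=(x_1,\dots,x_k)$ of $N$ with $n\ge x_1$ and write $p^{(r)}=(x_1,\dots,x_r)$ for its prefixes, with $p^{(0)}=\varnothing$.

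First we identify the conditional probability of each step. At a reachable prefix $p$ of length $r$, Lemma~\ref{lem:score-mass} gives $S_p(t)=c_p\,W(p,t)$ for every admissible $t$, where $c_p>0$ does not depend on $t$. Every valid full extension of $p$ has its next coordinate in $\{1,\dots,B(p)\}$. Hence $\sum_{t\le B(p)}W(p,t)=W(p)$, and the sampler picks $t$ with probability $W(p,t)/W(p)$. Here we use that $W(p)>0$ for every reachable prefix, by the well-definedness discussion preceding the theorem.

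Next, the probability that the sampler outputs $x$ is the product of these conditional probabilities:
\[
\prod_{r=0}^{k-1}\frac{W(p^{(r+1)})}{W(p^{(r)})}=\frac{W(x)}{W(\varnothing)}.
\]
If some prefix of $x$ has zero mass, the chain stops at that factor. The output probability is then $0$, which is consistent, because such an $x$ is not valid. For a complete valid $x$ we have $W(x)=\wt(x)$. For $p^{(k)}=x$ this holds because $L$ and the suffix are empty and $Z_{0,0}=1$, so the only extension of $x$ is $x$ itself. The constant $W(\varnothing)$ is the total $\wt$-mass, which is independent of $x$. So the output has law $\propto\wt(x)$.

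We must also check that every output is a valid strict partition. Strictness holds because $t\le B(p)$. The bound $x_1\le n$ holds because $B(\varnothing)=n$. The sum is $N$ and there are $k$ parts because a positive score forces a completion with these properties to exist.

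Finally, the equivalence with $\Pp_{n,k}$ follows from the bijection $x\leftrightarrow\rho\leftrightarrow\lambda=\rho'$ of Section~\ref{sec:reparam} together with Lemma~\ref{lem:x-hook}, which gives $(f^\lambda)^2=(n!)^2\wt(x)$.

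The main point of care is the last step of the product, together with the boundary conventions. The definition of $S_p(t)$ at $r=k-1$ (with $L=0$) must agree with $W(p,t)$: there $S_p(t)=G_p(t)\mathbf 1_{s(t)=0}$. We also need that zero-mass candidates are never chosen, so that the telescoping never divides by zero. Both points are immediate from Lemma~\ref{lem:score-mass} and the positivity argument already given.
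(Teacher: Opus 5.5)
Your proposal is correct and takes essentially the same route as the paper. The paper's proof is the same telescoping product $\prod_{r=0}^{k-1}W(x_1,\dots,x_{r+1})/W(x_1,\dots,x_r)=\wt(x)/W(\varnothing)$, obtained from Lemma~\ref{lem:score-mass} and finished with Lemma~\ref{lem:x-hook}. Your additional checks are sound, and the paper leaves them implicit or covers them in its well-definedness paragraph: that $c_p>0$, that every output is a valid strict partition, and that the terminal $L=0$ stage is handled correctly.
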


	\begin{proof}
		The probability of producing a complete \(x\) is the product of its
		conditional coordinate probabilities. By Lemma~\ref{lem:score-mass},
		each conditional probability is
		\(W(x_1,\ldots,x_{r+1})/W(x_1,\ldots,x_r)\), so these ratios telescope:
		\[
		\prod_{r=0}^{k-1}\frac{W(x_1,\dots,x_{r+1})}{W(x_1,\dots,x_r)}
		=\frac{\wt(x)}{W(\varnothing)},
		\]
		where $W(x)=\wt(x)$ since a complete strict partition has no further
		suffix, and $W(\varnothing)$, the total mass of all valid strict
		partitions, is the normalising constant. Hence the
		output law is proportional to $\wt(x)$, and Lemma~\ref{lem:x-hook} identifies
		this with the target shape distribution.
	\end{proof}

	\subsection{From sampled shapes to sampled permutations}\label{sec:tableaux}

	The remaining steps are standard. We use the Robinson--Schensted bijection
	between permutations of \([n]\) and pairs \((P,Q)\) of standard Young tableaux
	of the same shape, together with its inverse reverse-insertion algorithm
	\cite[Theorem~3.1.1 and proof, p.~94]{Sagan2001}. We also use the
	Greene--Nijenhuis--Wilf hook walk, which samples a uniform standard Young
	tableau of any fixed shape \cite{GreeneNijenhuisWilf1979}.

	\begin{lemma}\label{lem:shape-to-perm}
	Suppose that $\lambda\vdash n$ is sampled from
	$\Pp_{n,k}(\lambda)\propto (f^\lambda)^2\,\mathbf 1_{\{\lambda_1=k\}}$.
	Conditional on $\lambda$, sample $P$ and $Q$ independently and uniformly from $\SYT(\lambda)$, and output the inverse Robinson--Schensted image of $(P,Q)$. Then the output is a uniformly random permutation in $\Omega_{n,k}$.
	\end{lemma}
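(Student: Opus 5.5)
The plan is to compute the probability of every output permutation directly and check that it is constant on $\Omega_{n,k}$ and zero off it. The Robinson--Schensted correspondence is a bijection $\pi\mapsto(P(\pi),Q(\pi))$ between $S_n$ and pairs of standard Young tableaux of a common shape $\operatorname{sh}(\pi)\vdash n$. By Schensted's theorem, $\LIS(\pi)$ equals the first row length of that shape. So the inverse map sends pairs of shape $\lambda$ bijectively onto $\{\pi\in S_n:\operatorname{sh}(\pi)=\lambda\}$. Since there are $(f^\lambda)^2$ such pairs, this set has $(f^\lambda)^2$ elements. Taking the disjoint union over $\lambda$ with $\lambda_1=k$ gives
\[
|\Omega_{n,k}|=\sum_{\lambda\vdash n,\ \lambda_1=k}(f^\lambda)^2=:Z_{n,k},
\]
which is positive because the hook shape $(k,1^{n-k})$ contributes.

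Now fix $\pi\in S_n$ and let $\mu=\operatorname{sh}(\pi)$. The procedure outputs $\pi$ exactly when three things happen: it draws $\lambda=\mu$, it draws $P=P(\pi)$, and it draws $Q=Q(\pi)$. This uses injectivity of the inverse map, and the fact that its image on pairs of shape $\lambda$ contains only permutations of shape $\lambda$. If $\mu_1\ne k$, the first event has probability zero, so the output always lies in $\Omega_{n,k}$. If $\mu_1=k$, then $\Pp_{n,k}(\mu)=(f^\mu)^2/Z_{n,k}$. Conditional on $\lambda=\mu$, the independent uniform choices of $P$ and $Q$ hit the required pair with probability $(f^\mu)^{-2}$. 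Multiplying these gives
\[
\Pp(\text{output}=\pi)=\frac{(f^\mu)^2}{Z_{n,k}}\cdot\frac{1}{(f^\mu)^2}=\frac{1}{Z_{n,k}}=\frac{1}{|\Omega_{n,k}|},
\]
independent of $\pi$, which is uniformity on $\Omega_{n,k}$.

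No step here is really an obstacle. The one point that needs care is to use the correspondence as a genuine bijection that preserves shape, so that each $\pi$ has exactly one preimage pair and the factor $(f^\mu)^2$ cancels exactly. This rests on the cited results of Schensted and Sagan (bijectivity, with $\LIS$ equal to the first row length). The lemma assumes exact uniform tableaux and an exact draw of $\lambda$; these are supplied by the Greene--Nijenhuis--Wilf hook walk and by Theorem~\ref{thm:shape-correct}, but the lemma itself does not depend on them.
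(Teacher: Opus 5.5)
Your proof is correct and is essentially the paper's argument: the shape-preserving Robinson--Schensted bijection gives each permutation of shape $\lambda$ probability $\Pp_{n,k}(\lambda)(f^\lambda)^{-2}$, which is constant on the support, and Schensted's theorem identifies that support with $\Omega_{n,k}$. Your explicit check that the normaliser $Z_{n,k}$ equals $|\Omega_{n,k}|$ only makes that common constant visible.
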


	\begin{proof}
	For fixed $\lambda$, Robinson--Schensted identifies the permutations of shape $\lambda$ with $\SYT(\lambda)\times\SYT(\lambda)$. Hence the procedure gives each permutation of shape $\lambda$ probability
	$\Pp_{n,k}(\lambda)\cdot (f^\lambda)^{-2}$,
	which is independent of $\lambda$ on the support $\{\lambda\vdash n:\lambda_1=k\}$. Since Robinson--Schensted sends the LIS length to the first row length \cite{Schensted1961}, this support is exactly $\Omega_{n,k}$.
	\end{proof}

	Both steps cost $O(n^2)$ time. The two Greene--Nijenhuis--Wilf hook walks
	sample $P$ and $Q$ in $O(n^2)$ time. For the inverse Robinson--Schensted step,
	precompute the positions of the labels in the recording tableau. Each
	reverse insertion then traces one bumping path upward through the
	insertion tableau. Implemented by linear scans of the visited rows, it
	examines at most the current number of tableau entries, so the $n$
	reverse insertions cost $O(n^2)$ in total.

	\subsection{Running time of the direct implementation}\label{sec:direct-time}

Throughout the running-time analysis (this section and
Section~\ref{sec:complexity}) we assume $k\ge2$. For $k=1$ the only shape is
$\lambda=(1^n)$, so the sampler is trivial and every stage has $L=0$, invoking no
oracle.
We work in the model of Section~\ref{sec:results} and evaluate the oracle of
Section~\ref{sec:oracle} separately for each candidate. Fix a prefix
$p=(p_1,\dots,p_r)$, write $L:=k-r-1$, and recall
$G_p(a)=(n!/a!)^2\prod_{i=1}^r(p_i-a)^2$ and $\psi_{p,t}(a)=(t-a)^2G_p(a)$. The
residues $\{G_p(a)\bmod\mathfrak p\}_{a\le n}$ are rebuilt for each
sequentially processed prime and stage in $O(nk)$ field operations by one
pass over $a$, a lower-order cost.

\paragraph{Bit lengths and modular arithmetic.}
\begin{lemma}\label{lem:bits-stage}
For every stage with prefix $p$ and suffix length $L$, and every feasible
candidate $t$,
the integer completion score $S_p(t)$ has $O((L+1)n\log n)=O(kn\log n)$ bits.
\end{lemma}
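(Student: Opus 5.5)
The bound follows by estimating $S_p(t)$ crudely as (number of terms) times (largest term). By definition, $S_p(t)=G_p(t)\,Z_{L,s(t)}(E_t,\psi_{p,t})$, and $Z$ is a sum of nonnegative integer summands indexed by $L$-subsets $A\subseteq E_t\subseteq\{1,\dots,n\}$. The number of summands is therefore at most $\binom{n}{L}\le n^L$, which contributes only $O(L\log n)$ bits. All that remains is to bound $\log_2$ of a single summand, multiplied by $G_p(t)$.

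Each factor is bounded separately, using that all values lie in $\{1,\dots,n\}$.

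\begin{itemize}
\item For any $a\le n$ we have $n!/a!\le n^n$. The prefix product satisfies $\prod_{i=1}^r(p_i-a)^2\le n^{2r}$, and $r<k$. Hence $G_p(a)\le n^{2n+2k}$, that is, $O((n+k)\log n)=O(n\log n)$ bits, since $k\le n$.
\item Likewise $\psi_{p,t}(a)=(t-a)^2G_p(a)\le n^{2}\,n^{2n+2k}$.
\item The Vandermonde factor satisfies $\Delta(A)^2\le n^{2\binom{L}{2}}\le n^{L^2}$, which gives $O(L^2\log n)$ bits. Since $L<k\le n$, this is $O(Ln\log n)$.
\end{itemize}

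Multiplying these, one summand times $G_p(t)$ is at most
\[
n^{(L+1)(2n+2k+2)+L^2},
\]
which has $O((L+1)n\log n)$ bits. Adding the $O(L\log n)$ bits from the count of summands leaves the bound $O((L+1)n\log n)$ unchanged. Finally, $L+1\le k$ gives $O(kn\log n)$.

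The only point needing care is the Vandermonde term. A naive bound would give $L^2\log n$ bits, and this must be absorbed using $L\le n$ so that it does not exceed the claimed order. Since $L^2\le Ln$, this is harmless. Everything else is routine: each of the at most $L+1$ non-prefix coordinates contributes $O(n\log n)$ bits through the ratio $n!/a!$ and its interactions with the prefix, which is exactly the claimed $(L+1)n\log n$ scaling.
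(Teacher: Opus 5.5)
Your proof is correct and takes essentially the same route as the paper: bound one summand by separating the $L+1$ factorial factors from the difference factors (each at most $n$), then add $O(L\log n)$ bits for the at most $\binom nL$ summands. The only cosmetic difference is that you fold the prefix interactions into a per-coordinate bound $G_p(a)\le n^{2n+2k}$, whereas the paper counts all $\binom L2+rL+r+L$ difference factors at once using $r+L+1=k\le n$.
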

\begin{proof}
We bound the factorial factors and the difference factors of a single
summand separately, then account for the number of summands.
Every nonzero summand of $S_p(t)$ is
$G_p(t)\,\Delta(A)^2\prod_{a\in A}(t-a)^2G_p(a)$ for an $L$-element
$A\subseteq[t-1]$. The $L+1$ factorial factors contribute at most
$(n!)^{2(L+1)}$, i.e.\ $O((L+1)n\log n)$ bits. The $\binom L2+rL+r+L=O(nL+n)$
difference factors (using $r+L+1=k\le n$), each at most $n$, contribute
$O((L+1)n\log n)$ more. The at most $\binom nL$ terms add $O(L\log n)$. The
bound follows.
\end{proof}

Fix a constant $C$ so that every score and every sum of at most $n$
scores has at most $\beta:=Ckn\log n$ bits, and set
$d_{\max}:=nk+1\le n^2+1$, a bound on the evaluation points used anywhere.
Take distinct $\Theta(\log n)$-bit primes $\mathcal P$, each exceeding
$d_{\max}$, with $\prod_{\mathfrak p\in\mathcal P}\mathfrak p>2^\beta$. Then
$|\mathcal P|=\tilde O(kn)$ suffice, found by a segmented sieve over
$(d_{\max},n^2\log^2n]$ in $\tilde O(n^2)$ running time using $\tilde O(n)$
words of workspace\footnote{A Monte Carlo primality test
would not do as an undetected composite modulus could corrupt the output law.}.
At any stage that needs $d\le d_{\max}$ evaluation points we use the field
elements $\xi_u=u$ for $u=0,1,\ldots,d-1$, which are distinct in every
$\mathbb F_{\mathfrak p}\in\mathcal P$ because $\mathfrak p>d_{\max}\ge d$.

We use the following \textbf{CRT accounting convention}. First analyse the oracle over
one word-size prime field $\mathbb F_{\mathfrak p}$. To obtain exact integer
completion scores, repeat the same computation modulo enough such primes that
their product exceeds the a priori upper bound $2^\beta$ on every score, and then
reconstruct the scores by the Chinese remainder theorem. Since $\beta=O(kn\log n)$ and the primes have $\Theta(\log n)$ bits, obtaining exact integer scores from one-prime computations multiplies the running time by $\tilde O(kn)$ in the word-RAM model.
At a stage with suffix length $L$, Lemma~\ref{lem:bits-stage} bounds every
score, and hence every sum of at most $n$ of them, by $O((L+1)n\log n)$
bits, so the first $\tilde O((L+1)n)$ primes of $\mathcal P$ already
suffice there. The fast implementation of Section~\ref{sec:complexity}
uses this per-stage refinement.
Since each prime is word-size, a single $\mathbb F_{\mathfrak p}$-operation costs
$O(1)$ word operations ($O(\log n)$ for inversion), so a per-prime
field-operation count and the corresponding word-operation count agree up to the
polylogarithmic factors hidden in $\tilde O$.
The reconstruction is exact, so the weighted choices use the true integer
scores, not approximations.

\paragraph{Per-stage cost and the overall running time.}
\begin{lemma}\label{lem:direct-stage-oracle}
If the oracle of Lemma~\ref{lem:oracle} is evaluated separately for each
candidate, all completion scores at a stage with suffix length $L\ge1$ are
computed in $\tilde O(n^4kL^3)$ word operations.
\end{lemma}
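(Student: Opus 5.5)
We bound the cost for a single candidate over a single prime field, then multiply by the number of candidates and by the CRT factor. Fix a stage with suffix length $L\ge1$ and a candidate $t\le B(p)\le n$.

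\textbf{Degree of the determinant.} The polynomial $\det M_t(q)$ has degree at most the largest possible $\sum A$ for $A\subseteq E_t$ with $|A|=L$. This is at most $Lt\le Ln\le kn<d_{\max}$. So $d\le Ln+1$ evaluation points $\xi_u=u$ suffice to recover it by interpolation. These points are distinct in every $\mathbb F_{\mathfrak p}$, since $\mathfrak p>d_{\max}$.

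\textbf{Cost at one evaluation point.} For each $\xi_u$ we form $M_t(\xi_u)$. Its Hankel entries are the $2L-1$ weighted power sums $\sum_{y<t}\psi_{p,t}(y)\xi_u^{y}y^{j}$.
\begin{itemize}
\item The residues $G_p(y)$ are precomputed per stage. The powers $\xi_u^y$ are obtained by successive multiplication. The powers $y^j$ are accumulated incrementally.
\item Hence all $2L-1$ sums cost $O(tL)=O(nL)$ field operations.
\item Gaussian elimination then gives $\det M_t(\xi_u)$ in $O(L^3)$ operations.
\end{itemize}
Over all $d=O(nL)$ points this costs $O(nL(nL+L^3))$. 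Lagrange interpolation over the $d$ points costs $O(d^2)=O(n^2L^2)$. The coefficient $[q^{s(t)}]$ is then read off and multiplied by the residue of $G_p(t)$.

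\textbf{Summing over candidates and primes.} Summing over at most $n$ candidates gives $O(n^3L^2(n+L^2))$ field operations per prime. Since $L<k\le n$, this is $O(n^4L^3)$. Multiplying by the CRT factor $\tilde O(kn)$ would give $\tilde O(n^5kL^3)$, which is one factor of $n$ too many. The per-prime count must therefore be $\tilde O(n^3L^3)$, and saving that factor is the main obstacle.

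\textbf{Removing the extra factor of $n$.} The fix is to share work across candidates. I would first try building the moment sequences for all candidates at once, at a fixed evaluation point. The sums for successive $t$ differ by the factor $(t-y)^2$. Expanding $(t-y)^2=t^2-2ty+y^2$ expresses every candidate's moments through the candidate-independent prefix sums $\sum_{y<t}G_p(y)\xi^y y^{j}$ for $j\le 2L$. A single pass over $y$ gives these prefix sums for all $t$ in $O(nL)$ operations per point. Each candidate then costs only $O(L)$ operations per point to assemble its $2L-1$ entries, followed by the $O(L^3)$ determinant. The resulting per-prime total is
\[
O\bigl(nL\cdot(nL + nL^3)\bigr)\;+\;O(n\cdot n^2L^2)\;=\;O(n^2L^4+n^3L^2).
\]
Here $n^2L^4\le n^3L^3$ because $L\le n$, and $n^3L^2\le n^3L^3$ because $L\ge1$. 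So the per-prime cost is $O(n^3L^3)$.

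\textbf{Final bound.} Multiplying $O(n^3L^3)$ by the CRT factor $\tilde O(kn)$ from the accounting convention, with word-size field operations, gives the claimed $\tilde O(n^4kL^3)$ word operations. The CRT reconstruction of at most $n$ scores, each of $O(kn\log n)$ bits, is of lower order.
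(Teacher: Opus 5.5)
Your argument reaches the right bound by essentially the paper's route: for each candidate, evaluate $\det M_t(q)$ at $Ln+1$ points, take each determinant by Gaussian elimination, interpolate, sum over at most $n$ candidates, and multiply by the $\tilde O(kn)$ CRT factor. However, the ``extra factor of $n$'' you set out to remove comes from an arithmetic slip; it is not a real obstacle.

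Your per-candidate cost is $O(nL(nL+L^3))+O(n^2L^2)=O(n^2L^2+nL^4)$. Summing over at most $n$ candidates gives $O(n^3L^2+n^2L^4)=O(n^2L^2(n+L^2))$, not $O(n^3L^2(n+L^2))$. Since $1\le L\le n$, this is already $O(n^3L^3)$ per prime.

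The prefix-sum sharing across candidates therefore buys nothing asymptotically. It yields the same $O(n^2L^4+n^3L^2)$, because the interpolation term $n\cdot O(d^2)=O(n^3L^2)$ is untouched. It also drifts from the lemma's hypothesis that each candidate's oracle is run separately. You can simply delete that paragraph.

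For comparison, the paper's accounting is cruder than your first pass. It forms $M_t(\xi_u)$ from the rank-one defining sum in $O(nL^2)$ per point and absorbs the $O(L^3)$ elimination into that, getting $O(n^2L^3)$ per candidate. Your use of the anti-diagonal (Hankel) structure to build the matrix in $O(nL)$ per point is a valid small saving, but it is not needed for the stated bound.
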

\begin{proof}
Work over a prime field $\mathbb F_{\mathfrak p}$ with
$\mathfrak p>d_{\max}$. Since $\mathbb F_{\mathfrak p}$ has
$\mathfrak p$ elements, this gives at least $d_{\max}+1$ distinct
evaluation points, enough to interpolate any polynomial of degree at most
$d_{\max}$. Fix a candidate $t$. The
polynomial $D_t(q):=\det M_t(q)$ has degree at most $Ln$, so we
evaluate it at the points $\xi_0,\dots,\xi_{d-1}$ with $d:=Ln+1$ and
interpolate. At one point, forming
$M_t(\xi_u)$ from its defining sum costs $O(nL^2)$ field
operations, and computing its determinant by Gaussian elimination costs
$O(L^3)\subseteq O(nL^2)$ field operations. Hence one candidate costs $O(d\cdot nL^2)=O(n^2L^3)$,
including the $O(d^2)=O(n^2L^2)$ cost of interpolation. Over at most $n$ candidates this is $O(n^3L^3)$ field
operations per prime, hence $\tilde O(n^4kL^3)$ word operations by the CRT
convention.
\end{proof}

We are now in a position to prove Theorem~\ref{thm:intro-general-direct}, the
direct implementation's $\tilde O(n^4k^5)$ running time bound.

\begin{proof}[Proof of Theorem~\ref{thm:intro-general-direct}]
A run has $k$ stages, of suffix lengths $k-1,\dots,1,0$, the terminal $L=0$ stage
being trivial. Summing Lemma~\ref{lem:direct-stage-oracle} over the nontrivial
stages, $\sum_{L=1}^{k-1}\tilde O(n^4kL^3)=\tilde O(n^4k^5)$. The exact weighted
draws use only lower-order arithmetic on $O(kn\log n)$-bit integers. Given the sampled shape, the tableau-sampling and inverse
Robinson--Schensted steps cost $O(n^2)$
(Section~\ref{sec:tableaux}), which is absorbed. By Lemma~\ref{lem:shape-to-perm} the output is uniform on
$\Omega_{n,k}$, with expected running time $\tilde O(n^4k^5)$.
\end{proof}

	\section{Running time of the fast implementation}\label{sec:complexity}

The direct implementation of Section~\ref{sec:direct-time} is wasteful in two
ways. It computes up to $n$ candidate scores at every stage, and for every
candidate and every interpolation point it rebuilds the evaluated matrix and
takes its determinant as a generic dense matrix. The fast implementation
removes both costs, leaving the output law untouched. It draws each coordinate
through the cumulative scores of Lemma~\ref{lem:cum-mass}, so that a stage
needs only $O(\log n)$ oracle queries rather than $n$ scores, and it evaluates
each query through the Hankel structure of the evaluated matrices. The stage
weights $G_p$, the prime set $\mathcal P$, the bit-length bound
(Lemma~\ref{lem:bits-stage}), and the CRT accounting convention of
Section~\ref{sec:direct-time} are reused. Only the realisation of the weighted
draw changes, improving the bound of Theorem~\ref{thm:intro-general-direct} to
that of Theorem~\ref{thm:intro-general-fast}.

\paragraph{From one determinant to the full faster sampler.}
The running time bound is built from the bottom up: a single determinant, then one cumulative-score query, then one sampling stage, then all stages. Each cumulative score is a single coefficient of a polynomial determinant. We recover that coefficient by evaluating the determinant at sufficiently many scalar points and interpolating, working modulo word-size primes and reconstructing the exact integer scores by the CRT. The key structural fact, recorded in the next subsection, is that once the determinant polynomial is evaluated at a scalar the resulting matrix is Hankel.
Lemma~\ref{lem:hankel-primitive} exploits this to compute a single Hankel
determinant in $\tilde O(L)$ field operations, the superfast
primitive that replaces dense elimination. Building on it, Lemma~\ref{lem:fast-hankel-stage} evaluates one cumulative score by batched multipoint evaluation of the Hankel moments, in $\tilde O((L+1)^2T)$ field operations per prime. Lemma~\ref{lem:fast-shape-time} then implements each stage by a binary search over $O(\log n)$ such queries and sums over the stages, bounding the expected running time of the shape sampler by $\tilde O(n^2k^4)$. The
proof of Theorem~\ref{thm:intro-general-fast} introduces the tableau-sampling
and inverse Robinson--Schensted steps while verifying that the output law is
unchanged.

\subsection{Hankel structure of the evaluated matrices}

Fix a prefix $p$ of length $r$ and recall $L=k-r-1$, the number of
	coordinates that remain to be chosen. Two parameters recur throughout this
	subsection. The query point $T$ is the cutoff in the cumulative score $C_p(T)$
	requested by the binary search, and the evaluation point $\xi$ is a field
	element substituted for the variable $q$ in the determinant polynomial.
	By Lemma~\ref{lem:oracle}, applied with ground set
	$\{1,\dots,T\}$ and weight $G_p$,
	\[
	C_p(T)=[q^{R}]\det M_{L+1,\{1,\dots,T\},G_p}(q),
	\]
	and at the scalar $\xi$ the matrix becomes
	\[
	M_{L+1,\{1,\dots,T\},G_p}(\xi)
	=
	\sum_{a\le T}G_p(a)\xi^a\,v(a)v(a)^\top,
	\qquad v(a)=(1,a,\dots,a^{L})^\top.
	\]
	Using zero-based indices $0\le i,j\le L$, its $(i,j)$ entry is
	$\sum_{a\le T}G_p(a)\xi^a a^{i+j}$, which depends only on $i+j$, so the
	evaluated matrix is Hankel, being determined by the $2L+1$ anti-diagonal
	quantities for $i+j=0,1,\ldots,2L$, which we call moments.

	\begin{lemma}[Superfast Hankel determinants]
		\label{lem:hankel-primitive}
		Let $F$ be a field. Given elements $h_0,\ldots,h_{2L-2}\in F$, form the
		$L\times L$ Hankel matrix $H=(h_{i+j})_{0\le i,j<L}$.
		Then $\det H$ can be computed in $\tilde O(L)$ arithmetic
		operations in $F$, without assuming that $H$ is nonsingular or that any leading principal Hankel minor is nonzero.
	\end{lemma}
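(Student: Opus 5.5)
Our plan is to reduce the Hankel determinant to a gcd-type computation with polynomials, via the classical link between Hankel matrices, Padé approximation and the extended Euclidean algorithm. This is the approach of Brent--Gustavson--Yun, as refined for the singular case in the literature on structured matrices, e.g.\ Pan's \emph{Structured Matrices and Polynomials} and von zur Gathen--Gerhard. Set $h(z):=\sum_{j=0}^{2L-2}h_jz^{2L-2-j}$, or equivalently work with the reversed generating series $\sum_j h_j z^{-j-1}$. Run the extended Euclidean algorithm on the pair $(z^{2L-1},h(z))$. The key structural fact is that the degrees of the successive remainders record exactly which leading principal Hankel minors $\det(h_{i+j})_{0\le i,j<m}$ are nonzero. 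When a degree drop skips an index $m$, the minor of that order is zero. The nonzero minors are given by explicit products of leading coefficients of the remainders, raised to powers determined by the degree sequence and multiplied by a sign depending only on the degree sequence. This is the subresultant/Sylvester identity specialised to Hankel matrices: the Hankel matrix of size $m$ is, up to row reversal, a block of the Sylvester matrix of $(z^{2L-1},h)$, so the $m$-th principal subresultant coefficient equals $\pm\det H_m$.

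We would carry out three steps.
\begin{enumerate}[label=(\roman*)]
\item State and justify the identity $\det H = \pm\,\mathrm{sres}_{L-1}(z^{2L-1},h)$, or the corresponding subresultant index, with an explicit sign. We would do this by writing the Sylvester matrix and eliminating the rows coming from the monomial $z^{2L-1}$, which is trivial because that polynomial has a single term.
\item Invoke the fundamental theorem of subresultants, in the form used by the half-gcd literature. It says that every principal subresultant coefficient is either zero or equal to a product of powers of leading coefficients of the Euclidean remainder sequence. The zero case occurs precisely when the corresponding degree is skipped.
\item Compute the full remainder degree sequence and all leading coefficients with the fast (half-)gcd algorithm in $O(M(L)\log L)=\tilde O(L)$ field operations. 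The product formula then costs $O(L\log L)$ further operations by repeated squaring.
\end{enumerate}

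Crucially, the Euclidean algorithm over a field never divides by zero. Division is only ever by a leading coefficient, and that coefficient is nonzero by definition. So no genericity or nonsingularity assumption on the leading minors is needed. If the sequence terminates before reaching degree index $L-1$, or skips it, we output $\det H=0$.

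The main obstacle is step (ii) together with bookkeeping in the fast algorithm. Standard half-gcd returns only the quotient sequence implicitly, through $2\times2$ transition matrices, rather than every remainder. We must therefore ensure that the leading coefficients of all remainders, and the degree sequence, can be extracted within $\tilde O(L)$. We would handle this by having the half-gcd recursion also output the list of quotients: their total degree is at most $2L$. From each quotient and the leading coefficient of its divisor we recover the next leading coefficient, since $\mathrm{lc}(r_{i-1})=\mathrm{lc}(q_i)\mathrm{lc}(r_i)$. Thus all leading coefficients follow by a linear pass from $\mathrm{lc}(z^{2L-1})=1$. Getting the signs and exponents in the subresultant product formula right is routine but fiddly, and we would check it on the $L=1,2$ cases. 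Alternatively, we could cite the result directly as a known theorem over arbitrary fields, since it is classical.
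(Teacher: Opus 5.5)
Your proof is correct, but it takes a different route from the paper. The paper never derives the Hankel--Euclid connection. It passes the polynomial $h(z)=\sum_j h_jz^j$, viewed as the rational series $h/1$, to the Liu--Xin--Zhang Hankel-determinant algorithm as a black box. Its proof then consists of checking that algorithm's hypotheses: lowest terms, which is trivial with denominator $1$, and validity over any field, since only field arithmetic and half-GCD are used. It also handles the clamped output at order $\deg h+1$, beyond which the last column of $H$ vanishes.

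You instead prove the reduction yourself. Write $\tilde h$ for your reversed polynomial and $A=z^{2L-1}$. The $A$-rows of the subresultant matrix form an identity block, and what remains is exactly $H$ with its rows reversed. Hence $\mathrm{sres}_{L-1}(z^{2L-1},\tilde h)=(-1)^{\binom{L}{2}}\det H$ in the usual convention. The gap between formal and actual degree of $\tilde h$ is harmless because $A$ is monic, and if $\deg\tilde h<L-1$ the first row of $H$ is zero. The fundamental theorem of subresultants and the fast extended Euclidean algorithm, both textbook results over an arbitrary field (e.g.\ von zur Gathen--Gerhard), then finish the argument.

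Your route buys three things: independence from a recent result stated over $\mathbb C$; a deterministic treatment of vanishing minors, read off from the degree sequence with no preconditioning; and all leading principal minors of $H$ (namely $\pm\mathrm{sres}_{2L-1-m}$ for $m\le L$) at no extra cost. Its price is sign bookkeeping that the black box hides. The interpolation downstream needs the exact value of $\det H$, not $\pm\det H$, so checking $L=1,2$ is not enough. You should fix the sign in (i) as above and use the explicit sign exponent from the standard statement of the fundamental theorem. Both depend only on $L$ and the computed degree sequence, so this is a matter of care rather than a gap.
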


\begin{proof}
	Let
	\[
	h(z)=h_0+h_1z+\cdots+h_{2L-2}z^{2L-2}.
	\]
	For $j\le L$, the order-$j$ Hankel determinant
	\[
	\mathcal H_j(h):=\det(h_{a+b})_{0\le a,b<j}
	\]
	uses only the coefficients $h_0,\ldots,h_{2j-2}$, since the largest index
	appearing in the matrix is $2j-2$. In particular,
	$\mathcal H_1(h),\ldots,\mathcal H_L(h)$ are completely determined by the $2L-1$ coefficients
	$h_0,\ldots,h_{2L-2}$. 
	
	We invoke the algorithm of Liu, Xin and Zhang \cite{liu2025n}. First we note that their
	main result is not stated in exactly the form we need. Its input is a generating function or
	\emph{rational} power series, that is, a power series expressible as a ratio
	$N(z)/D(z)$ of two polynomials with $D(0)\neq 0$, taken in lowest terms
	($\gcd(N,D)=1$); it is phrased over $\mathbb C$; and it returns an entire
	initial determinant sequence rather than one determinant. We reconcile these
	three points in turn. None requires anything beyond what their proof already
	establishes.

	\emph{A polynomial is already a rational series.}\enspace We feed the
	polynomial $h(z)$ above, that is, the rational power series $N/D$ with $N=h$ and
	$D=1$. The coprimality hypothesis $\gcd(N,D)=1$ then holds
	trivially, since $D=1$. As $D=1$ the series is genuinely the
	polynomial $h$, so exactly the $2L-1$ prescribed entries $h_0,\ldots,h_{2L-2}$
	enter the computation and no further coefficients are implicitly assumed.

	\emph{Only field arithmetic is used.}\enspace Their main algorithm for computing Hankel determinants (their Algorithm~3.5) and the results it rests on use only field arithmetic and
	Euclidean division of polynomials, organised through the half-GCD. They invoke
	no property of $\mathbb C$ beyond that it is a field\footnote{Their Section~3.2, on
	signatures and real-rootedness, does specialise to $\mathbb R$, but it is no
	part of the determinant computation and we do not use it.}. Hence the algorithm,
	and its $O(L\log^2 L)$ field-operation bound, apply verbatim over any field
	$F$, in particular over the prime fields $\mathbb F_{\mathfrak p}$ over which
	our oracle is evaluated (Section~\ref{sec:direct-time}).

	\emph{One determinant, with vanishing minors allowed.}\enspace If
	$h_0=\cdots=h_{2L-2}=0$ then $h=0$, $H$ is the zero matrix, and $\det H=0$. So,
	assume $h\neq0$ and write $\delta:=\deg h\le 2L-2$. With $D=1$, their Theorem~3.6 truncates the Hankel
	determinant sequence at order $d=\max(\deg D,\deg N+1)$, which here is $\delta+1$. Run their algorithm with parameter $L$. It returns the order-$j$ Hankel
	determinants $\mathcal H_1(h),\mathcal H_2(h),\ldots,\mathcal H_{\min(L,d)}(h)$, from order $1$ up to the order
	$\min(L,d)$ at which the output is clamped. Two cases arise according to this
	upper limit. If $L\le d$, the clamp is inactive and the last returned term is
	exactly
	\[
	\mathcal H_L(h)=\det(h_{i+j})_{0\le i,j<L}=\det H,
	\]
	as required. If instead $L>d$, then every entry in the last column of $H$
	has index at least $L-1>\delta$. The last column therefore vanishes, so
	$\det H=\mathcal H_L(h)=0$. Either way $\det H$ is obtained.
\end{proof}

Classical superfast Toeplitz methods go back to
\cite{BrentGustavsonYun1980}, and singular Toeplitz-like and Hankel-like
algorithms are treated in
\cite{Pan2000StructuredMatrices,PanZhengAbuTabanjehChenProvidence1999}.
Those methods could also be used after converting \(H\) to a Toeplitz matrix.
The singular case then proceeds through randomised generic-rank-profile
preconditioning and certification. We use \cite{liu2025n} because it applies
directly to the Hankel moment sequence and avoids this extra bookkeeping.

	We now bound the cost of one cumulative-score query.

	\begin{lemma}[Cumulative-score oracle]
		\label{lem:fast-hankel-stage}
		Fix a stage with prefix \(p\) and suffix length \(L\ge 0\), and a query
		point \(1\le T\le B(p)\). Given the residues
		\(\{G_p(a)\bmod\mathfrak p:1\le a\le n\}\), the residue of \(C_p(T)\)
		modulo one prime \(\mathfrak p\in\mathcal P\) is computable in
		\(\tilde O((L+1)^2T)\) field operations, and the exact integer
		\(C_p(T)\) in \(\tilde O((L+1)^3nT)\) word operations.
	\end{lemma}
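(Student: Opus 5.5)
We prove the statement by interpolation. We compute the polynomial $D(q):=\det M_{L+1,\{1,\dots,T\},G_p}(q)$ modulo $\mathfrak p$ at enough scalar points, recover its coefficient vector, and read off $[q^R]D(q)=C_p(T)$, which is justified by Lemma~\ref{lem:oracle}. Each entry of $M$ is a polynomial in $q$ of degree at most $T$. The determinant is therefore a sum of $(L+1)$-fold products and has degree at most $(L+1)T$. Put $d:=(L+1)T+1\le d_{\max}$, using $T\le n$ and $L+1\le k$, and take the evaluation points $\xi_u=u$ for $0\le u<d$, which are distinct in $\mathbb F_{\mathfrak p}$. For $L=0$ the matrix is $1\times1$ and the same argument applies trivially.

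\emph{Step 1: moments at all points.} For each $0\le m\le 2L$ we need $\mu_m(\xi):=\sum_{a\le T}G_p(a)a^m\xi^a$ at all $d$ points. For fixed $m$ this is the evaluation of the polynomial $P_m(q)=\sum_{a\le T}(G_p(a)a^m\bmod\mathfrak p)\,q^a$ of degree $T$. We form its coefficients in $O(T)$ operations, since $a^m$ is updated incrementally in $m$. We then evaluate it at $d$ points by fast multipoint evaluation. To do so, split the $d=(L+1)T+1$ points into $O(L+1)$ blocks of $\le T+1$ points and run subproduct-tree evaluation on each, at a cost of $\tilde O(T)$ per block. This costs $\tilde O((L+1)T)$ per moment index, and $\tilde O((L+1)^2T)$ over all $2L+1$ indices.

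\emph{Step 2: determinants.} At each point $\xi_u$ the evaluated matrix is the $(L+1)\times(L+1)$ Hankel matrix $(\mu_{i+j}(\xi_u))$. Lemma~\ref{lem:hankel-primitive} computes its determinant in $\tilde O(L+1)$ operations, with no genericity assumption needed. Over $d$ points this costs $\tilde O((L+1)^2T)$.

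\emph{Step 3: interpolation.} Fast interpolation from $d$ values costs $\tilde O(d)=\tilde O((L+1)T)$ and yields every coefficient of $D$ modulo $\mathfrak p$. If $R>(L+1)T$, the coefficient is zero. This gives the one-prime bound $\tilde O((L+1)^2T)$.

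\emph{Step 4: exact integers.} $C_p(T)$ is a sum of at most $n$ scores by Lemma~\ref{lem:cum-mass}. By Lemma~\ref{lem:bits-stage} and the per-stage CRT refinement, $\tilde O((L+1)n)$ primes of $\mathcal P$ therefore suffice. Chinese remaindering the residues costs quasi-linear time in the bit length $O((L+1)n\log n)$, which is lower order. Multiplying the one-prime cost by $\tilde O((L+1)n)$ gives $\tilde O((L+1)^3nT)$ word operations.

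The main obstacle I expect is Step 1. Evaluating each moment naively at every point would cost $O(dT)=O((L+1)T^2)$, which overshoots the target. The block trick, evaluating a degree-$T$ polynomial at more than $T$ points in chunks, is what keeps the cost quasi-linear in $d$. One must also check that the residues of $G_p$, which are given as input, and the points $u<d\le d_{\max}<\mathfrak p$ make all steps valid over $\mathbb F_{\mathfrak p}$. Fast multipoint evaluation and interpolation only need distinct points in a field, so this holds.
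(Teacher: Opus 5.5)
Your proposal is correct and follows essentially the same route as the paper: interpolate $\det M_{L+1,\{1,\dots,T\},G_p}(q)$ from $d=O((L+1)T)$ evaluations, compute the $2L+1$ Hankel moments by fast multipoint evaluation and each evaluated determinant by Lemma~\ref{lem:hankel-primitive}, then reconstruct the exact integer by CRT over $\tilde O((L+1)n)$ primes. The differences are cosmetic: you use the cruder degree bound $(L+1)T$ instead of $(L+1)T-\binom{L+1}{2}$, you spell out the blocking behind the $\tilde O(T+d)$ multipoint-evaluation cost, and you omit the $T<L+1$ shortcut, which is harmless because the determinant then vanishes identically.
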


	\begin{proof}
		If \(T<L+1\) there is no \((L+1)\)-element subset of
		\(\{1,\dots,T\}\) and \(C_p(T)=0\). Assume \(T\ge L+1\) and write
		\(\widehat D_T(q):=\det M_{L+1,\{1,\dots,T\},G_p}(q)\). Every monomial
		\(q^{\sum A}\) contributing to \(\widehat D_T\) has
		\(\sum A\le T+(T-1)+\cdots+(T-L)=(L+1)T-\binom{L+1}2\), so
		\[
		d:=(L+1)T-\binom{L+1}2+1=O\bigl((L+1)T\bigr)\le nk+1=d_{\max}
		\]
		evaluation points \(\xi_0,\dots,\xi_{d-1}\) determine \(\widehat D_T\),
		and they are distinct in \(\mathbb F_{\mathfrak p}\) since
		\(\mathfrak p>d_{\max}\).

		\emph{Batched moment evaluation.}\enspace Define the \(2L+1\) moment
		polynomials
		\[
		h_\ell(q):=\sum_{a=1}^T a^\ell G_p(a)\,q^a,
		\qquad 0\le\ell\le 2L,
		\]
		so that \(M_{L+1,\{1,\dots,T\},G_p}(\xi)=(h_{i+j}(\xi))_{0\le i,j\le L}\).
		Their coefficient arrays are built in \(O((L+1)T)\) field operations,
		keeping the powers \(a^0,\dots,a^{2L}\) incrementally. Fast multipoint
		evaluation of one degree-\(T\) polynomial at the \(d\) points costs
		\(\tilde O(T+d)=\tilde O((L+1)T)\) field operations
		\cite[Chapter~10]{vonZurGathenGerhard2013}, so evaluating all
		\(2L+1\) moments at all \(d\) points costs \(\tilde O((L+1)^2T)\).

		\emph{Determinants and interpolation.}\enspace At each point \(\xi_u\)
		the evaluated matrix is Hankel with moments
		\(h_0(\xi_u),\dots,h_{2L}(\xi_u)\), so
		Lemma~\ref{lem:hankel-primitive} computes \(\widehat D_T(\xi_u)\) in
		\(\tilde O(L+1)\) field operations, another \(\tilde O((L+1)^2T)\) in
		total. By the same standard algorithms, fast interpolation recovers the
		coefficients of \(\widehat D_T\) from the \(d\) values in
		\(\tilde O(d)=\tilde O((L+1)T)\) field operations, and
		\(C_p(T)=[q^{R}]\widehat D_T(q)\) is read off, being zero when
		\(R\ge d\).

		One prime therefore costs \(\tilde O((L+1)^2T)\) field operations. By
		Lemma~\ref{lem:bits-stage}, \(C_p(T)\), a sum of at most \(n\) scores,
		has \(O((L+1)n\log n)\) bits, so the per-stage refinement of the CRT
		convention reconstructs the exact integer from the first
		\(\tilde O((L+1)n)\) primes of \(\mathcal P\), in
		\(\tilde O((L+1)^2T)\cdot\tilde O((L+1)n)=\tilde O((L+1)^3nT)\) word
		operations.
	\end{proof}

	\begin{lemma}[Fast shape-sampling time]\label{lem:fast-shape-time}
		The exact shape sampler has expected running time \(\tilde O(n^2k^4)\).
	\end{lemma}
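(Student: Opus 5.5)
The bound follows by charging each stage of the sampler to a small number of calls of Lemma~\ref{lem:fast-hankel-stage} and summing over the $k$ stages. Fix a stage with prefix $p$ of length $r$ and suffix length $L=k-r-1$; as in Section~\ref{sec:direct-time}, the terminal $L=0$ stage still needs a draw, while $k=1$ is trivial.

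Within a stage I would proceed as follows.

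\begin{enumerate}
\item Rebuild the residues $\{G_p(a)\bmod\mathfrak p\}_{a\le n}$ for each of the $\tilde O((L+1)n)$ primes used at this stage. This costs $O(nk)$ field operations per prime, hence $\tilde O(n^2k(L+1))$ word operations per stage.
\item Compute $W=C_p(B(p))$.
\item Draw $U$ uniformly from $\{1,\dots,W\}$ by the exact integer procedure of Section~\ref{sec:results}. Since $W$ has $O(kn\log n)$ bits, the expected cost is $\tilde O(kn)$.
\item Binary-search the least $T\le B(p)$ with $C_p(T)\ge U$. This takes $O(\log n)$ queries $C_p(T)$, each with $T\le B(p)\le n$.
\end{enumerate}

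By Lemma~\ref{lem:cum-mass}, $C_p$ is nondecreasing and its increments are the scores $S_p(T)$. So the returned $T$ has probability exactly $S_p(T)/W$, the conditional law of Algorithm~\ref{alg:shape-sampler}. The output law is therefore that of Theorem~\ref{thm:shape-correct}, and only the running time needs to be bounded.

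For the cost of a stage, Lemma~\ref{lem:fast-hankel-stage} with $T\le n$ gives $\tilde O((L+1)^3n\cdot n)=\tilde O(n^2(L+1)^3)$ word operations per query. The $O(\log n)$ queries contribute the same bound up to a logarithmic factor. Comparing the $U$ values against the reconstructed $C_p(T)$ costs $\tilde O(kn)$ per comparison, which is lower order. The residue rebuilding costs $\tilde O(n^2k(L+1))\le\tilde O(n^2(L+1)k)$. Summed over stages this is $\tilde O(n^2k^3)$, dominated by the query term.

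Summing over $L=0,\dots,k-1$ gives
\[
\sum_{L=0}^{k-1}\tilde O\bigl(n^2(L+1)^3\bigr)=\tilde O(n^2k^4).
\]
The prime search by segmented sieve costs $\tilde O(n^2)$ once, which is absorbed.

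Randomness enters only through the rejection step in drawing $U$, whose expected number of trials is at most $2$. This is why the bound is an expected one. The costs are otherwise deterministic given the draws, so linearity of expectation finishes the argument.

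The main care point is the CRT accounting. Exact comparisons in the binary search need integer values of $C_p(T)$, not residues. Lemma~\ref{lem:fast-hankel-stage} already delivers exact integers within the stated cost, using the per-stage prime count $\tilde O((L+1)n)$ rather than $\tilde O(kn)$. This per-stage refinement is what yields $(L+1)^3$ instead of $k(L+1)^2$ and hence the $k^4$ total rather than something larger. I would verify that the CRT reconstruction cost itself, $\tilde O(((L+1)n)^2)$ naively per query or quasi-linear with fast CRT, stays within $\tilde O(n^2(L+1)^3)$. It does, since $((L+1)n)^2\le n^2(L+1)^3$.
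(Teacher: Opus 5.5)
Your proposal is correct and follows the paper's proof essentially step for step: compute $W=C_p(B(p))$, draw $U$, binary-search with $O(\log n)$ calls to Lemma~\ref{lem:fast-hankel-stage} at $\tilde O(n^2(L+1)^3)$ each, add the $\tilde O((L+1)kn^2)$ residue preparation, and sum over $L=0,\dots,k-1$. One side remark is wrong, though it does not affect the bound: the per-stage CRT refinement does not secure the $k^4$ exponent, since using all $\tilde O(kn)$ primes at every stage also gives $\sum_{L=0}^{k-1}k(L+1)^2n^2=\Theta(n^2k^4)$, so the refinement saves only a constant factor.
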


	\begin{proof}
		Consider a stage with prefix \(p\) and suffix length \(L\). As
		described after Lemma~\ref{lem:cum-mass}, the stage computes
		\(W=C_p(B(p))\), draws a uniform \(U\in\{1,\dots,W\}\), and
		binary-searches the least \(T\le B(p)\) with \(C_p(T)\ge U\), which
		realises the weighted draw of Algorithm~\ref{alg:shape-sampler}
		exactly. This takes \(O(\log n)\) queries, each costing
		\(\tilde O((L+1)^3nT)\subseteq\tilde O(n^2(L+1)^3)\) word operations by
		Lemma~\ref{lem:fast-hankel-stage} and \(T\le B(p)\le n\), plus lower-order
		comparisons of \(O((L+1)n\log n)\)-bit integers. Preparing the
		residues \(\{G_p(a)\bmod\mathfrak p\}\) across the
		\(\tilde O((L+1)n)\) primes of the stage costs
		\(O(nk)\) field operations per prime, hence
		\(\tilde O((L+1)kn^2)\) word operations per stage. Summing over the
		stages \(L=k-1,\dots,0\),
		\[
		\sum_{L=0}^{k-1}\tilde O\bigl(n^2(L+1)^3+(L+1)kn^2\bigr)
		=\tilde O(n^2k^4).\qedhere
		\]
	\end{proof}

	We now prove Theorem~\ref{thm:intro-general-fast}, the fast
	implementation's $\tilde O(n^2k^4)$ running time bound.

	\begin{proof}[Proof of Theorem~\ref{thm:intro-general-fast}]
		The sampling law is the same as in Theorem~\ref{thm:intro-general-direct}. Only the realisation of the weighted draw changes. By
		Lemma~\ref{lem:fast-shape-time}, the shape distribution is sampled in expected
		\(\tilde O(n^2k^4)\) word operations. The subsequent tableau-sampling choices
		have expected \(O(n^2)\) running time, and the inverse Robinson--Schensted
		step is deterministic and takes \(O(n^2)\) word operations. Exactness is
		unchanged, since by Lemma~\ref{lem:cum-mass} the binary search over
		exact integer cumulative scores realises the same weighted draw as
		Algorithm~\ref{alg:shape-sampler}.
	\end{proof}

	\smallskip
	\noindent\textbf{Dispensing with the superfast primitive.}\enspace
	The superfast primitive can be dispensed with at a polynomial cost.
	Taking each evaluated Hankel determinant by plain Gaussian elimination
	($O((L+1)^3)$) instead of Lemma~\ref{lem:hankel-primitive} makes a query
	cost $\tilde O((L+1)^4T)$ field operations per prime, giving
	$\tilde O(n^2k^6)$ word operations for the full sampler. This is never asymptotically larger than the
	direct bound, while using no structured-matrix primitive.
	\smallskip

	\noindent\textbf{Space usage.}\enspace
		The peak space usage of the fast implementation is $\tilde O(kn)$
		words. Consider one query. Per prime, the moment polynomials occupy
		$O((L+1)T)$ field elements, and the $d$ evaluation points are
		processed in blocks of $\Theta(T)$. On each block, the $2L+1$ moments
		are evaluated, the $\Theta(T)$ determinants are computed, and the
		moment values are discarded, retaining only the $O(d)=O((L+1)T)$
		determinant values needed for interpolation. The workspace per prime
		is therefore $\tilde O((L+1)T)\le\tilde O(kn)$ field elements, reused
		across the sequentially processed primes. The CRT computation retains
		one residue for each of the $\tilde O((L+1)n)$ primes used at the
		stage, and the reconstructed integers $W$ and $C_p(T)$ and the drawn
		integer $U$ each occupy $O((L+1)n\log n)$ bits. The peak is
		$\tilde O(kn)$ words in all. The direct implementation satisfies the
		same bound. It computes candidate scores one at a time, retaining only
		the interpolation data for the current candidate. A first pass computes
		the total score, and after drawing the target integer, a second pass
		locates it by accumulating the candidate scores.
	\smallskip

	\section{A direct sampler when \texorpdfstring{$k \in \Theta(n)$}{k in Theta(n)}}
	\label{sec:direct}

	In the linear regime $k\in\Theta(n)$ a much simpler and near-optimal sampler is
	available.  It works directly with permutations and uses rejection sampling on
	an expanded space of pairs $(\pi,I)$, where $I$ is a $k$-term increasing
	subsequence of $\pi$.  The acceptance rule is designed so that, for each
	permutation with $\LIS=k$, exactly one associated pair is accepted, namely the pair
	for which $I$ is the leftmost LIS.  The unique accepted witness ensures
	uniformity, and for $k\in\Theta(n)$ the acceptance probability turns out
	to be bounded below by a positive constant.  We summarise the construction here. The
	full analysis is in Section~\ref{sec:linear}.  Throughout let $m:=n-k$.  The
	construction and its exactness are valid for every $1\le k\le n$. The constant
	acceptance-probability bound, and hence the time bound, is for $k\in\Theta(n)$.

	Define the expanded proposal space
	\[
	\mathcal E_{n,k}:=
	\{(\pi,I): \pi\in S_n,\;
	I=(i_1<\cdots<i_k),\;
	\pi(i_1)<\cdots<\pi(i_k)\},
	\]
	so $I$ is a list of positions whose entries form an increasing subsequence of
	$\pi$ of length $k$.  We sample $(\pi,I)\in\mathcal E_{n,k}$ as follows.
	Choose a $k$-subset $V\subseteq[n]$ of values and a $k$-subset $I\subseteq[n]$
	of positions, both uniformly. Place the values of $V$ in increasing order into
	the positions $I$ and fill the remaining $m$ positions with the remaining $m$
	values in uniformly random order.

	\begin{proposition}\label{prop:proposal}
		The construction above is uniform on $\mathcal E_{n,k}$. In particular,
		$|\mathcal E_{n,k}|=\binom{n}{k}^{2}m!$.
	\end{proposition}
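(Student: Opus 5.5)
The plan is to show that the construction is a bijection from its random choices onto $\mathcal E_{n,k}$. The random input is a triple $(V,I,\sigma)$. Here $V$ is a $k$-subset of values, $I$ is a $k$-subset of positions, and $\sigma$ is a bijection from the $m$ positions $[n]\setminus I$ to the $m$ values $[n]\setminus V$, equivalently an ordering of the leftover values along the leftover positions. These three components are chosen independently and uniformly. So the triple is uniform on a set of size $\binom nk\binom nk m!$. It therefore suffices to show that the map $\Phi:(V,I,\sigma)\mapsto(\pi,I)$ is a bijection onto $\mathcal E_{n,k}$. Uniformity and the cardinality formula then follow together.

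\emph{Well-defined.} The map lands in $\mathcal E_{n,k}$. The permutation $\pi$ places the $k$ values of $V$ in increasing order at $i_1<\cdots<i_k$ and fills the other positions bijectively with the other values, so $\pi\in S_n$. By construction $\pi(i_1)<\cdots<\pi(i_k)$.

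\emph{Injective.} Given the output $(\pi,I)$, we recover the input. First, $V=\pi(I)$ as a set. Second, $\sigma$ is the restriction of $\pi$ to $[n]\setminus I$. Hence $\Phi$ has a left inverse.

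\emph{Surjective.} Take any $(\pi,I)\in\mathcal E_{n,k}$. Set $V:=\pi(I)$ and $\sigma:=\pi|_{[n]\setminus I}$. Because the entries of $\pi$ on $I$ are increasing, the unique increasing placement of $V$ into $I$ reproduces $\pi$ on $I$. Hence $\Phi(V,I,\sigma)=(\pi,I)$.

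The only point needing care is this last step. The increasing condition in the definition of $\mathcal E_{n,k}$ is exactly what makes the placement of $V$ into $I$ forced. As a result, each element of $\mathcal E_{n,k}$ has exactly one preimage rather than $k!$ of them. There is one further subtlety: the same $\pi$ may appear with several different $I$. This is harmless, because the target space consists of pairs $(\pi,I)$, not of permutations alone. With the bijection in hand, every pair has probability $\bigl(\binom nk^2m!\bigr)^{-1}$, which gives $|\mathcal E_{n,k}|=\binom nk^2m!$.
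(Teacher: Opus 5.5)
Your proposal is correct and follows essentially the same route as the paper: both show that the proposal data (value set $V$, position set $I$, and an arrangement of the remaining $m$ values) are in bijection with $\mathcal E_{n,k}$. In both, the increasing condition forces the placement on $I$, and $V$ and the arrangement are read back off from $\pi$. Uniformity and the count $\binom nk^2 m!$ then follow at once.
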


	For a permutation $\pi$ with $\LIS(\pi)=k$, let $L^\star(\pi)$ denote the
	leftmost LIS of $\pi$, that is, among all increasing subsequences of length $k$, the one
	whose list of positions is lexicographically smallest.  This is a canonical
	choice of LIS, computable in $O(n\log\log n)$ time
	(Section~\ref{sec:linear}).

	\begin{algorithm}[h]
		\caption{Direct rejection sampler}
		\label{alg:direct-sampler}
		\begin{algorithmic}[1]
			\Require Integers $n,k$ with $1 \le k \le n$
			\Loop
				\State Sample $(\pi,I)$ uniformly from $\mathcal E_{n,k}$.
				\State Compute \(\ell\gets \LIS(\pi)\).
				\If{\(\ell=k\)}
					\State Compute $L^\star(\pi)$.
					\If{$I=L^\star(\pi)$}
						\State \Return $\pi$.
					\EndIf
				\EndIf
			\EndLoop
		\end{algorithmic}
	\end{algorithm}

	\begin{proposition}\label{prop:direct-exact}
		Conditioned on acceptance in one iteration of Algorithm~\ref{alg:direct-sampler}, the output permutation is uniform on $\Omega_{n,k}$.
	\end{proposition}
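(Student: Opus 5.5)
The argument computes the probability that one iteration returns a given permutation and checks that it does not depend on which element of $\Omega_{n,k}$ is chosen. By Proposition~\ref{prop:proposal}, one iteration draws $(\pi,I)$ uniformly from $\mathcal E_{n,k}$. Each pair therefore has probability exactly $1/|\mathcal E_{n,k}|=1/\bigl(\binom nk^2 m!\bigr)$.

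Fix $\sigma\in S_n$. The iteration returns $\sigma$ exactly when the proposed pair is $(\sigma,I)$ with $\LIS(\sigma)=k$ and $I=L^\star(\sigma)$.

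\begin{itemize}
\item If $\LIS(\sigma)\ne k$, no proposed pair with first coordinate $\sigma$ is accepted, so $\sigma$ is never returned.
\item If $\LIS(\sigma)=k$, then $L^\star(\sigma)$ is well defined and unique. It is the lexicographically smallest position list among the finitely many, and nonempty, increasing subsequences of length $k$. Since it is an increasing subsequence of length $k$, the pair $(\sigma,L^\star(\sigma))$ lies in $\mathcal E_{n,k}$. It is the only accepted pair with first coordinate $\sigma$.
\end{itemize}

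Hence
\[
\Pp(\text{return }\sigma\text{ in one iteration})=\frac{\mathbf 1_{\{\sigma\in\Omega_{n,k}\}}}{|\mathcal E_{n,k}|}.
\]

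Summing over $\sigma$ gives the acceptance probability $|\Omega_{n,k}|/|\mathcal E_{n,k}|$. This is positive because, for example, the identity on the first $k$ values followed by a decreasing block shows $\Omega_{n,k}\neq\varnothing$. Dividing the displayed probability by the acceptance probability gives the conditional probability $1/|\Omega_{n,k}|$ for each $\sigma\in\Omega_{n,k}$, which is the uniform law.

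The only point needing care is the uniqueness and membership of the canonical witness: $L^\star(\sigma)$ must be a deterministic function of $\sigma$ alone, which the lexicographic tie-breaking ensures. Since the loop restarts with fresh independent randomness, the output of the full algorithm has the same conditional law.
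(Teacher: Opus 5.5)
Your argument is correct and is essentially the paper's proof: uniformity of the proposal on $\mathcal E_{n,k}$ together with the fact that $(\sigma,L^\star(\sigma))$ is the unique accepted pair above each $\sigma\in\Omega_{n,k}$, which you phrase as the explicit computation $\Pp(\text{return }\sigma)=\mathbf 1_{\{\sigma\in\Omega_{n,k}\}}/|\mathcal E_{n,k}|$ rather than as a bijection $\mathcal A_{n,k}\to\Omega_{n,k}$. One slip is in your positivity remark: the permutation $(1,\dots,k,n,n-1,\dots,k+1)$ has $\LIS=k+1$ whenever $m\ge1$, so it does not witness $\Omega_{n,k}\neq\varnothing$. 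Use instead $(m+1,\dots,n,m,m-1,\dots,1)$, which places the $k$ largest values in increasing order followed by the $m$ smallest in decreasing order, or any permutation of hook shape $(k,1^{m})$.
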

	\begin{proof}
	The accepted proposals are precisely the pairs
	$\mathcal A_{n,k}:=\{(\pi,I)\in\mathcal E_{n,k}:\LIS(\pi)=k\text{ and } I=L^\star(\pi)\}$.
	The projection $(\pi,I)\mapsto\pi$ is a bijection from
	$\mathcal A_{n,k}$ to $\Omega_{n,k}$, since for each
	$\pi\in\Omega_{n,k}$ there is exactly one accepted pair, namely
	$(\pi,L^\star(\pi))$.  By Proposition~\ref{prop:proposal}, the proposal
	distribution is uniform on $\mathcal E_{n,k}$, so all accepted pairs have
	equal probability.  Since accepted pairs are in bijection with
	$\Omega_{n,k}$, the output permutation is uniform on $\Omega_{n,k}$.
	\end{proof}

	The acceptance probability of one iteration equals
	$|\Omega_{n,k}|/|\mathcal E_{n,k}|=A_{n,k}$, where $A_{n,k}$ is the
	Plancherel expectation of a squared \emph{hook-elongation factor}, defined
	in Section~\ref{sec:linear}. That factor records the loss in $f^{(k,\mu)}$
	incurred when a partition $\mu$ of $m$ is placed beneath a first row of
	length $k$. In Section~\ref{sec:linear} we show that $A_{n,k}$ is bounded
	below by a positive constant when $k\in\Theta(n)$. With the
	$O(n\log\log n)$ cost of generating and testing each proposal, this gives
	Theorem~\ref{thm:intro-largek}.

	The rejection sampler is not limited to the linear regime.  Its expected
	running time degrades gracefully as $k$ decreases, remaining polynomial as long as $k \in \Omega(n/\log n)$.

	\begin{corollary}[Below the linear regime]\label{cor:sublinear-k}
		For every \(k>4\sqrt m\), the expected running time of the rejection
		sampler of Theorem~\ref{thm:intro-largek} is
		\[
		O\!\left(n\exp\!\left(\frac{2m}{k-4\sqrt m+1}\right)\log\log n\right),
		\]
		where $m=n-k$.
		In particular, for every fixed \(\eta>0\) and all sufficiently large $n$, if \(k\ge \eta n/\log n\) then the
		expected running time is \(O\!\left(n^{1+2/\eta}\log\log n\right)\), which for
		\(\eta=1\) is \(O(n^3\log\log n)\).
	\end{corollary}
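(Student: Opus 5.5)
The expected running time is the per-iteration cost divided by the acceptance probability, and the acceptance probability is $A_{n,k}=|\Omega_{n,k}|/|\mathcal E_{n,k}|$. Each iteration generates a proposal in $O(n)$ time and then computes $\LIS$ and $L^\star$ in $O(n\log\log n)$ time. So everything reduces to the lower bound
\[
A_{n,k}\ \ge\ \exp\!\Bigl(-\frac{2m}{k-4\sqrt m+1}\Bigr).
\]
I would write $A_{n,k}$ as a Plancherel expectation over $\mu\vdash m$.

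\emph{Step 1: rewrite $A_{n,k}$.} Shapes with $\lambda_1=k$ are $\lambda=(k,\mu)$ with $\mu\vdash m$ and $\mu_1\le k$. By the hook-length formula,
\[
f^{(k,\mu)}=\frac{n!}{H_\mu\cdot\prod_{j=1}^k h_{(k,\mu)}(1,j)}.
\]
The first-row hooks are $h(1,j)=k-j+1+\mu'_j$. Their product equals $k!\prod_{j\le k}\bigl(1+\mu'_j/(k-j+1)\bigr)$. Hence
\[
f^{(k,\mu)}=\binom nk f^\mu\,\Phi_k(\mu),\qquad \Phi_k(\mu):=\prod_{j=1}^{\mu_1}\Bigl(1+\frac{\mu'_j}{k-j+1}\Bigr)^{-1},
\]
where $\Phi_k$ is the hook-elongation factor. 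Dividing $\sum_\mu (f^{(k,\mu)})^2$ by $\binom nk^2 m!$ gives
\[
A_{n,k}=\E_{\mathrm{Pl}_m}\bigl[\Phi_k(\mu)^2\mathbf 1_{\mu_1\le k}\bigr].
\]

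\emph{Step 2: bound $\Phi_k$ on a good event.} Use $1+x\le e^x$ and $\sum_j\mu'_j=m$. On the event $\mu_1\le 2\sqrt m$, every $j\le\mu_1$ satisfies $k-j+1\ge k-2\sqrt m+1$. This gives
\[
\Phi_k(\mu)^2\ge \exp\!\Bigl(-\frac{2m}{k-2\sqrt m+1}\Bigr),
\]
and the condition $\mu_1\le k$ holds automatically there since $k>4\sqrt m$. The event $\mu_1\le 2\sqrt m$, however, has Plancherel probability only about $1/2$ or less, which would lose a constant factor. To keep the exact exponent in the statement, I would instead work on the event $\mu_1\le 4\sqrt m$. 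There the same calculation gives the bound $\exp(-2m/(k-4\sqrt m+1))$, again with $\mu_1\le k$ automatic.

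\emph{Step 3: the tail bound.} It remains to show that $\Pp_{\mathrm{Pl}_m}(\mu_1\le 4\sqrt m)$ is bounded below by a constant, ideally close to 1. The standard argument bounds the probability of an increasing subsequence of length $\ell$ by
\[
\binom m\ell^2/\ell!\ \le\ (e^2 m/\ell^2)^\ell.
\]
At $\ell=4\sqrt m$ this is $(e^2/16)^{4\sqrt m}$, which is tiny. So the probability is at least $1-o(1)$, and at least $1/2$ for all $m\ge1$ after checking small cases; for $m=0$ the acceptance probability is $1$. A constant factor of this kind is absorbed into the $O(\cdot)$. This step is where the constant 4 enters, and it is the one place needing care: the exponent is stated without any slack factor, so only a multiplicative constant may be lost, and this tail bound delivers exactly that.

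\emph{Step 4: the specialisation.} Take $k\ge\eta n/\log n$. Then $m\le n$, and $k-4\sqrt m+1\ge(\eta-o(1))\,n/\log n$. So the exponent is at most $(2/\eta+o(1))\log n$. A slightly more careful accounting is needed to reach exactly $n^{2/\eta}$ rather than $n^{2/\eta+o(1)}$. For this I would use $m=n-k$, which makes the exponent $2(n-k)\log n/(\eta n-4\sqrt n\log n)$. This is at most $(2/\eta)\log n + O(1)$ once the $-k$ term dominates the $4\sqrt n\log n$ correction, which happens for large $n$. The result is $O(n^{1+2/\eta}\log\log n)$.
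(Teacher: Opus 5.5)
Your proposal follows the paper's proof essentially step for step. You restrict $A_{n,k}=\E_{\Pl_m}[a_k(\mu)^2]$ to the event $\mu_1\le 4\sqrt m$, which has Plancherel probability at least $1/2$ by the first-moment bound. On that event you use $\log(1+x)\le x$ and $\sum_j\mu'_j=m$ to get $a_k(\mu)^2\ge\exp(-2m/(k-4\sqrt m+1))$, and then you specialise to $k\ge\eta n/\log n$.

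Two small corrections are needed:
\begin{itemize}
\item The expected number of increasing subsequences of length $\ell$ is $\binom{m}{\ell}/\ell!$, not $\binom{m}{\ell}^2/\ell!$. The quantity you wrote is far larger and does not satisfy your inequality, although your right-hand side $(e^2m/\ell^2)^\ell$ is the correct bound for the true first moment. You should also take $\ell=\lfloor 4\sqrt m\rfloor+1$ so that it is an integer.
\item In Step 4 the crude bound $m\le n$ already gives $m/(k-4\sqrt m+1)\le \log n/\eta+O(\log^2 n/\sqrt n)$. So the extra accounting via $m=n-k$ is unnecessary, though it is not wrong.
\end{itemize}
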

	This follows from a lower bound on the acceptance probability $A_{n,k}$, proved
	in Section~\ref{sec:linear}.

	\section{Full analysis of the linear-regime sampler}\label{sec:linear}

	This section proves Proposition~\ref{prop:proposal},
	Theorem~\ref{thm:intro-largek}, and the bound used in
	Corollary~\ref{cor:sublinear-k}.

	\subsection{Uniformity of the proposal}

	\begin{proof}[Proof of Proposition~\ref{prop:proposal}]
	The choices made by the proposal are exactly a $k$-set of values $V$, a
	$k$-set of positions $I$, and an ordering of the remaining $m=n-k$ values.
	Given these data, the entries on $I$ are forced to be the elements of $V$ in
	increasing order, and the remaining entries are filled according to the chosen
	ordering.

	Conversely, given $(\pi,I)\in\mathcal E_{n,k}$, the set $V$ is uniquely the
	set of values $\{\pi(i):i\in I\}$, and the ordering of the remaining values is
	read off from $\pi$ outside $I$. Thus the proposal data are in bijection with
	$\mathcal E_{n,k}$. Since the proposal data are chosen uniformly, the induced
	law on $\mathcal E_{n,k}$ is uniform. The cardinality formula follows from
	the same bijection.
	\end{proof}

	\subsection{Testing one proposal in \texorpdfstring{$O(n\log \log n)$}{O(n log log n)} time}
	For a permutation $\pi\in S_n$ and an index $i\in[n]$, let $r_i$ be the length
	of the longest increasing subsequence of $\pi$ that starts at position $i$.

	\begin{fact}\label{fact:lis-ending-ranks}
		Let $a\in S_n$. In the word-RAM model, one can compute, for every
		$j\in[n]$, the length $d_j$ of the longest increasing subsequence of $a$
		ending at position $j$ in $O(n\log\log n)$ time.
	\end{fact}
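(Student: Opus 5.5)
The approach is the classical patience-sorting dynamic program, with the usual sorted array of pile tops replaced by a predecessor structure on the universe $[n]$. Since $a$ is a permutation of $[n]$, all keys are distinct integers in $[n]$. This is exactly the setting in which van Emde Boas trees, or $y$-fast tries for linear space, support insert, delete, predecessor and successor in $O(\log\log n)$ time on a word RAM with $w=\Theta(\log n)$.

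We scan $j=1,\dots,n$ and maintain the following invariant. For each length $\ell$, let $\tau_\ell$ be the smallest value that ends an increasing subsequence of length $\ell$ among $a_1,\dots,a_{j-1}$. Then $\tau_1<\tau_2<\cdots$, and the set $\{\tau_\ell\}$ is stored in the predecessor structure. Alongside each stored key $v$ we keep its rank $\ell$, in an auxiliary array indexed by value.

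When $a_j$ arrives, we query the successor $s$ of $a_j$ in the structure.
\begin{itemize}
\item If $s$ exists with rank $\ell$, then $d_j=\ell$. We delete $s$, insert $a_j$, and give $a_j$ rank $\ell$.
\item Otherwise $d_j$ is one more than the current number of keys. We insert $a_j$ with that rank.
\end{itemize}
Either way $d_j$ equals one plus the rank of the predecessor of $a_j$, or $1$ if there is none.

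Correctness is the standard exchange argument. Any increasing subsequence ending at $a_j$ of length $\ell$ extends one of length $\ell-1$ whose last value is less than $a_j$, so $\tau_{\ell-1}<a_j$. Conversely, $\tau_{d-1}<a_j$ witnesses a subsequence of length $d$ ending at $a_j$. The update preserves minimality and monotonicity of the $\tau$'s.

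Ranks of the surviving keys never change, because a replacement keeps the same rank. So storing the rank in an array indexed by value is valid. This point needs care, since we must not recompute ranks by counting.

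The total cost is $O(n)$ structure operations at $O(\log\log n)$ each, plus initialisation. A van Emde Boas tree over universe $[n]$ takes $O(n)$ initialisation. Alternatively, one can cite the $O(n\log\log \ell)$ bound of Crochemore and Porat~\cite{CrochemorePorat2010}, which already outputs these lengths.

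The main obstacle is purely model-theoretic: justifying that the predecessor structure fits within $O(n\log\log n)$ time and $O(n)$ words in the stated word-RAM. I would settle this by citing van Emde Boas trees over the universe $[n]$, whose size $n$ equals the input length, so no hashing or randomisation is needed.
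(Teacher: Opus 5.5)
Your proposal is correct and is the argument the paper relies on: the paper justifies this fact in a single sentence as the standard patience-sorting dynamic program with the current tail values held in an integer predecessor structure such as a van Emde Boas tree, citing Crochemore and Porat. Your version spells out what that sentence leaves implicit, namely the invariant on the minimal tails $\tau_\ell$, the successor query giving $d_j$, and the fact that surviving keys keep their ranks, so the value-indexed rank array stays valid.
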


	This is the standard patience-sorting dynamic program with the ordered set of
	current tail values implemented by an integer predecessor structure, such as a
	van Emde Boas structure. See, e.g., Crochemore and
	Porat~\cite{CrochemorePorat2010}.

	\begin{lemma}\label{lem:ri}
		The values $r_1,\dots,r_n$ can be computed in $O(n\log \log n)$ time in the
		word-RAM.
	\end{lemma}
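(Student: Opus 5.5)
We reduce the computation of $r_i$ to Fact~\ref{fact:lis-ending-ranks}, applied to a transformed permutation. An increasing subsequence of $\pi$ that starts at position $i$ becomes, when read right to left, a decreasing sequence ending at position $i$. Complementing the values turns it back into an increasing sequence. So we define $a\in S_n$ by
\[
a(j):=n+1-\pi(n+1-j),\qquad j\in[n].
\]
This reverses the positions and complements the values. Forming $a$ takes $O(n)$ time.

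The key step is a bijection between the two families of subsequences. Take positions $i=j_1<j_2<\cdots<j_\ell$ with $\pi(j_1)<\cdots<\pi(j_\ell)$. Map them to the positions $n+1-j_\ell<\cdots<n+1-j_1$ of $a$. At these positions $a$ takes the values $n+1-\pi(j_\ell)<\cdots<n+1-\pi(j_1)$, which increase. This gives an increasing subsequence of $a$ of the same length $\ell$, ending at position $n+1-i$. The inverse map is given by the same formula, so the correspondence is a bijection that preserves lengths. Hence
\[
r_i=d_{n+1-i}(a)\qquad\text{for every } i\in[n].
\]

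To finish, we run the algorithm of Fact~\ref{fact:lis-ending-ranks} on $a$. It returns all $d_j$ in $O(n\log\log n)$ time. Reading off $r_i=d_{n+1-i}$ then costs a further $O(n)$.

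No step here is a real obstacle. The only point that needs care is checking that the reversal-plus-complement maps increasing subsequences starting at $i$ exactly onto increasing subsequences ending at $n+1-i$, with both the starting index and the length preserved. The bijection above verifies this directly. The substantive data-structure work is already contained in the cited fact.
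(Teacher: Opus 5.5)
Your proposal is correct and takes essentially the same route as the paper. The paper also forms the reversed-complement permutation $b_j=n+1-\pi_{n+1-j}$ and notes that increasing subsequences of $\pi$ starting at $i$ correspond bijectively to increasing subsequences of $b$ ending at $n+1-i$; it then reads off $r_i=d_{n+1-i}$ via Fact~\ref{fact:lis-ending-ranks}, and your explicit check of the bijection just fills in the detail the paper leaves implicit.
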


	\begin{proof}
		Let $b_j=n+1-\pi_{n+1-j}$ be the reversed-complement permutation.
		An increasing subsequence of $\pi$ starting at position $i$ corresponds
		bijectively to an increasing subsequence of $b$ ending at position $n+1-i$.
		Hence $r_i=d_{n+1-i}$, where $d_j$ is the length of the longest increasing
		subsequence of $b$ ending at $j$. By Fact~\ref{fact:lis-ending-ranks}, all
		$d_j$ are computed in $O(n\log\log n)$ time, and reading them in reverse
		gives all $r_i$ in the same time.
	\end{proof}

	Assume now that $\LIS(\pi)=k$. We construct the leftmost LIS greedily.
	Starting with the dummy values $i_0:=0$ and $v_0:=0$, suppose that
	$i_1,\dots,i_{t-1}$ have already been chosen and write
	$v_{t-1}:=\pi(i_{t-1})$.  We choose $i_t$ to be the smallest index
	$i>i_{t-1}$ such that $\pi(i)>v_{t-1}$ and such that an increasing
	subsequence of length $k-t+1$ can still be completed starting from $i$,
	that is, $r_i\ge k-t+1$.  We then set $v_t:=\pi(i_t)$.

	\begin{lemma}\label{lem:leftmost}
		If $\LIS(\pi)=k$, the greedy reconstruction above returns $L^\star(\pi)$.
	\end{lemma}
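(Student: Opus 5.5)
We prove by induction on $t$ that the greedy step $t$ is well defined and that $i_1,\dots,i_t$ coincide with the first $t$ positions of $L^\star(\pi)=(\ell_1<\cdots<\ell_k)$. The invariant carried along is the statement $P(t)$: $(i_1,\dots,i_t)=(\ell_1,\dots,\ell_t)$, and hence $v_t=\pi(\ell_t)$. The base case $t=0$ holds trivially, using the dummy values $i_0=0$ and $v_0=0$, since every position and every value exceeds them.

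For the inductive step, assume $P(t-1)$ and define the candidate set
\[
\mathcal C_t:=\{\,i>i_{t-1}:\ \pi(i)>v_{t-1},\ r_i\ge k-t+1\,\}.
\]
The greedy rule picks $i_t=\min\mathcal C_t$. We argue in three steps.

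\begin{itemize}
\item \emph{$\ell_t\in\mathcal C_t$.} We have $\ell_t>\ell_{t-1}=i_{t-1}$ and $\pi(\ell_t)>\pi(\ell_{t-1})=v_{t-1}$. Moreover $\ell_t,\dots,\ell_k$ is an increasing subsequence of length $k-t+1$ starting at $\ell_t$, so $r_{\ell_t}\ge k-t+1$. Hence $\mathcal C_t$ is nonempty, the step is well defined, and $i_t\le\ell_t$.

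\item \emph{No $i\in\mathcal C_t$ lies strictly below $\ell_t$.} Suppose $i\in\mathcal C_t$ with $i<\ell_t$. Since $r_i\ge k-t+1$, there is an increasing subsequence $j_1=i<j_2<\cdots<j_{k-t+1}$ starting at $i$. Prepend $\ell_1,\dots,\ell_{t-1}$; the junction is valid because $i>i_{t-1}=\ell_{t-1}$ and $\pi(i)>v_{t-1}=\pi(\ell_{t-1})$. The result is an increasing subsequence of length $k$ whose position list agrees with $L^\star(\pi)$ in the first $t-1$ entries and has $t$-th entry $i<\ell_t$. It is therefore lexicographically smaller than $L^\star(\pi)$, contradicting the definition of the leftmost LIS.

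\item \emph{Conclusion of the step.} Combining the two bullets gives $i_t=\ell_t$, which establishes $P(t)$.
\end{itemize}

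After $k$ steps the greedy output equals $L^\star(\pi)$. Two small points need care: the dummy base case, and the fact that $r_i\ge k-t+1$ only guarantees a completion of length at least $k-t+1$. The latter causes no trouble, since we may truncate the completion to exactly $k-t+1$ terms.

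There is no real obstacle in this argument. The only step requiring attention is the splicing argument in the second bullet, where the inductive hypothesis is essential: it is what ensures the splice is increasing at the junction.
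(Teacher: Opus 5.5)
Your proposal is correct and follows essentially the same route as the paper's proof. Both argue by induction on $t$ with the same candidate set ($F_t$ in the paper): the $t$-th position of $L^\star(\pi)$ lies in it, so the greedy minimum is at most that position, and a strictly smaller choice would splice into a lexicographically smaller LIS.
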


	\begin{proof}
		We argue by induction on $t$. Suppose $i_1,\dots,i_{t-1}$ have already been
		chosen and coincide with the first $t-1$ positions of the leftmost LIS. Let
		\[
		F_t:=\{i>i_{t-1}:\pi(i)>v_{t-1},\;r_i\ge k-t+1\}.
		\]
		Every index in $F_t$ can be used as the next position of an increasing
		subsequence of total length $k$. For $i\in F_t$, the conditions
		$i>i_{t-1}$ and $\pi(i)>v_{t-1}$ let the prefix $i_1,\dots,i_{t-1}$ be
		continued at $i$, and $r_i\ge k-t+1$ supplies an increasing subsequence
		of length $k-t+1$ starting at $i$, for a total length of
		$(t-1)+(k-t+1)=k$. Conversely, the $t$th position of any
		LIS extending $i_1,\dots,i_{t-1}$ must lie in $F_t$. In particular the
		$t$th position of $L^\star(\pi)$ lies in $F_t$, so the greedy choice
		$i_t=\min F_t$ is at most that position. If it were strictly smaller,
		the full LIS through $i_1,\dots,i_{t-1},i_t$ exhibited above would be
		lexicographically smaller than $L^\star(\pi)$, a contradiction. Thus
		$i_t$ is exactly the $t$th position of $L^\star(\pi)$.
	\end{proof}

	\begin{corollary}\label{cor:test}
		Given a proposal $(\pi,I)$, one can decide whether it is accepted by one
		iteration of Algorithm~\ref{alg:direct-sampler} in $O(n\log\log n)$ time on the word-RAM.
	\end{corollary}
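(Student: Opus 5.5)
To decide acceptance we run the test of Algorithm~\ref{alg:direct-sampler} in three passes, each in $O(n\log\log n)$ time. \emph{First}, we compute $\ell=\LIS(\pi)$. By Fact~\ref{fact:lis-ending-ranks} applied to $a=\pi$, we obtain all $d_j$, and $\ell=\max_j d_j$. If $\ell\ne k$ we reject immediately. \emph{Second}, if $\ell=k$, we compute $r_1,\dots,r_n$ by Lemma~\ref{lem:ri}. \emph{Third}, we run the greedy reconstruction of $L^\star(\pi)$. By Lemma~\ref{lem:leftmost} it returns $L^\star(\pi)$, which we compare with $I$ entrywise in $O(k)$ time.

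The only step needing care is implementing the greedy reconstruction in linear time. A naive search for $\min F_t$ from scratch at each $t$ could cost $O(nk)$. We will instead do a single left-to-right scan with a pointer $i$ that never moves backwards. The justification is that $\min F_t>i_{t-1}$, so after selecting $i_{t-1}$ the scan can simply continue from $i_{t-1}+1$. At the current index $i$ we test, in $O(1)$ time, whether $\pi(i)>v_{t-1}$ and $r_i\ge k-t+1$. If both hold, we set $i_t:=i$ and $v_t:=\pi(i)$ and increment $t$; otherwise we advance $i$. Since an index skipped while looking for $i_t$ lies below $\min F_t$ and every later $i_{t'}$ exceeds $i_t$, no skipped index ever needs to be revisited. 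The scan therefore makes at most $n$ steps with $O(1)$ work each. It terminates with $t=k+1$, because $\LIS(\pi)=k$ guarantees each $F_t\neq\varnothing$, as shown in the proof of Lemma~\ref{lem:leftmost}.

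The total cost is dominated by the two predecessor-structure computations of Fact~\ref{fact:lis-ending-ranks} and Lemma~\ref{lem:ri}, plus $O(n)$ for forming the reversed complement $b$, the scan, and the comparison. This gives $O(n\log\log n)$ on the word-RAM. I expect no real obstacle here; the only point needing an explicit argument is the monotone-pointer claim for the greedy step.
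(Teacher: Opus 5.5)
Your proposal is correct and takes the same route as the paper: compute the $r_i$ via Lemma~\ref{lem:ri}, reject unless the LIS equals $k$, reconstruct $L^\star(\pi)$ by one left-to-right greedy scan, and compare the result with $I$. The differences are cosmetic. The paper reads off $\LIS(\pi)=\max_i r_i$ instead of running Fact~\ref{fact:lis-ending-ranks} a second time on $\pi$, and you spell out the monotone-pointer argument for the $O(n)$ scan, which the paper only asserts.
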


	\begin{proof}
		Compute the values $r_i$ using Lemma~\ref{lem:ri}, in
		$O(n\log\log n)$ time. Their maximum is $\LIS(\pi)$. If this maximum is not
		$k$, reject immediately. Otherwise reconstruct $L^\star(\pi)$ using
		Lemma~\ref{lem:leftmost}. The reconstruction is a single left-to-right scan
		through the permutation and therefore costs $O(n)$ time. Finally compare the
		reconstructed list of positions with $I$, again in $O(n)$ time. The
		total time is $O(n\log\log n)$.
	\end{proof}

	\subsection{Acceptance probability}
	For a partition $\nu$ recall $H_\nu$ and the conjugate $\nu'$ from
	Section~\ref{sec:reparam}. For the empty partition we use the convention
	$\mu_1=0$ and interpret empty products as $1$.

	Any shape with first row of length $k$ can be written as $\lambda=(k,\mu)$,
	where $\mu\vdash m$ is the sub-diagram below the first row. The factor
	$a_k(\mu)$ measures the multiplicative loss in $f^{(k, \mu)}$ caused by the elongation of the first-row hooks relative to a bare row of length $k$. In column $j$, the
	hook length in an isolated row would be $k-j+1$.  Placing $\mu$ underneath
	elongates it by the column height $\mu'_j$ to $k-j+1+\mu'_j$. Columns
	$j>\mu_1$ are unaffected. For $\mu\vdash m$, define
	\[
	a_k(\mu):=
	\begin{cases}
		\displaystyle
		\prod_{j=1}^{\mu_1}\frac{k-j+1}{k-j+1+\mu'_j},
		& \mu_1\le k,\\[1.2em]
		0,& \mu_1>k.
	\end{cases}
	\]
	Define $A_{n,k}:=\E_{\Pl_m}[a_k(\mu)^2]$, where $\Pl_m(\mu)=(f^\mu)^2/m!$ is
	Plancherel measure on partitions of $m$.

	\begin{lemma}\label{lem:factor-largek}
		For $\lambda=(k,\mu)$ with $\mu\vdash m$ and $\mu_1\le k$,
		$f^\lambda=\binom{n}{m}f^\mu a_k(\mu)$.
		Consequently, $|\Omega_{n,k}|=\binom{n}{m}^2m!\,A_{n,k}$.
	\end{lemma}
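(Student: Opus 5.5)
The plan is to compare hook products via the hook-length formula. For $\lambda=(k,\mu)$ with $\mu_1\le k$, the cells of $\lambda$ split into the first row and the cells of rows $2,3,\dots$. The latter form exactly the diagram of $\mu$ shifted down by one row. For a cell $(i,j)$ with $i\ge2$, its arm and leg in $\lambda$ coincide with its arm and leg in $\mu$: the arm lies in the same row, and the leg counts cells below it, none of which is in the first row. Hence $H_\lambda=H_\mu\cdot\prod_{j=1}^k h_\lambda(1,j)$.

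For the first row, $h_\lambda(1,j)=k-j+1+\lambda'_j-1=k-j+1+\mu'_j$. Here $\mu'_j=0$ for $j>\mu_1$, which is where the hypothesis $\mu_1\le k$ is used, so that every column of $\mu$ sits under the first row. Therefore
\[
\prod_{j=1}^k h_\lambda(1,j)=k!\prod_{j=1}^{\mu_1}\frac{k-j+1+\mu'_j}{k-j+1}=\frac{k!}{a_k(\mu)}.
\]
Substituting into $f^\lambda=n!/H_\lambda$ and $f^\mu=m!/H_\mu$ gives
\[
f^\lambda=\frac{n!\,a_k(\mu)}{k!\,H_\mu}=\frac{n!}{k!\,m!}\,f^\mu a_k(\mu)=\binom nm f^\mu a_k(\mu).
\]

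For the count, Robinson--Schensted gives $|\Omega_{n,k}|=\sum_{\lambda\vdash n,\lambda_1=k}(f^\lambda)^2$. The shapes with $\lambda_1=k$ are exactly the $(k,\mu)$ with $\mu\vdash m$ and $\mu_1\le k$. Extending the sum to all $\mu\vdash m$ changes nothing, because $a_k(\mu)=0$ when $\mu_1>k$. Thus
\[
|\Omega_{n,k}|=\binom nm^2\sum_{\mu\vdash m}(f^\mu)^2a_k(\mu)^2=\binom nm^2 m!\,\E_{\Pl_m}[a_k(\mu)^2]=\binom nm^2m!\,A_{n,k}.
\]
The edge case $m=0$ follows from the stated conventions: $\mu$ is empty, $a_k=1$, and both sides equal $1$.

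There is no real obstacle. The only point needing care is confirming that the leg lengths of cells in rows $\ge2$ are unaffected by adding the first row, and that $\lambda'_j=1+\mu'_j$ for every $j\le k$, including columns $j>\mu_1$ where $\mu'_j=0$.
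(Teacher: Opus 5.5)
Your proposal is correct and follows the paper's proof: you split the hook product of $(k,\mu)$ into $H_\mu$ and the first-row hooks $k-j+1+\mu'_j$, getting $H_\lambda=H_\mu\,k!/a_k(\mu)$ and hence $f^\lambda=\binom nm f^\mu a_k(\mu)$. You then sum $(f^\lambda)^2$ over shapes via Robinson--Schensted, extending to all $\mu\vdash m$ because $a_k(\mu)=0$ when $\mu_1>k$, exactly as the paper does.
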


	\begin{proof}
		For $\lambda=(k,\mu)$, the hooks below the first row are those of $\mu$,
		contributing $H_\mu$. The first-row hooks are $k-j+1+\mu'_j$ for
		$1\le j\le\mu_1$ and $k-j+1$ for $\mu_1<j\le k$, so
		\[
		H_\lambda = H_\mu\,\frac{\prod_{j=1}^{\mu_1}(k-j+1+\mu'_j)}
		{\prod_{j=1}^{\mu_1}(k-j+1)}\cdot k!
		= H_\mu\,\frac{k!}{a_k(\mu)}.
		\]
		The hook-length formula then gives
		\[
		f^\lambda = \frac{n!}{H_\lambda}
		= \frac{n!}{m!\,k!}\cdot\frac{m!}{H_\mu}\cdot a_k(\mu)
		= \binom{n}{m}f^\mu a_k(\mu).
		\]
		For the second claim, Robinson--Schensted identifies $\Omega_{n,k}$ with
		pairs of tableaux of shapes $(k,\mu)$, $\mu\vdash m$,
		$\mu_1\le k$ \cite{Schensted1961}. Since $a_k(\mu)=0$ when $\mu_1>k$,
		the constraint can be dropped in the second equality below, so
		\[
		|\Omega_{n,k}|
		= \sum_{\substack{\mu\vdash m\\\mu_1\le k}}(f^{(k,\mu)})^2
		= \binom{n}{m}^{\!2}\sum_{\mu\vdash m}(f^\mu)^2 a_k(\mu)^2
		= \binom{n}{m}^{\!2}m!\,A_{n,k}.\qedhere
		\]
	\end{proof}

	\begin{corollary}\label{cor:acc-largek}
		The acceptance probability of one iteration of
		Algorithm~\ref{alg:direct-sampler} is $|\Omega_{n,k}|/|\mathcal E_{n,k}|=A_{n,k}$.
	\end{corollary}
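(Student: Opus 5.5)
The statement follows directly from the two counts already established, so the plan is simply to compute the ratio of sizes and justify that the ratio is the acceptance probability.

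\emph{Step 1: acceptance probability as a ratio of sizes.} By Proposition~\ref{prop:proposal}, one iteration of Algorithm~\ref{alg:direct-sampler} draws a uniform element of $\mathcal E_{n,k}$. The iteration accepts exactly when the proposal lies in
\[
\mathcal A_{n,k}=\{(\pi,I)\in\mathcal E_{n,k}:\LIS(\pi)=k,\ I=L^\star(\pi)\}.
\]
Hence the acceptance probability is $|\mathcal A_{n,k}|/|\mathcal E_{n,k}|$.

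\emph{Step 2: identify the accepted set with $\Omega_{n,k}$.} In the proof of Proposition~\ref{prop:direct-exact}, the projection $(\pi,I)\mapsto\pi$ was shown to be a bijection $\mathcal A_{n,k}\to\Omega_{n,k}$. The key point is that $(\pi,L^\star(\pi))$ always lies in $\mathcal E_{n,k}$, because $L^\star(\pi)$ is a length-$k$ increasing subsequence. Therefore $|\mathcal A_{n,k}|=|\Omega_{n,k}|$, which gives the first equality.

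\emph{Step 3: substitute the two cardinalities.} Proposition~\ref{prop:proposal} gives $|\mathcal E_{n,k}|=\binom nk^2m!$. Lemma~\ref{lem:factor-largek} gives $|\Omega_{n,k}|=\binom nm^2m!\,A_{n,k}$. Since $m=n-k$, we have $\binom nm=\binom nk$, so the factors $\binom nk^2m!$ cancel and leave $A_{n,k}$.

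There is no real obstacle: the only care needed is the symmetry $\binom nm=\binom nk$ and the observation that the canonical witness lies in the proposal space. The degenerate cases are also consistent. When $m=0$, $\mu$ is empty, $a_k(\mu)=1$, and the ratio is $1$.
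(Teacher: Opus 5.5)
Your proposal is correct and follows the paper's own argument. The paper uses the same uniform proposal, the same bijection between accepted pairs and $\Omega_{n,k}$ from Proposition~\ref{prop:direct-exact}, and the same cancellation of $\binom nk^2m!$ using Proposition~\ref{prop:proposal} and Lemma~\ref{lem:factor-largek}; you simply write out the $\binom nm=\binom nk$ step and the $m=0$ check explicitly.
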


	\begin{proof}
		By Proposition~\ref{prop:direct-exact}, every permutation in $\Omega_{n,k}$
		contributes exactly one accepted pair $(\pi,I)$, so the acceptance probability
		equals $|\Omega_{n,k}|/|\mathcal E_{n,k}|$. The result now follows from
		Proposition~\ref{prop:proposal} and Lemma~\ref{lem:factor-largek}.
	\end{proof}

	\begin{lemma}\label{lem:plancherel-first-row-light}
		For every $m\ge 1$, if $\mu$ is drawn from Plancherel measure on partitions
		of $m$, then $\Pp_{\Pl_m}(\mu_1\le 4\sqrt{m})\ge\tfrac{1}{2}$.
	\end{lemma}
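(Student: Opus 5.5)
We plan to transfer the question to uniform random permutations through Robinson--Schensted and then apply a first-moment bound on the number of long increasing subsequences. Under Robinson--Schensted, a uniform $\sigma\in S_m$ has shape distributed as $\Pl_m$, and $\mu_1=\LIS(\sigma)$. It therefore suffices to show that $\Pp(\LIS(\sigma)>4\sqrt m)\le\tfrac12$ for uniform $\sigma\in S_m$.

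Set $\ell:=\lfloor 4\sqrt m\rfloor+1$, the smallest integer exceeding $4\sqrt m$. If $\ell>m$ the event is empty and there is nothing to prove, so we assume $\ell\le m$. If $\LIS(\sigma)\ge\ell$, then some $\ell$-subset of positions carries an increasing subsequence. For a fixed $\ell$-subset, the probability that $\sigma$ is increasing on it is $1/\ell!$. A union bound over subsets then gives
\[
\Pp(\LIS(\sigma)\ge\ell)\le\binom{m}{\ell}\frac{1}{\ell!}\le\frac{m^\ell}{(\ell!)^2}.
\]
Using $\ell!\ge(\ell/e)^\ell$, the right-hand side is at most $(e^2m/\ell^2)^\ell$. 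Since $\ell>4\sqrt m$, we have $\ell^2>16m$, so the bound is at most $(e^2/16)^\ell$. Because $e^2\approx7.39<8$, this is less than $2^{-\ell}\le\tfrac12$, using $\ell\ge1$. This holds uniformly for every $m\ge1$, so no separate small-$m$ check is needed beyond the trivial case $\ell>m$.

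The main point requiring care is making sure that the strict/non-strict inequality in the event matches the definition of $\ell$. The event $\mu_1>4\sqrt m$ is the same as $\LIS\ge\ell$, because $\ell$ is the least integer strictly above $4\sqrt m$. We also need the estimate $\ell^2>16m$, which follows from $\ell>4\sqrt m$. With these in place, the constant $4$ leaves ample room: the argument would already work with $2e$ in place of $4$. The complement event $\mu_1\le4\sqrt m$ therefore has probability at least $\tfrac12$, in fact at least $1-2^{-\ell}$.
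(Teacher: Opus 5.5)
Your proposal is correct and follows the paper's own proof essentially step for step. Like the paper, you transfer $\mu_1$ to $\LIS(\sigma)$ via Robinson--Schensted, apply the first-moment bound $\binom{m}{\ell}/\ell!\le(e^2m/\ell^2)^\ell$ with $\ell=\lfloor4\sqrt m\rfloor+1$, and conclude from $(e^2/16)^\ell\le\tfrac12$.
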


	The proof transfers the first row under Plancherel measure to the LIS of a
	uniform permutation and bounds its upper tail by counting increasing
	subsequences and applying Markov's inequality.

	\begin{proof}
		By the Robinson--Schensted correspondence, $\mu_1$ under Plancherel measure has
		the same distribution as $\LIS(\sigma)$ for a uniform random permutation
		$\sigma$ of $[m]$. Fix $r\ge 1$ and let $X_r$ count the increasing
		subsequences of length $r$ in $\sigma$. Each $r$-subset of positions contributes
		with probability $1/r!$, so $\E X_r=\binom{m}{r}/r!$. Since $\LIS(\sigma)\ge r$
		implies $X_r\ge 1$, Markov's inequality gives
		\[
		\Pp_{\Pl_m}(\mu_1\ge r)=\Pp(\LIS(\sigma)\ge r)\le\E X_r=\frac{\binom{m}{r}}{r!}.
		\]
		Using $\binom{m}{r}\le(em/r)^r$ and $r!\ge(r/e)^r$ yields
		$\Pp_{\Pl_m}(\mu_1\ge r)\le(e^2m/r^2)^r$. Taking $r=\lfloor 4\sqrt{m}\rfloor+1>4\sqrt{m}$,
		\[
		\Pp_{\Pl_m}(\mu_1>4\sqrt{m})
		\le\Pp_{\Pl_m}(\mu_1\ge r)
		\le\left(\frac{e^2}{16}\right)^r
		\le\frac{1}{2},
		\]
		which proves the claim.
	\end{proof}

	\begin{lemma}\label{lem:const}
		Fix \(\eta\in(0,1]\).  If \(k\ge \eta n\), then for all sufficiently large
		\(n\),
		\[
		A_{n,k}\ge
		\frac12\exp\!\left(-\frac{4(1-\eta)}{\eta}\right).
		\]
		In particular, if \(k\in\Theta(n)\), then there are constants \(c>0\) and
		\(n_0\) such that \(A_{n,k}\ge c\) for all \(n\ge n_0\).
	\end{lemma}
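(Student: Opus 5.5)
Write $m:=n-k$, so that $m\le(1-\eta)n$ and $k\ge\eta n$. The plan is to restrict the expectation $A_{n,k}=\E_{\Pl_m}[a_k(\mu)^2]$ to the event $\{\mu_1\le 4\sqrt m\}$, which has probability at least $\tfrac12$ by Lemma~\ref{lem:plancherel-first-row-light}. On that event I will lower-bound $a_k(\mu)^2$ deterministically. Since $a_k\ge0$, this gives
\[
A_{n,k}\ge \tfrac12\,\min_{\mu\vdash m,\ \mu_1\le 4\sqrt m} a_k(\mu)^2 .
\]
If $m=0$, then $a_k(\varnothing)=1$ and the claim is trivial, so assume $m\ge1$.

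For the deterministic bound, note first that for large $n$ we have $\mu_1\le 4\sqrt m\le 4\sqrt n<\eta n\le k$. Hence the nonzero branch of the definition of $a_k$ applies, and every $j\le\mu_1$ satisfies $k-j+1\ge k-\mu_1+1>0$. Using $\log(1+x)\le x$,
\[
-\log a_k(\mu)=\sum_{j=1}^{\mu_1}\log\Bigl(1+\frac{\mu'_j}{k-j+1}\Bigr)\le\sum_{j=1}^{\mu_1}\frac{\mu'_j}{k-j+1}\le\frac{\sum_j\mu'_j}{k-\mu_1+1}=\frac{m}{k-\mu_1+1}\le\frac{m}{k-4\sqrt m+1}.
\]
Squaring gives $a_k(\mu)^2\ge\exp\bigl(-2m/(k-4\sqrt m+1)\bigr)$. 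This is exactly the bound that will also yield Corollary~\ref{cor:sublinear-k}, so it is worth recording in this form.

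It remains to compare the exponent with $4(1-\eta)/\eta$. We have $m/(k-4\sqrt m+1)\le (1-\eta)n/(\eta n-4\sqrt n)$, and for large $n$ this is at most $2(1-\eta)/\eta$. The factor $2$ here simply absorbs the $4\sqrt n$ correction once $4\sqrt n\le \eta n/2$. Therefore $a_k(\mu)^2\ge\exp(-4(1-\eta)/\eta)$, and so $A_{n,k}\ge\tfrac12\exp(-4(1-\eta)/\eta)$.

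For the "in particular" part, $k\in\Theta(n)$ means $k\ge\eta n$ for some fixed $\eta\in(0,1]$, so the bound gives a constant $c$ for all $n\ge n_0$.

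There is no real obstacle. The only step needing care is the regime check that $4\sqrt m<k$, so that $a_k$ takes its product form and the denominators $k-j+1$ stay positive; this holds because $k\ge\eta n$ grows linearly while $4\sqrt m$ grows only like $\sqrt n$.
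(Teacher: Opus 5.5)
Your proposal is correct and follows the paper's proof essentially step for step. You restrict to the event $\{\mu_1\le 4\sqrt m\}$ of Plancherel probability at least $\tfrac12$, bound $-\log a_k(\mu)\le m/(k-\mu_1+1)$ via $\log(1+x)\le x$, and absorb the $O(\sqrt n)$ correction with a factor of $2$. The paper does that last step through $\mu_1\le k/2$, giving $-\log a_k(\mu)\le 2m/k$, whereas you keep the intermediate form $m/(k-4\sqrt m+1)$, which is exactly the bound the paper reuses in the proof of Corollary~\ref{cor:sublinear-k}.
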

	\begin{proof}
		Recall that \(m=n-k\) and
		\(A_{n,k}=\E_{\Pl_m}[a_k(\mu)^2]\). The idea is to restrict this
		expectation to the event \(\mu_1\le4\sqrt m\), on which the elongation
		factor \(a_k(\mu)\) is bounded below by a positive constant.
		If \(m=0\), then \(k=n\) and \(A_{n,n}=1\), which is larger than the
		displayed lower bound. Hence assume \(m=n-k\ge 1\). Since \(k\ge\eta n\),
		we have \(\frac{m}{k}=\frac{n-k}{k}\le\frac{1-\eta}{\eta}\).

		Using \(\log(1+x)\le x\), whenever \(\mu_1\le k\) we have
		\[
		-\log a_k(\mu)
		=\sum_{j=1}^{\mu_1}\log\!\left(1+\frac{\mu'_j}{k-j+1}\right)
		\le\sum_{j=1}^{\mu_1}\frac{\mu'_j}{k-j+1}
		\le\frac{m}{k-\mu_1+1},
		\]
		where the last step uses \(\sum_{j=1}^{\mu_1}\mu'_j=m\).

		By Lemma~\ref{lem:plancherel-first-row-light}, the event
		\(\{\mu_1\le 4\sqrt{m}\}\) has probability at least \(1/2\). Since
		\(m\le ((1-\eta)/\eta)k\), for all sufficiently large \(n\) this event
		implies \(\mu_1\le k/2\), and in particular \(\mu_1\le k\). Hence
		\(k-\mu_1+1\ge k/2\). On this event,
		\[
		-\log a_k(\mu)\le\frac{2m}{k}
		\le\frac{2(1-\eta)}{\eta},
		\]
		and therefore $a_k(\mu)^2\ge\exp(-4(1-\eta)/\eta)$.
		It follows that
		\[
		A_{n,k}=\E_{\Pl_m}[a_k(\mu)^2]
		\ge
		\frac{1}{2}\exp\!\left(-\frac{4(1-\eta)}{\eta}\right)
		\]
		for all sufficiently large \(n\).  The final assertion follows by choosing
		any \(\eta>0\) such that \(k\ge\eta n\) for all sufficiently large \(n\).
	\end{proof}

	We now prove Theorem~\ref{thm:intro-largek}, the exact $O(n\log\log n)$
	sampler for the linear regime $k\in\Theta(n)$.

	\begin{proof}[Proof of Theorem~\ref{thm:intro-largek}]
		Exactness is established by Proposition~\ref{prop:direct-exact}. By
		Corollary~\ref{cor:test}, each proposal is tested in $O(n\log\log n)$ time,
		and generating it takes $O(n)$ time. By Corollary~\ref{cor:acc-largek} and
		Lemma~\ref{lem:const}, the acceptance probability is bounded below by a
		positive constant for all sufficiently large $n$, so the expected number of
		trials is $O(1)$. The expected total running time is therefore $O(n\log\log n)$.  Since the
		trials are independent and each succeeds with probability bounded below by
		a constant, the number of trials has geometric tails, so the same bound
		holds with exponentially decaying tail probabilities.
	\end{proof}

	Finally we prove Corollary~\ref{cor:sublinear-k}, the polynomial-time
	bound for $k\in\Omega(n/\log n)$.

	\begin{proof}[Proof of Corollary~\ref{cor:sublinear-k}]
		If $m=0$, then $A_{n,n}=1$ and the claim is trivial.  Hence assume
		$m\ge1$.  Consider the event $\mu_1\le4\sqrt m$, which has Plancherel
		probability at least $1/2$ by
		Lemma~\ref{lem:plancherel-first-row-light}.  Since $k>4\sqrt m$, on this
		event $\mu_1<k$, so the bound
		$-\log a_k(\mu)\le m/(k-\mu_1+1)$ established in the proof of
		Lemma~\ref{lem:const} applies and gives
		$$
		-\log a_k(\mu)
		\le
		\frac{m}{k-\mu_1+1}
		\le
		\frac{m}{k-4\sqrt m+1}.
		$$
		Therefore
		$$
		a_k(\mu)^2
		\ge
		\exp\!\left(-\frac{2m}{k-4\sqrt m+1}\right)
		$$
		on this event. Hence
		$$
		A_{n,k}
		\ge
		\frac12
		\exp\!\left(-\frac{2m}{k-4\sqrt m+1}\right).
		$$
		Since each trial of the rejection sampler costs $O(n\log\log n)$ time, the
		expected running time is
		$$
		O\!\left(
		n\exp\!\left(\frac{2m}{k-4\sqrt m+1}\right)\log\log n
		\right).
		$$
		
		Now fix $\eta>0$ and suppose $n$ is sufficiently large and
		$k\ge \eta n/\log n$. Since $m\le n$, we have
		$4\sqrt m\le4\sqrt n=o(n/\log n)$, so $k>4\sqrt m$ for all sufficiently
		large $n$. Moreover,
		$$
		\frac{m}{k-4\sqrt m+1}
		\le
		\frac{n}{\eta n/\log n-4\sqrt n+1}
		=
		\frac{\log n}{\eta}+O(1).
		$$
		Thus
		$$
		\exp\!\left(\frac{2m}{k-4\sqrt m+1}\right)
		=
		O(n^{2/\eta}),
		$$
		and the expected running time is
		$$
		O\!\left(n^{1+2/\eta}\log\log n\right).
		$$
		For $\eta=1$, this is $O(n^3\log\log n)$.
	\end{proof}

	\section{Sampling permutations with LIS at most \texorpdfstring{$k$}{k}}
	\label{sec:atmost}

	The samplers above extend to the relaxed constraint $\LIS(\pi)\le k$ by the
	same argument, after padding the conjugate with zero parts and replacing the
	ambient coordinate set $\{1,\ldots,n\}$ by $\{0,\ldots,n+k-1\}$.  We first
	record the padded analogue of Lemma~\ref{lem:x-hook}.

	\begin{lemma}[Padded reparameterisation]\label{lem:padded-hook}
	For $\lambda\vdash n$ with $\lambda_1\le k$, pad the conjugate
	$\rho=\lambda'$ with zeros to exactly $k$ weakly decreasing parts
	$\rho_1\ge\cdots\ge\rho_k\ge0$ and set $x_i:=\rho_i+k-i$.  The map
	$\lambda\mapsto x$ is a bijection between
	$\{\lambda\vdash n:\lambda_1\le k\}$ and strictly decreasing sequences
	\[
	n+k-1\ \ge\ x_1>x_2>\cdots>x_k\ \ge\ 0,
	\qquad x_1+\cdots+x_k=N=n+\binom k2,
	\]
	and under this bijection $f^\lambda=n!\,\Delta(x)/\prod_ix_i!$, so the
	target law on padded sequences is again proportional to
	$\wt(x)=\Delta(x)^2\prod_ix_i!^{-2}$.
	\end{lemma}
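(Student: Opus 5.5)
The plan is to reduce to Lemma~\ref{lem:x-hook} by viewing the padded conjugate as a genuine partition with exactly $k'$ positive parts plus trailing zeros.

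\emph{Bijection.} Let $\lambda\vdash n$ with $\lambda_1=k'\le k$. Then $\rho=\lambda'$ has exactly $k'$ positive parts summing to $n$. Padding with $k-k'$ zeros gives a weakly decreasing sequence $\rho_1\ge\cdots\ge\rho_k\ge0$. Hence $x_i=\rho_i+k-i$ is strictly decreasing, with the following properties:
\begin{itemize}
\item $x_k=\rho_k\ge0$;
\item $x_1=\rho_1+k-1\le n+k-1$;
\item $\sum x_i=n+\binom k2=N$.
\end{itemize}
Conversely, given such an $x$, set $\rho_i=x_i-k+i$. Strictness $x_i>x_{i+1}$ gives $\rho_i\ge\rho_{i+1}$, and $x_k\ge0$ gives $\rho_k\ge0$. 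The sum is $n$, and at most $k$ parts are positive, so the conjugate $\lambda=\rho'$ (ignoring zeros) satisfies $\lambda\vdash n$ and $\lambda_1\le k$. The two maps are mutually inverse. The bound $x_1\le n+k-1$ is automatic from $\rho_1\le n$, so it imposes no extra restriction.

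\emph{Hook formula.} I would not redo the hook-length computation. Instead I would relate the padded $x$ to the unpadded shift with $k'$ parts. Let $y_i:=\rho_i+k'-i$ for $i\le k'$; this is the $x$-vector of Lemma~\ref{lem:x-hook} applied with $k'$ in place of $k$, so
\[
f^\lambda=n!\,\Delta(y)/\prod_{i\le k'} y_i!.
\]
Write $j:=k-k'$ for the number of zero parts. Then $x_i=y_i+j$ for $i\le k'$, and the tail is $x_{k'+l}=j-l$ for $l=1,\dots,j$, i.e.\ $(j-1,\dots,0)$. I then compute the ratio of $\Delta(x)/\prod x_i!$ to $\Delta(y)/\prod y_i!$ in four pieces.
\begin{itemize}
\item The tail-only Vandermonde is $\prod_{l=0}^{j-1}l!$, and the tail factorials are $\prod_{l=0}^{j-1}l!$, so these cancel exactly.
\item Differences among the head coordinates are unchanged, since $x_i-x_{i'}=y_i-y_{i'}$.
\item For each head index $i$, the cross differences with the tail are $\prod_{l=0}^{j-1}(y_i+j-l)=(y_i+j)!/y_i!$.
\item This cross factor exactly compensates replacing $y_i!$ by $x_i!=(y_i+j)!$.
\end{itemize}
Therefore $\Delta(x)/\prod x_i!=\Delta(y)/\prod y_i!$, which gives $f^\lambda=n!\,\Delta(x)/\prod x_i!$.

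The weight statement then follows exactly as in Lemma~\ref{lem:x-hook}, since $(n!)^2$ is constant.

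The main care point is the cross-term bookkeeping between head and tail, together with the conventions $0!=1$ and the edge cases $k'=k$ (no padding) and $n=0$. Both are trivial.
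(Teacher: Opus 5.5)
Your proof is correct and follows essentially the same route as the paper. The paper also reduces to the unpadded computation with $\ell=\lambda_1$ nonempty rows: it multiplies each nonempty row's hook product by the cross factor $\prod_{j=\ell+1}^{k}(\rho_i+j-i)$, which is your $(y_i+j)!/y_i!$, and checks that each empty row contributes $(k-i)!/\prod_{j>i}(j-i)=1$, which matches your cancellation of the tail Vandermonde against the tail factorials. The only difference is that you do this bookkeeping on the assembled formula of Lemma~\ref{lem:x-hook} rather than row by row inside its proof, which is slightly cleaner but not a different argument.
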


	\begin{proof}
	The condition $\lambda_1\le k$ means that $\rho$ has at most $k$ parts.
	Padding with zeros and shifting therefore gives the stated bijection,
	exactly as in Section~\ref{sec:general}, but with lower endpoint $0$
	instead of $1$.  It remains to verify the hook-product formula. The
	calculation extends the row products in the proof of
	Lemma~\ref{lem:x-hook} from $\ell$ rows to $k$ rows, and the point is
	that the empty rows can be absorbed into the same formula.

	Let $\ell$ be the number of nonempty
	parts of $\rho$.  For $i\le \ell$, the ordinary row-$i$ hook product in the
	unpadded diagram gives
	\[
	\prod_{c=1}^{\rho_i}h_\rho(i,c)
	=
	\frac{(\rho_i+\ell-i)!}
	{\prod_{i<j\le \ell}(\rho_i-\rho_j+j-i)}.
	\]
	Multiplying numerator and denominator by
	\[
	\prod_{j=\ell+1}^{k}(\rho_i+j-i)
	\]
	turns this into
	\[
	\prod_{c=1}^{\rho_i}h_\rho(i,c)
	=
	\frac{(\rho_i+k-i)!}
	{\prod_{i<j\le k}(\rho_i-\rho_j+j-i)}.
	\]
	For $i>\ell$ the row is empty and contributes $1$, while $\rho_i=0$ and
	\[
	\frac{(\rho_i+k-i)!}
	{\prod_{i<j\le k}(\rho_i-\rho_j+j-i)}
	=
	\frac{(k-i)!}{\prod_{j=i+1}^{k}(j-i)}
	=1.
	\]
	Thus
	\[
	H_\rho
	=
	\frac{\prod_i(\rho_i+k-i)!}
	{\prod_{i<j}(\rho_i-\rho_j+j-i)}.
	\]
	Substituting $x_i=\rho_i+k-i$ gives
	\[
	H_\rho=\frac{\prod_i x_i!}{\Delta(x)}.
	\]
	Since transposition preserves the number of standard Young tableaux,
	\[
	f^\lambda=f^\rho=n!\frac{\Delta(x)}{\prod_i x_i!}.
	\]
	The claimed target weight follows by squaring and dropping the constant
	$(n!)^2$.
	\end{proof}

	We can now prove Corollary~\ref{cor:atmost}.

	\begin{proof}[Proof of Corollary~\ref{cor:atmost}]
	We first treat the general sampler.  By the Robinson--Schensted
	correspondence and Lemma~\ref{lem:padded-hook}, it suffices to sample a strict sequence of
	$k$ coordinates in $\{0,\ldots,n+k-1\}$, with sum $N$, according to the
	same separated weight $\wt$ as before.  Thus only the coordinate range,
	the integer normalisation, and the resulting degree bounds need to be
	adjusted.  The passage from the sampled shape to a permutation is
	Lemma~\ref{lem:shape-to-perm}, whose proof applies verbatim with the
	support $\{\lambda\vdash n:\lambda_1\le k\}$ in place of
	$\{\lambda\vdash n:\lambda_1=k\}$.

	Candidates and suffix values are now drawn from $\{0,1,\ldots\}$.  The
	admissible candidates are $t\in\{0,\ldots,B(p)\}$ with
	$B(\varnothing)=n+k-1$, the suffix ground set is $E_t=\{0,\ldots,t-1\}$,
	and the feasibility window becomes $\binom L2\le s(t)\le Lt-\binom{L+1}2$.
	The lower and upper bounds are the sums of the $L$ smallest and $L$
	largest elements of $\{0,\ldots,t-1\}$, respectively.

	Since coordinate values may now exceed $n$ (they are at most
	$n+k-1\le 2n$), the clearing factor $n!$ in the stage weights is replaced
	by $(n+k-1)!$, that is,
	$G_p(a):=\bigl((n+k-1)!/a!\bigr)^{2}\prod_{i=1}^r(p_i-a)^2$,
	which keeps every completion score a nonnegative integer.

	All coordinate values, and hence all degrees, are bounded by $n+k-1\le 2n$
	instead of $n$, so the determinant polynomials of both implementations
	have degree $O(kn)$, and the primes of
	Section~\ref{sec:direct-time} are chosen larger than
	$d_{\max}:=k(n+k-1)+1\le 2n^2$. The sieve interval
	$(d_{\max},\,n^2\log^2n]$ still contains $\Theta(n^2\log n)$ primes.

	For cumulative scores, the ground set becomes $\{0,\ldots,T\}$, with
	$C_p(-1)=0$, and the binary search runs over $0\le T\le B(p)$.  Since this
	ground set has $T+1$ elements, the empty-subset shortcut in
	Lemma~\ref{lem:fast-hankel-stage} becomes $T<L$.  The terminal $L=0$ stage
	still appends its forced coordinate without invoking the oracle.

	With these changes, Lemma~\ref{lem:score-mass}, Lemma~\ref{lem:cum-mass},
	Theorem~\ref{thm:shape-correct},
	Lemma~\ref{lem:bits-stage} (as $(n+k-1)!\le(2n)!$ still has $O(n\log n)$
	bits and all differences are at most $2n$), and the implementations of
	Sections~\ref{sec:direct-time} and~\ref{sec:complexity} go
	through with the same arguments, with all bounds changing by constant factors only.
	The well-definedness argument of Section~\ref{sec:shape-sampler} also
	applies, with the column shape $\lambda=(1^n)$, that is
	$x=(n+k-1,k-2,k-3,\ldots,0)$, witnessing initial feasibility.  This gives
	expected running time $\tilde O(n^4k^5)$ for the direct implementation and
	$\tilde O(n^2k^4)$ for the fast implementation.

	For the second statement in the corollary with $k \geq 4 \sqrt{n}$, repeat the following trial. Draw $\pi\in S_n$
	uniformly in $O(n)$ time, compute $\LIS(\pi)$ in $O(n\log\log n)$ time
	(Fact~\ref{fact:lis-ending-ranks}), and accept if and only if
	$\LIS(\pi)\le k$.  Conditioned on acceptance, the output is uniform on
	$\Omega_{n,\le k}$.  By the first-moment bound in the proof of
	Lemma~\ref{lem:plancherel-first-row-light}, applied with $m=n$, setting
	$r:=\lfloor4\sqrt n\rfloor+1\le k+1$ gives
	\[
	\Pp\bigl(\LIS(\pi)>k\bigr)\le\Pp\bigl(\LIS(\pi)\ge r\bigr)
	\le\Bigl(\frac{e^2n}{r^2}\Bigr)^{\!r}
	\le\Bigl(\frac{e^2}{16}\Bigr)^{\!r}\le\frac12 ,
	\]
	so the acceptance probability is at least $\tfrac12$, the number of trials
	is geometric, and the expected total time is $O(n\log\log n)$.

	Finally, for $k\le 4\sqrt n$ the general sampler costs
	$\tilde O(n^2k^4)\subseteq\tilde O(n^4)$, while for $k>4\sqrt n$ rejection
	costs $O(n\log\log n)$. The combined bound $\tilde O(n^4)$ follows.
	\end{proof}

	We note the contrast with the constraint $\LIS(\pi)=k$, for which no
	analogous rejection shortcut is available away from the typical values
	close to $2\sqrt n$.
 
	\printbibliography

\end{document}